\newtheorem{theorem}{Theorem}
\newtheorem{lemma}[theorem]{Lemma}
\newtheorem{corollary}[theorem]{Corollary}
\newtheorem{proposition}[theorem]{Proposition}
\newcommand{\eps}{\varepsilon}
\newcommand{\ie}{{i.e.}}
\newcommand{\ExchangeAlg}{{\textsc{Exchange-Candidate}}\xspace}
\newcommand{\algdt}{\textsc{{RepeatedGreedy}}\xspace}
\newcommand{\algrd}{\textsc{{SampleGreedy}}\xspace}
\newcommand{\algfantom}{\textsc{{Fantom}}\xspace}
\newcommand{\characteristic}{{\mathbf{1}}}
\newcommand{\nnR}{{\bR_{\geq 0}}}
\DeclareMathOperator*{\argmax}{arg\,max}
\newcommand{\OPT}{\text{OPT}} 
\newcommand{\ground}{\cN}
\newcommand{\reals}{\mathbb{R}}
\newcommand{\AlgSeq}{\textsc{seqDPP}\xspace}
\newcommand{\AlgLocal}{\textsc{Local-Search}\xspace}
\newcommand{\AlgSampling}{\textsc{Sample-Streaming}\xspace}
\newcommand{\msg}{\textsc{Max-SampleGreedy}\xspace}
\newcommand{\CondProb}[2]{\mathrm{Pr} \left[ #1 \, \middle|  \, #2 \right] } 
\DeclarePairedDelimiter{\evdel}{[}{]} 
\newcommand{\E}{\mathbb{E}\evdel}
\renewcommand{\epsilon}{\varepsilon}
\newcommand{\defcal}[1]{\expandafter\newcommand\csname c#1\endcsname{{\mathcal{#1}}}}
\newcommand{\defbb}[1]{\expandafter\newcommand\csname b#1\endcsname{{\mathbb{#1}}}}
\newcounter{calBbCounter}
	\edef\letter{\Alph{calBbCounter}}
\theoremstyle{TH}%
\newenvironment{manualtheorem}[1]{%
	\manualtheoreminner
}{\endmanualtheoreminner}
\newtheorem{reduction}{Reduction}
\newtheorem{observation}{Observation}
\title{The Power of Subsampling in \\ Submodular Maximization\footnote{The results presented in this paper previously appeared in COLT 2017 \citep{FHK17} and NeurIPS 2018 \citep{FKK18} in the form of extended abstracts. We note, however, that some results of~\cite{FHK17} were not included in the current paper since they are unrelated to its central theme. Nevertheless, for completeness, we reference these results while discussing the related work.}}
\author[1]{Christopher Harshaw}
\author[2]{Ehsan Kazemi}
\author[3]{Moran Feldman}
\author[4]{Amin Karbasi}
\affil[1]{Yale University, Department of Computer Science}
\affil[2]{Google, Z{\"u}rich}
\affil[3]{University of Haifa, Department of Computer Science}
\affil[4]{Yale University, Department of Electrical Engineering}
\date{}
\begin{document}
\maketitle

\begin{abstract}	
We propose subsampling as a unified algorithmic technique for submodular maximization in centralized and online settings.
The idea is simple: independently sample elements from the ground set, and use simple combinatorial techniques (such as greedy or local search) on these sampled elements. We show that this approach leads to optimal/state-of-the-art results despite being much simpler than existing methods.
In the usual offline setting, we present \algrd, which obtains a $\left( p+2 + o(1) \right)$-approximation for maximizing a submodular function subject to a $p$-extendible system using $O(n + nk/p)$ evaluation and feasibility queries, where $k$ is the size of the largest feasible set.
The approximation ratio improves to $p+1$ and $p$ for monotone submodular and linear objectives, 
respectively.
In the streaming setting, we present \AlgSampling, which obtains a $\left(4p +2 - o(1)\right)$-approximation for maximizing a submodular function subject to a $p$-matchoid using $O(k)$ memory and $O(km/p)$ evaluation and feasibility queries per element, where $m$ is the number of matroids defining the $p$-matchoid. 
The approximation ratio improves to $4p$ for monotone submodular objectives.
We empirically demonstrate the effectiveness of our algorithms on video summarization, location 
summarization, and movie recommendation tasks.

\end{abstract}

\section{Introduction} \label{sec:intro}

Submodular functions have played a celebrated role in both the theory of discrete optimization and 
practical modeling scenarios.
Submodular functions are defined by a diminishing returns property, which makes this function class 
natural for modeling many applications in a wide variety of domains, from economics to machine 
learning.
Constrained submodular maximization has found numerous applications, including viral marketing 
\citep{kempe03}, network monitoring \citep{leskovec2007cost, gomez10}, sensor placement and 
information gathering \citep{guestrin2005near}, news article recommendation \citep{el2009turning}, movie recommendation \citep{mitrovic2019adaptive,haba2020streaming}, 
active set selection in non-parametric learning~\citep{MKSK16}, image summarization~\citep{TIWB14,kazemi2020regularized}, location summarization \citep{badanidiyuru2020submodular},
corpus summarization~\citep{LB11,kirchhoff2014submodularity,sipos2012temporal}, fMRI  parcellation 
\citep{SKSC17}, ensuring privacy and fairness \citep{kazemi2018scalable, 
mitrovic2017differentially}, two-stage sub-linear data 
summarization \citep{BMKS16, stanZK17, mitrovic2018data} and removing redundant elements from DNA 
sequencing \citep{LBN18}. For a more detailed description of theses applications 
in machine learning and signal processing, we refer the interested reader to 
\cite{tohidi2020submodularity}. 

Although producing an exactly optimal solution for constrained submodular maximization is 
computationally hard \citep{NW78}, the seminal work of \citet{Nemhauser1978} showed that 
the 
natural greedy algorithm 
produces a $(1 - e^{-1})^{-1}$-approximate solution when the objective function is monotone 
submodular 
and the constraint set is a cardinality constraint; however, the greedy algorithm may perform 
much worse in more complex scenarios, e.g., when the objective is non-monotone or the constraint is more involved.
Most existing algorithms developed for these more complicated settings can be grouped into a few 
categories:
repeated greedy procedures, local search 
techniques, and relax-and-round methods which go through a continuous relaxation of the problem. 
Unfortunately, these techniques tend to be quite slow, and moreover, some of them are quite 
complex,
making their implementation challenging \cite{buchbinder2018submodular}.

In this paper, we propose \emph{subsampling} as a simpler alternative to existing 
algorithmic techniques. 
In particular, we present two algorithms which use subsampling to achieve better approximation guarantees than 
existing techniques at a fraction of the computational costs.
At the heart of our algorithms is a carefully designed{\textemdash}but simple 
to implement{\textemdash}subsampling of the ground set.
Interestingly, our algorithms naturally produce state-of-the-art results for both monotone and non-monotone objective functions; which is rare for a submodular maximization algorithm.

Our first algorithm, \algrd, is designed for maximizing a submodular function subject to a 
$p$-extendible system. The algorithm achieves a $(p+1)^2 / p = p + 2 + o(1)$ approximation guarantee (which is 
nearly tight for this problem by a result of~\cite{FHK17}) and uses only $O(n + \frac{nk}{p})$ function 
evaluations and 
feasibility queries, where $n$ is the size of the ground set and $k$ is the largest feasible set.
The technique is simple: independently sample elements from the ground set and run the greedy 
algorithm. 
Moreover, the approximation guarantee improves to $p+1$ or $p$ when the function is monotone or 
linear, respectively.
Our second algorithm, \AlgSampling, is designed for maximizing a submodular function subject to a 
$p$-matchoid constraint in the streaming setting where elements arrive one at a time and only a small 
working memory is kept. \AlgSampling achieves a $4p +2 - o(1)$ approximation ratio in this setting
and uses $O(k)$ memory. To process the arrival of every element the algorithm uses, in expectation, 
$O(km/p)$ function evaluations and matroid feasibility queries, where $m$ is 
the number of matroids used to define the $p$-matchoid. We also note that the approximation ratio 
improves to $4p$ for monotone functions.

We empirically demonstrate the effectiveness of our subsampling based algorithms for video 
summarization and movie recommendation tasks with real datasets. We show that our algorithms are
competitive with respect to existing algorithms, but require a fraction of the computational cost.

\paragraph{Organization.}
Section~\ref{sec:related_works} contains a brief summary of related works on 
constrained submodular maximization as well as a comparison of our new results with these works. 
We review preliminary definitions in Section~\ref{sec:preliminaries_revised}.
Section~\ref{sec:main_results} presents and analyzes our subsampling algorithms, focusing on the 
offline algorithm in Sections~\ref{sec:centralized_alg} and the streaming algorithm in 
Section~\ref{sec:streaming_alg}.
In Section~\ref{sec:experimental_results}, we empirically evaluate the performance of our algorithms 
against existing methods on 
real datasets.
Finally, we provide concluding remarks in Section~\ref{sec:conclusion}.

\section{Related Work} \label{sec:related_works}

In this section, we briefly survey the most relevant related work on constrained submodular function 
maximization.
In what follows, $n$ is the size of the ground set, $p$ is in reference to $p$-extendible systems or 
$p$-matchoids, and $k$ is the size of the largest independent set.\footnote{These terms, along with a 
precise notion of oracle complexity, are defined in 
Section~\ref{sec:preliminaries_revised}.}

The most relevant works for comparing our \algrd algorithm are the repeated greedy algorithms, which have been 
historically developed for a slightly broader class of constraints known as $p$-systems.
\citet{FNW78} showed that the natural greedy algorithm achieves a $(p+1)$ approximation for 
maximizing a 
monotone submodular function subject to a $p$-system. 
Algorithms for the non-monotone variant of this problem were developed only much more recently and rely on repeated applications of the greedy algorithm. 
 \citet{GRST10} showed that iteratively running the greedy algorithm on the constrained problem 
 followed by 
an unconstrained optimization on the greedy solution results in an approximation guarantee of roughly $3p$ for 
general non-monotone submodular functions, while requiring $O(nkp)$ function evaluations and independence 
oracles queries.
Using a different analysis, \cite{MBK16} improved the approximation guarantee of this algorithm to roughly
$2p$.
\citet{FHK17} showed that an improved approximation guarantee of $p + O(\sqrt{p})$ is possible with 
fewer iterations of the repeated greedy procedure.
The main drawback of all these algorithms is the large number of function evaluation and independence 
oracle queries they require, which grows unfavorably with $p$. 
Our proposed subsampling based algorithm, \algrd, significantly improves upon these algorithms in the 
case of $p$-extendible systems in two ways: the oracle complexity is greatly reduced and the 
approximation guarantee is improved.
A summary and comparison of these algorithms are presented in Table~\ref{tbl:summary-greedy}.

\begin{table}
	\begin{center}
		\begin{tabular}{m{2.1cm}m{2.5cm}>{\centering\arraybackslash}m{2.8cm}>{\centering\arraybackslash}m{2.89cm}m{4.3cm}}
			\toprule
			\textbf{\footnotesize Algorithm} & \textbf{\footnotesize  Function} & \textbf{\footnotesize 
			Approx. Ratio} & \textbf{\footnotesize Query Complexity} & \textbf{\footnotesize Reference} \\
			\hline
			{\footnotesize Deterministic} & {\footnotesize Monotone} &  {\footnotesize $p + 1$} 
			& {\footnotesize $O(nk)$} & {\footnotesize \citet{FNW78} }\\
			{\footnotesize Deterministic} & {\footnotesize Non-monotone} &  {\footnotesize $\approx 2p$} 
			& {\footnotesize $O(nkp)$} & {\footnotesize \citet{MBK16}}\\
			{\footnotesize Deterministic} & {\footnotesize Non-monotone} &  {\footnotesize $\approx 3p$} 
			& {\footnotesize $O(nkp)$} & {\footnotesize \citet{GRST10}}\\
			{\footnotesize Deterministic} & {\footnotesize Non-monotone} &  {\footnotesize $p + 
			O(\sqrt{p})$} & {\footnotesize $O(nk \sqrt{p})$} & {\footnotesize \citet{FHK17}}\\
			\hline 
			\vspace{0.1cm}
			{\footnotesize Randomized} & {\footnotesize Non-monotone} &  {\footnotesize $\frac{(p+1)^2}{p} = 
			p + 2 + o(1)$} & {\footnotesize $O(n + nk / p)$} & {\footnotesize \algrd (this paper)}\\
			{\footnotesize Randomized} & {\footnotesize Monotone} &  {\footnotesize $p+1$} & 
			{\footnotesize $O(n + nk / p)$} & {\footnotesize \algrd (this paper)}\\
			\bottomrule
		\end{tabular}
	\end{center}
	\caption{Greedy algorithms for submodular maximization subject to a $p$-extendible constraint.} 
	\label{tbl:summary-greedy}
\end{table}

Local search algorithms have also been proposed for maximization over various subclasses of $p$-extendible systems. 
\citet{LSV10} developed a local search method which attains a $p + \epsilon$ 
approximation for maximizing a monotone submodular function subject to the intersection of $p$ 
matroids using a number of evaluations and independence oracle queries which is polynomial in $n$ 
and exponential in $\frac{1}{\epsilon}$.
They also showed how to use this algorithm to obtain a $p + 1 + \frac{1}{p+1} + \epsilon$ 
approximation for non-monotone objectives, improving over a $(p + 2 + \frac{1}{p} + \epsilon)$ approximation due to \cite{LMNS10}.
 \citet{FNSW11} showed that the same results can also be obtained for maximization over a $p$-exchange system, which is a different subclass of $p$-extendible systems.
For $p \geq 4$, the last approximation guarantee was improved to $ (p+3)/2+ \epsilon $ by \citet{W12}.
Despite running in polynomial time, these local search algorithms have very large oracle queries complexity, and so they are mostly of theoretical interest.

There has also been a long sequence of works which aim to obtain a tighter approximation guarantees 
for the special case of matroid constraints. 
Such methods rely on approximately optimizing continuous 
extensions of the discrete submodular objective, followed by rounding to obtain a discrete 
solution. The seminal work of \citet{CCPV11} showed that this technique achieves the
tight $(1 - e^{-1})^{-1}$ approximation ratio for maximizing a monotone submodular function subject to 
a 
matroid constraint. In the non-monotone setting, a long series of work~\citep{V13,GV11,FNS11,EN16,BF19} 
has further developed 
these techniques to obtain a $2.59$ approximation ratio, but the best inapproximability result is 
still slightly further away at $2.09$~\citep{GV11}.
Although these algorithms achieve tighter approximation guarantees for the special case of matroids, 
they suffer from a high evaluation oracle complexity due to the sampling techniques they use to obtain 
gradient estimates for the continuous extension. A more recent line of work suggests some techniques 
to (partially) remedy this problem 
\citep{Badanidiyuru14,buchbinder2016comparing,mokhtari2017conditional,KMZ18,BFG19,EN18}.
Finally, we remark that \citep{Mirzasoleiman15} devised a different randomized subsampling 
technique which achieves a $(1 - e^{-1} - \epsilon)^{-1}$ approximation ratio using $O(n \log 
\frac{1}{\epsilon})$ evaluation queries for monotone submodular objectives under the cardinality 
constraint, improving upon the query complexity of the greedy algorithm.

Additional recent work in constrained submodular maximization has focused on the streaming 
environment, 
where data points appear one at a time and centralized storage capacity is limited. A streaming 
algorithm for monotone submodular maximization under a cardinality constraint was presented by 
\citet{BMKK14}, which achieves a $1/2 - \eps$ 
approximation using $O(\eps^{-1} k \log k)$ memory. 
 Recently, \citet{kazemi2019submodular} presented modification of this algorithm which reduces the memory complexity 
 to $O(k/\eps)$.
A different series of work \citep{CK15, CGQ15} used a different technique to provide a $4p$ approximation for monotone submodular maximization subject to 
$p$-matchoid constraints.
The first streaming algorithm for non-monotone submodular maximization was given by \citet{BFS15}, whose randomized algorithm achieves 11.197 approximation for non-monotone maximization 
under a 
cardinality constraint.
This was shortly after improved by \citet{CGQ15}, who presented a randomized streaming algorithm for non-monotone maximization under 
$p$-matchoid constraints which achieves an approximation ratio of $(5p + 2 + 1/p)/(1-\eps)$ and a deterministic 
algorithm which 
achieves a slightly worse approximation ratio of $(9p + O(\sqrt{p}))/(1-\eps)$ but is more memory and 
update efficient.
Recently, \citet{MJK17} proposed a deterministic algorithm which they claim achieves an approximation ratio of $4p + 4\sqrt{p} + 
1$ and uses $O(k\sqrt{p})$ memory; however, \cite{haba2020streaming} pointed out several errors in their analysis and so the guarantees of \cite{MJK17} may not hold.
While the monotone algorithms mentioned above are quite efficient in terms of memory and update cost, 
the non-monotone algorithms are much less efficient in these aspects, having unfavorable 
dependence on $p$ or $\epsilon$ terms.
In contrast, our randomized streaming algorithm, \AlgSampling, achieves an improved approximation ratio 
of $4p + 2 - o(1)$ for non-monotone maximization over a $p$-matchoid constraint using only $O(k)$ memory and 
$O(km/p)$ expected evaluation and independence queries per iteration. With a minor modification to the algorithm, this 
approximation ratio improves to $4p$ for monotone functions.
A summary and comparison of these algorithms is given in Table~\ref{table:summary-streaming}

\begin{table}
	\begin{center}
		\begin{tabular}{m{2.1cm}m{2.1cm}>{\centering\arraybackslash}m{2.1cm}>{\centering\arraybackslash}m{1.8cm}>{\centering\arraybackslash}m{2.0cm}m{4.77cm}}
			\toprule
			\textbf{\footnotesize Algorithm} & \textbf{\footnotesize  Function} & \textbf{\footnotesize 
			Approx. Ratio} & \textbf{\footnotesize Memory} & \textbf{\footnotesize Queries per Element} & 
			\textbf{\footnotesize Reference} \\
			\hline
			{\footnotesize Deterministic} & {\footnotesize Monotone} & {\footnotesize $4p$} & 
			{\footnotesize $O(k)$} & {\footnotesize $O(km)$} & {\footnotesize \citet{CGQ15}}\\
			{\footnotesize Randomized} & {\footnotesize Non-monotone} & {\footnotesize $\frac{5p + 2 + 
			1/p}{{1-\eps}}$} & {\footnotesize $O(\frac{k}{\eps^2} \log \frac{k}{\eps})$} & {\footnotesize 
			$O(\frac{k^2m}{\eps^2} \log \frac{k}{\eps})$} & {\footnotesize \citet{CGQ15}}\\
			{\footnotesize Deterministic} & {\footnotesize Non-monotone} & {\footnotesize $\frac{9p + 
			O(\sqrt{p})}{1-\eps}$} & {\footnotesize $O(\frac{k}{\eps}\log\frac{k}{\eps})$} & {\footnotesize 
			$O(\frac{km}{\eps}\log \frac{k}{\eps})$} & {\footnotesize \citet{CGQ15}}\\
			\hline
			{\footnotesize Randomized} & {\footnotesize Monotone} & {\footnotesize $4p$} & 
			{\footnotesize $O(k)$} & {\footnotesize $O(km/p)$} & {\footnotesize \AlgSampling  {\tiny (this paper)}}\\
			{\footnotesize Randomized} & {\footnotesize Non-monotone} & {\footnotesize $4p + 2 - o(1)$} 
			& {\footnotesize $O(k)$} & {\footnotesize $O(km/p)$} & {\footnotesize \AlgSampling {\tiny (this paper)}}\\
			\bottomrule
		\end{tabular}
	\end{center}
	\caption{Streaming algorithms for submodular maximization subject to a $p$-matchoid constraint.} 
	\label{table:summary-streaming}
\end{table}
\footnotetext{The memory and query complexities of the algorithm of~\citet{MJK17} have been calculated based on the corresponding complexities of the algorithm of~\citep{CGQ15} for monotone objectives and the properties of the reduction used by~\citep{MJK17}. We note that these complexities do not match the memory and query complexities stated by~\citep{MJK17} for their algorithm.}

\section{Preliminaries} \label{sec:preliminaries_revised}
In this section, we describe the mathematical formulation of the constrained submodular maximization 
problem and preliminary definitions.

Let $\ground$ be a finite set of size $n$, which we refer to as the \emph{ground set}.
The objective functions are real-valued set functions of the form $f:2^\ground \rightarrow \reals$, 
which assign a real number to each set $S \subseteq \ground$. 
Such a function $f$ is \emph{submodular} if 
\begin{align} 
f(A \cup \{e\}) - f(A) \geq f(B \cup \{e\}) - f(B) \label{eq:def_submodular}
\end{align}
for all sets $A \subseteq B \subseteq \ground$ and element $e \notin B$.
Inequality~\eqref{eq:def_submodular} is also referred to as the \emph{diminishing returns property}.
Indeed, when $f$ is interpreted as a utility, \eqref{eq:def_submodular} states that the marginal gain in 
utility of an element $e \in \ground$ decreases as the current set grows.
For shorthand, we write the marginal gain of an element as $f( e \mid S) \triangleq f(S \cup \{e\}) - f(S)$ 
and the marginal gain of adding an entire set as $f( A \mid S) \triangleq f(S \cup A) - f(S)$.
A function $f$ is \emph{monotone} if $f(A) \leq f(B)$ for all sets $A \subseteq B$.
A function $f$ is \emph{linear} if \eqref{eq:def_submodular} holds with equality for all $A \subseteq B$ 
and $e \notin B$.

We now describe the structure of the constraints we consider in this paper.
Given a ground set $\ground$ and a collection of sets $\cI \subseteq 2^\ground$, we say that the pair $( 
\ground, \cI)$ is an \emph{independence system} if $\varnothing \in \cI$ and $A \subseteq B$, $B \in \cI$ 
implies that $A \in \cI$.
A set $A \in \cI$ is called \emph{independent}, and a set $B \notin \cI$ is called \emph{dependent}.
An independent set $A \in \cI$ which is maximal with respect to inclusion is called a \emph{base}; that 
is, 
$B \in \cI$ is a base if $A \in \cI$ and $B \subseteq A$ imply that $B = A$.
Given an independent set $A \in \cI$, an \emph{extension} is an independent set $B \in \cI$ that 
contains $A$, i.e., $A \subseteq B$.
There is a hierarchy of classes of independence systems which are considered in the literature as constraint 
families. 
Our results only require two such classes:  $p$-extendible systems and 
$p$-matchoids. However, for the sake of context, we present here a few additional central classes from the hierarchy.

An independence system $(\ground, \cI)$ is a \emph{$p$-system} if for every set $S \subseteq \ground$, the ratio $ | B_1 | / |B_2|$ is upper bounded by $p$ for every two 
bases $B_1$ and $B_2$ of $(S, 2^S \cap \cI)$.
The class of $p$-systems is the most general class usually included in the hierarchy of independence systems. An important class included in it is the class of $p$-extendible systems.
An independence system is a  \emph{$p$-extendible system} if for all $A \in \cI$, extension $B \in 
\cI$ of $A$ and element $e \notin A$ such that $A \cup \{e\} \in \cI$, there exists a set $Y \subseteq B 
\setminus A$ with $|Y| 
\leq p$ such that $B \setminus Y \cup \{e\} \in \cI$.
Intuitively, an independence system is $p$-extendible if adding an element $e$ to an independent
set $B$ requires the removal of at most $p$ other elements in order to keep the resulting set 
independent.
Another important class of independence systems is the class of matroids. While the usual definition 
of matroids is based on linear algebra intuition, \citep{Mestre2006} showed that this definition is 
equivalent to the definition of a $1$-extendible system.
An independence system $(\ground, \cI)$ is a \emph{$p$-matchoid} if there exist $m$ matroids 
$(\ground_1, \cI_1), \dots (\ground_m, \cI_m)$ such that $\ground = \cup_{i=1}^m \ground_i$, each 
element $e \in \ground$ appears in no more than $p$ ground sets $\ground_1, \dots, \ground_m$ and 
$\cI = \{ S \in \ground \mid \forall_{i=1, \dots, m} \; \ground_i \cap S \in \cI_i \}$.
The hierarchy of the classes of independence systems mentioned above is presented below. We note that all the inclusions between these classes are known to be strict.
\begin{equation*}
\text{matroid} \subset 
\text{intersection of $p$ matroids} \subset 
\text{$p$-matchoid} \subset
\text{$p$-extendible} \subset 
\text{$p$-system} \enspace.
\end{equation*}

This hierarchy of independence systems is quite rich and expressive, containing many classic examples 
which are useful for modeling applications.
The simplest example is the \emph{$k$-cardinality constraint}, where $\cI = \{ S \mid S \subseteq \ground \text{ and } |S| \leq k \}$, 
which 
is also referred to as the \emph{uniform matroid}.
The \emph{partition matroid} is specified by a partition $P_1, P_2, \dots P_\ell \subseteq \ground$ ($\cup_{i=1}^\ell P_i = \ground$ and $P_i \cap P_j = \varnothing$ for $i \neq j$) and integers $k_1, k_2 \dotsc k_\ell$ such that $\cI = \{ S \mid |P_i \cap S| \leq k_i, i=1,2\dots \ell \}$.
The \emph{graphic matroid} is specified by an undirected graph $G = (V,E)$ where $\ground = E$ and $\cI = \{ S \subseteq E \mid S \text{ does not contain a cycle} \}$.
Matching constraints on subsets of edges of a graph{\textemdash}and more generally, $b$-matchings{\textemdash}form $2$-matchoids.
Moreover, a variety of scheduling constraints may be represented as $p$-extendible systems \citep{Mestre2006}.
Similarly to a partition matroid, the independence system given by subsets $P_1, P_2 \dotsc P_\ell 
\subseteq \ground$ (not necessarily a partition) and integers $k_1, k_2 \dots k_\ell$ such that $\cI = \{ S \mid |P_i \cap S| \leq k_i \ i=1, 2 \dots \ell \}$ is the intersection of $\ell$ matroids (and thus, also an $\ell$-matchoid).
Finally, we remark that an example of an independence system which is $p$-extendible but not a $p$-matchoid is a knapsack constraint in which the sizes of all the elements are between $1$ and $p$.
A visualization of the hierarchy of independence systems is presented in 
Figure~\ref{fig:heirarchy_examples}.
\begin{figure}
	\centering
	\includegraphics[width=0.6\linewidth]{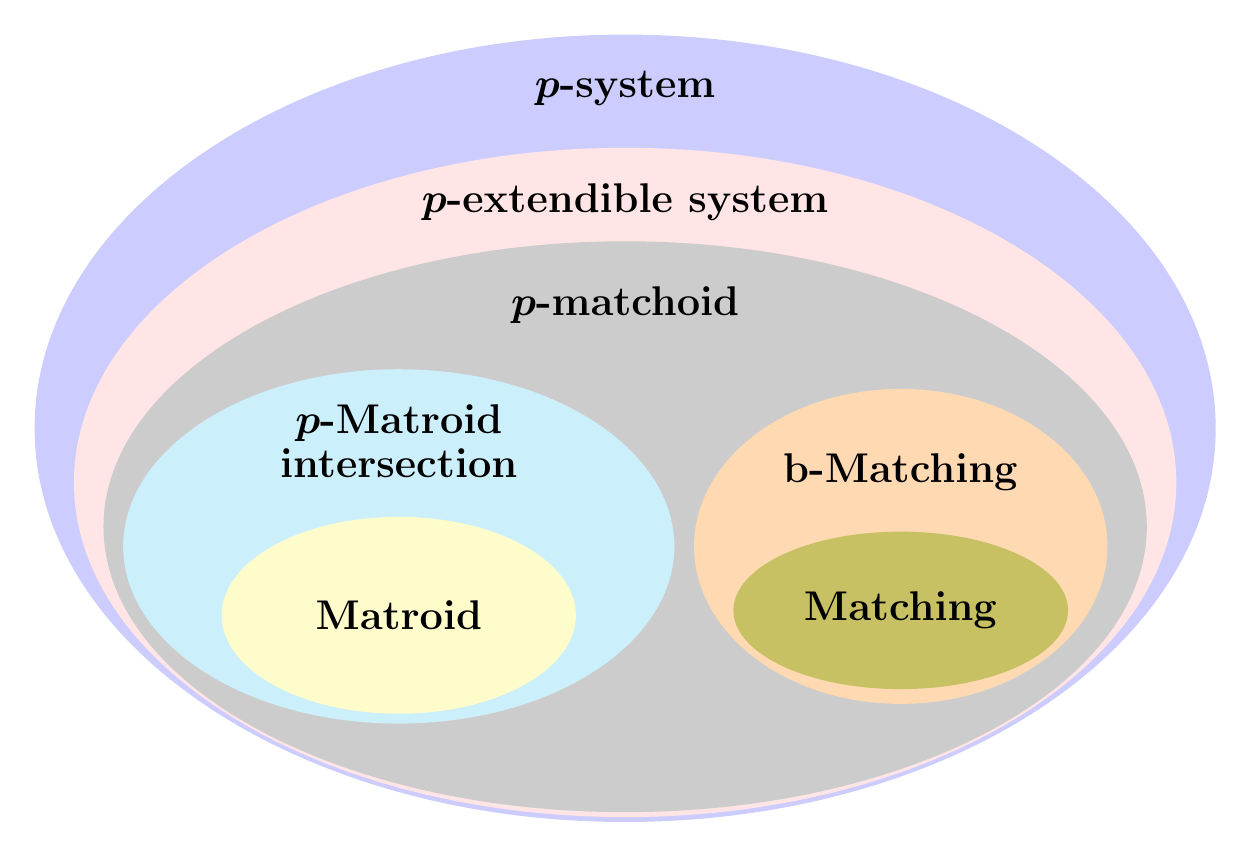}
	\caption{A visualization of the hierarchy of independence systems.}
	\label{fig:heirarchy_examples}
\end{figure}

The problem we are interested in in this paper is the mathematical program
\begin{equation} \label{eq:centralized_problem}
\max_{S \in \mathcal{I}} f(S) \enspace,
\end{equation}
where $f$ is non-negative and submodular on a ground set $\ground$ and $(\ground, \cI)$ is a $p$-extendible system.
We denote an arbitrary set achieving this maximum by \OPT.
We say that a set $S \in \cI$ is an \emph{$\alpha$-approximation} for some $\alpha \geq 1$ if 
\[ f(S) \geq \frac{1}{\alpha} \cdot f(\OPT) \enspace. \]
We assume that both the objective function $f$ and the independence system $\cI$ are accessed by 
the algorithms through oracles; that is, given a set $S$ there is a \emph{value oracle} which returns the value $f(S)$ and an \emph{independence oracle} which returns whether or not $S \in \cI$.
Our goal is to design an algorithm which makes few queries to the value and independence oracles and produces a set that is an $\alpha$-approximation for as small $\alpha \geq 1$ as possible.

In many practical applications, when the data is too large to be randomly accessed, offline algorithms are impractical.
The \emph{streaming model} of computation is an alternative computational paradigm for such settings. 
In this model, data points arrive in an arbitrary order and only a small amount of memory may be kept.
More formally, let the elements of the ground set be arbitrarily ordered as $\ground = \{u_1, \dotsc, u_n 
\}$, and let $\ground_t = \{ u_1, \dotsc, u_t \}$ be the first $t$ elements in this ordering.
An algorithm $\mathcal{A}$ for the model is presented with each element $u_1, u_2 \dots u_n$ in a 
sequential 
manner. The algorithm $\mathcal{A}$ maintains a set $M$ of the elements it currently keeps in its 
memory. Let us denote by $M_t$ the set $M$ immediately after the processing of $u_t$ by the algorithm. 
We note that $M_t$ must be a subset of $\ground_t$, and also that it must be a subset of $M_{t - 1} \cup 
\{u_t\}$ because once an element leaves $M$, it is forgotten and cannot be added to $M$ later on. 
Naturally, $\cA$ is allowed to query the value and independence oracle queries only with respect to 
subsets $S$ of the current $M$, because $M$ includes all the elements kept in $\cA$'s memory.

To be an $\alpha$-approximation algorithm, the algorithm $\mathcal{A}$ must be able---after viewing every element $e_t$---to produce a set $S_t \subseteq M_t$ that 
is independent ($S_t \in \cI$) and is an $\alpha$-approximation, i.e.,
\[ f(S_t) \geq \frac{1}{\alpha} \cdot \max_{\substack{S \subseteq \ground_t \\ S \in \cI }} f(S) \enspace. \]
The performance of a streaming algorithm is judged based on its approximation ratio $\alpha$, its 
update 
cost (which is the number of evaluation and independence oracle queries it makes after viewing each element), and the memory size $\max_t |M_t|$. 
In this paper, we consider streaming algorithms only for the case in which the constraint is defined by 
a $p$-matchoid, and we denote by $m$ the number of matroids used to defined the $p$-matchoid.
As is standard in the literature, we assume the streaming algorithm has access to an independence 
oracle for each of the $m$ defining matroids when considering a $p$-matchoid.

In the context of streaming algorithms, given an element $u_i \in N$ 
and sets $S, T \subseteq \cN$, we use the shorthands $f(u_i : S) = f(u_i \mid S \cap \{u_1, u_2, \dotsc, 
u_{i-1}\})$ and $f(T : S) = \sum_{u \in T} f(u : S)$. Intuitively, $f(u : S)$ is the marginal contribution of 
$u$ with respect to the part of $S$ that arrived before $u$ itself.

\section{Main Results} \label{sec:main_results}

In this section, we present two algorithms which use the subsampling technique for 
constrained 
submodular maximization.
The subsampling technique is simple: the algorithm only considers a random subset 
$\ground' \subseteq 
\ground$ of the 
ground set, where each element appears independently with
some probability $q$, which is determined solely by the complexity of the constraint set $p$.
\algrd is proposed for the usual offline setting where random access to the data is assumed 
and
\AlgSampling is proposed for the streaming setting, where elements arrive one at a time and 
only a 
small dataset is maintained in memory.
Our theoretical guarantees on the performance of these algorithms are given below.

\begin{theorem} \label{thm:centralized_alg}
	When $q = (p+1)^{-1}$, \algrd achieves a $\frac{(p+1)^2}{p}$-approximation ratio for the problem of 
	maximizing a non-negative submodular function $f$ subject to a $p$-extendible system.
	Moreover, this approximation ratio improves to $p+1$ when $f$ is monotone.
	When $f$ is linear and $q = p^{-1}$, \algrd achieves a further improved $p$-approximation ratio. 
	In all cases, \algrd uses in expectation $O(nk/p)$ calls to the evaluation and independence oracles.
\end{theorem}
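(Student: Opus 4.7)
The algorithm \algrd samples a subset $\ground'$ of $\ground$ by including each element independently with probability $q$, and then runs the standard greedy algorithm on $\ground'$ with respect to the induced $p$-extendible constraint, producing an independent set $S$. My plan is to combine two ingredients: first, a deterministic ``greedy-versus-OPT'' inequality on the sampled set that uses only the exchange property of $p$-extendibility, and second, a probabilistic argument that relates the OPT seen on $\ground'$ back to the true $\OPT$ on $\ground$. The parameter $q$ is then tuned to balance the two effects.

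For the deterministic step, I fix the realization of $\ground'$. Using the defining property of $p$-extendible systems, for every $o \in (\OPT \cap \ground') \setminus S$ that greedy examined but did not select, I can identify a ``witness'' subset of up to $p$ elements of $S$ that were responsible for making $o$ infeasible at the moment it was considered. The greedy selection rule guarantees that each such witness $s$ satisfies $f(s \mid S_{<s}) \geq f(o \mid S_{<s})$. Summing this over all blocked $o$, bounding the multiplicity of each $s$ by $p$, and collapsing the prefixes $S_{<s}$ into $S$ by submodularity, I expect to obtain the deterministic inequality
\[
(p+1)\,f(S) \;\geq\; f\bigl(S \cup (\OPT \cap \ground')\bigr).
\]
For linear $f$, the chain of marginals becomes additive and the $+1$ overhead disappears, sharpening the bound to $p\,f(S) \geq f(\OPT \cap \ground')$.

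The main step is the probabilistic promotion. Two observations drive the sampling analysis: every element of $\OPT$ lies in $\ground'$ independently with probability $q$, and every element of $\ground \setminus \OPT$ can lie in $S$ only if it was first sampled, hence with probability at most $q$. Substituting into the deterministic inequality and invoking a submodular ``random-set'' lemma --- of the Buchbinder--Feldman--Naor--Schwartz flavor, which lower-bounds the expectation of a non-negative submodular function on a random set with controlled marginal inclusion probabilities --- yields an inequality of the form $\mathbb{E}\bigl[f\bigl(S \cup (\OPT \cap \ground')\bigr)\bigr] \geq \Phi(q) \cdot f(\OPT)$ for an explicit $\Phi$. Combining this with the deterministic bound and optimizing over $q$ delivers the claimed ratios: the non-monotone case is optimized at $q = 1/(p+1)$ and yields $(p+1)^2/p$; the monotone case exploits the absence of a $(1-q)$-style correction (augmenting a set never hurts) and yields the cleaner $p+1$; and the linear case, where the sampling identity $\mathbb{E}[f(\OPT \cap \ground')] = q\, f(\OPT)$ is exact, yields $p$ at $q = 1/p$. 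The query complexity follows immediately: sampling $\ground'$ costs $O(n)$ coin flips, and greedy on $\ground'$ performs at most $k$ iterations of $O(|\ground'|)$ value and independence queries each, for an expected total of $O(n + qnk) = O(n + nk/p)$.

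The main obstacle I foresee is the probabilistic step in the non-monotone case. The random set $S \cup (\OPT \cap \ground')$ depends on the sampling both directly, through $\OPT \cap \ground'$, and indirectly, through the greedy-generated $S$ itself; disentangling this coupling to produce a clean lower bound on its expectation in terms of $f(\OPT)$ --- sharp enough to realize the $(p+1)^2/p$ ratio at $q = 1/(p+1)$ --- is where the most careful work will be required. In particular, verifying that the submodular random-set lemma applies to this composite set, and that the factor $\Phi(q)$ it delivers matches the $1/(p+1)$ loss from the deterministic charging, is the crux of the argument.
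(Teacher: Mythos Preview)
Your plan has a genuine gap, and it lies precisely where you did \emph{not} expect trouble: the monotone case already fails. The deterministic inequality $(p+1)f(S)\ge f\bigl(S\cup(\OPT\cap\ground')\bigr)$ is correct, but its right-hand side sees only the \emph{subsampled} optimum $\OPT\cap\ground'$, and there is no way to promote this to the full $\OPT$ without paying an extra factor of $q$. Concretely, take $f(A)=|A\cap\OPT|$ (linear, monotone) with the rank-$|\OPT|$ uniform matroid, so $p=1$. Greedy on $\ground'$ returns $S=\OPT\cap\ground'$, hence $f\bigl(S\cup(\OPT\cap\ground')\bigr)=|\OPT\cap\ground'|$ and $\mathbb{E}\bigl[f(S\cup(\OPT\cap\ground'))\bigr]=q\,f(\OPT)$. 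Your chain then gives $(p+1)\,\mathbb{E}[f(S)]\ge q\,f(\OPT)$, i.e.\ a $(p+1)/q=(p+1)^2$ ratio at $q=1/(p+1)$---not $p+1$. The same loss propagates to the non-monotone and linear claims. Note also that the BFNS lemma, applied to $g(T)=f(T\cup\OPT)$, lower-bounds $\mathbb{E}[f(S\cup\OPT)]$, not $\mathbb{E}[f(S\cup(\OPT\cap\ground'))]$; these are different sets, and monotonicity goes the wrong way to bridge them.

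The missing idea is that you must compare $S$ against the \emph{full} $\OPT$, and to do so the analysis cannot freeze $\ground'$ first. The paper reformulates \algrd as an equivalent process that runs greedy over all of $\ground$ but flips the $q$-coin only at the moment an element is selected: with probability $q$ it enters $S$, otherwise it is discarded. One then tracks a set $O$ initialized to $\OPT$ and kept independent throughout by deleting elements as needed. The point is that each element $u$ the process considers damages $O$ by at most $p/(p+1)$ elements \emph{in expectation} when $q=1/(p+1)$: if $u$ is accepted (probability $q$) at most $p$ elements of $O$ are evicted by $p$-extendibility, while if $u$ is rejected (probability $1-q$) at most one element---$u$ itself, if it happened to lie in $O$---is lost. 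This balance, together with the fact that the expected number of considerations is $\mathbb{E}[f(S)]/q$ (each consideration becomes an acceptance with probability $q$), yields $(p+1)\,\mathbb{E}[f(S)]\ge\mathbb{E}[f(S\cup\OPT)]$ with the full $\OPT$ on the right. From there monotonicity or BFNS finishes the job. Your conditioning-on-$\ground'$ approach cannot see this per-step expected balance, because once $\ground'$ is fixed the coin flips are already spent.
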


\begin{theorem} \label{thm:streaming_alg}
	When $c = 1$ and $q = \left( (1+c)p + 1 \right)^{-1}$, \AlgSampling achieves an approximation ratio 
	of at most $(2p + 2\sqrt{p(p+1)} + 1) = 4p + 2 - o(1)$ for maximizing a non-negative submodular function $f$ 
	subject to a $p$-matchoid system $(\ground, \cI)$ in the streaming setting.
	When $c= \sqrt{1 + 1/p}$ and $f$ is monotone, \AlgSampling achieves an improved approximation 
	ratio of at most $4p$.
	In both cases, \AlgSampling requires $O(k)$ memory and $O(km/p)$ evaluation and 
	independence oracle queries in expectation when processing each arriving element.
\end{theorem}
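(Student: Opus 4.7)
The plan is to combine a charging argument built on the $p$-matchoid exchange structure with a ``random-subset'' inequality that handles non-monotonicity, then to optimize the free parameter $c$ at the end. I expect \AlgSampling to maintain an in-memory independent set $M$ that starts empty and, upon arrival of each element $u$, sample $u$ independently with probability $q$; on a successful sample, it would compute a minimum-value exchange set $E_u \subseteq M$ (of size at most $p$, by the $p$-matchoid exchange property) whose removal restores independence after adding $u$, and accept $u$ (performing the swap) iff $f(u : M) \geq (1+c) \cdot f(E_u : M)$. Let $S$ denote the final output, $D$ the set of elements ever evicted from $M$, and $g_v$ the marginal gain of $v$ at the moment of its insertion. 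The $(1+c)$-acceptance rule, combined with telescoping of the marginals, would yield a potential bound of the form $\sum_{v \in D} g_v \leq \tfrac{1}{c} \sum_{v \in S} g_v \leq \tfrac{1}{c} f(S)$.

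Next I would charge $\OPT$ against $S \cup D$. Applying the $p$-matchoid exchange property to each of the $m$ underlying matroids produces, for each matroid, an injection from $\OPT \setminus S$ to $S \cup D$ in which each target receives at most one preimage per matroid; aggregating across matroids (using that every ground-set element participates in at most $p$ of them) ensures no algorithmic element is charged more than $p$ times. Partitioning $\OPT \setminus S$ according to why each element is absent (never sampled, sampled but rejected, or accepted and later evicted into $D$) and bounding each case via submodularity and the acceptance threshold, I would arrive at an inequality of the form
\begin{equation*}
\E{f(\OPT \cup S)} \;\leq\; \bigl(1 + p(1+c)(1 + 1/c)\bigr) \cdot f(S)
\end{equation*}
after substituting the potential bound on $f(D)$.

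The main obstacle is passing from $\E{f(\OPT \cup S)}$ to $f(\OPT)$ in the non-monotone case. I would invoke the Feige--Mirrokni--Vondr\'ak-style inequality: because each element of $S$ is included only if it survives the independent $q$-coin, $S$ is stochastically dominated by a $q$-random subset of $\ground$, and hence $\E{f(\OPT \cup S)} \geq (1-q) f(\OPT)$ for non-negative submodular $f$. Combining this with the charging bound gives an approximation ratio of roughly $(1 + p(1+c)(1+1/c))/(1-q)$; substituting $q = ((1+c)p+1)^{-1}$ and optimizing over $c$ yields $c = 1$ in the non-monotone case (producing the claimed $2p + 2\sqrt{p(p+1)} + 1$) and $c = \sqrt{1+1/p}$ in the monotone case (producing $4p$, since the $(1-q)$ deflation is unnecessary when monotonicity is exploited directly via $f(\OPT) \leq f(\OPT \cup S)$). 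For the complexity guarantees, $M_t \in \cI$ forces $|M_t| \leq k$, yielding $O(k)$ memory; only a $q$-fraction of arrivals trigger the exchange subroutine in expectation, and that subroutine needs $O(km)$ oracle queries to compute minimum-value exchange sets across the $m$ matroids, giving an expected per-element cost of $O(km/p)$.
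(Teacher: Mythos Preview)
Your outline captures the potential bound $\sum_{v \in D} g_v \leq \tfrac{1}{c} f(S)$ and the random-subset step correctly, but the central charging inequality
\[
\E{f(\OPT \cup S)} \;\leq\; \bigl(1 + p(1+c)(1 + 1/c)\bigr) \cdot \E{f(S)}
\]
cannot be obtained as you describe, and indeed does not give the stated ratios: plugging in $c=1$, $q=(2p+1)^{-1}$ yields $(1+4p)(2p+1)/(2p)$, which for $p=1$ is $7.5$, not $3+2\sqrt{2}\approx 5.83$.

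The gap is in the ``never sampled'' branch of your partition of $\OPT\setminus S$. For an element $u$ that loses its $q$-coin, \AlgSampling never computes an exchange set $E_u$ and never tests the threshold, so you have no inequality of the form $f(u\mid S_{i-1}) < (1+c)\,f(E_u : S_{i-1})$ to feed into the charging. Your bound therefore does not involve $q$ at all except through the $(1-q)$ deflation at the end, whereas a correct argument must use $q$ inside the charging step to control these unsampled $\OPT$ elements.

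The paper's resolution is to analyze an \emph{equivalent} algorithm that swaps the order of the two tests: first compute $U_i$ and check the threshold, \emph{then} flip the $q$-coin. This makes the threshold comparison well-defined for every arriving element and introduces a third set $R$ of elements that pass the threshold but lose the coin. The key identity is that, conditioning on everything before $u_i$'s arrival, an element passing the threshold lands in $A$ with probability $q$ and in $R$ with probability $1-q$, so
\[
\E\Bigl[\sum_{u_i \in \OPT \cap R} f(u_i \mid S_{i-1})\Bigr]
\;=\;
\frac{1-q}{q}\,\E\Bigl[\sum_{u_i \in \OPT \cap A} f(u_i \mid S_{i-1})\Bigr].
\]
Choosing $q = ((1+c)p+1)^{-1}$ makes $(1-q)/q = (1+c)p$, which exactly cancels a $-(1+c)p$ term coming from the mapping argument applied to $\OPT\setminus R$ (the paper's mapping is a per-matroid DAG construction tracking exchange chains, somewhat more delicate than a single-step injection). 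This is why $q$ takes its specific value: it is needed for the expectation balancing, not merely for the $(1-q)$ factor. One then obtains $\E{f(S_n)} \geq \tfrac{c}{(1+c)^2 p}\,\E{f(A\cup\OPT)}$ and finishes with the random-subset lemma applied to $A$ rather than $S$.

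A minor note: the theorem statement in the paper has the two $c$ values swapped relative to the actual proof; the proof uses $c=1$ for monotone ($4p$) and $c=\sqrt{1+1/p}$ for non-monotone ($2p+2\sqrt{p(p+1)}+1$). Your choice follows the statement, but the optimization actually goes the other way.
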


We remark that the $(p+1)^{2}/p = p + 2 + o(1)$ approximation ratio of \algrd is nearly tight, due 
to 
Theorem 4 of~\citep{FHK17},
which shows that no randomized algorithm can achieve an approximation better than $p + 
1/2$ for the problem using 
polynomially many queries to the evaluation and independence oracles.
Although \AlgSampling has an approximation guarantee which is worse than \algrd and 
applies to a subclass of constraints, its approximation ratio is currently the best
among known algorithms for the problem setting.
Moreover, there is no known non-trivial streaming algorithm for maximizing even a monotone 
submodular function subject to a $p$-extendible constraint.
The difference between the guarantees of \algrd and \AlgSampling is of course aligned with the 
intuitive expectation that the  
streaming setting should be more challenging than the offline setting.

In addition to its simplicity, one of the more attractive aspects of the subsampling technique 
is the way 
in which it can be analyzed to provide approximation guarantees for both the monotone and 
non-monotone settings in a unified manner.
The main technical result which makes this possible is due to \cite{BFNS14}.
\begin{lemma}[Lemma 2.2 of~\citep{BFNS14}] \label{lem:buchbinder}
	Let $g\colon 2^\ground \to \nnR$ be a non-negative submodular function, and let $B$ be a 
	random 
	subset of $\ground$ containing every element of $\ground$ with probability at most $q$ 
	(not 
	necessarily 
	independently). Then,
	$\bE[g(B)] \geq (1 - q) \cdot g(\varnothing)$.
\end{lemma}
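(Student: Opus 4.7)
The plan is to reduce the inequality to a statement about the Lov\'asz extension of $g$, which is well-behaved for submodular functions. Let $x \in [0,1]^\ground$ denote the vector of marginals $x_e = \Pr[e \in B]$, so by hypothesis $x_e \leq q$ for every $e \in \ground$. Recall that the Lov\'asz extension is
\[
g^L(x) = \int_0^1 g\bigl(\{e \in \ground : x_e \geq \theta\}\bigr) \, d\theta.
\]

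First, I would appeal to the classical fact that for a submodular $g$ the Lov\'asz extension coincides with the convex closure
\[
g^-(x) = \min\left\{ \sum_S \lambda_S\, g(S) \;:\; \lambda_S \geq 0 \text{ for all } S \subseteq \ground,\; \sum_S \lambda_S = 1,\; \sum_S \lambda_S\, \mathbf{1}_S = x \right\}.
\]
By the definition of the convex closure, any random subset of $\ground$ with marginals $x$ has expected $g$-value at least $g^-(x)$; applied to $B$ itself, this yields $\bE[g(B)] \geq g^L(x)$, regardless of the correlation structure among the indicators $\{\mathbf{1}[e \in B]\}_{e \in \ground}$.

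Second, I would lower bound $g^L(x)$ by splitting the integral at $\theta = q$. For $\theta > q$, the threshold set $\{e : x_e \geq \theta\}$ is empty since $x_e \leq q$, so the integrand equals $g(\varnothing)$; for $\theta \in [0,q]$, non-negativity of $g$ makes the integrand at least $0$. Therefore
\[
g^L(x) \;\geq\; \int_0^q 0 \, d\theta \;+\; \int_q^1 g(\varnothing) \, d\theta \;=\; (1-q) \cdot g(\varnothing),
\]
which combined with the previous step completes the proof.

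The main obstacle is the clean invocation of the identity $g^L = g^-$ for submodular functions, which is a classical fact about the base polymatroid that I would cite; once that is in hand, the rest of the argument is the elementary integral split above. A more self-contained alternative would order the elements so that $x_{e_1} \geq \cdots \geq x_{e_n}$, write $g^L(x)$ as the telescoping sum $g(\varnothing) + \sum_{i=1}^n (x_{e_i} - x_{e_{i+1}}) \, g(\{e_1, \ldots, e_i\})$ (with the convention $x_{e_{n+1}} := 0$), and then directly compare to $\bE[g(B)]$ by conditioning on the indicators $\mathbf{1}[e_i \in B]$ and applying diminishing returns term by term.
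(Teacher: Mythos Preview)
The paper does not give its own proof of this lemma; it is quoted verbatim from \cite{BFNS14} and used as a black box. So there is nothing in the paper to compare against.

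Your main argument is correct. The two ingredients are exactly right: (i) for any random set $B$ with marginal vector $x$, the distribution of $B$ is feasible in the minimization defining the convex closure $g^-$, so $\bE[g(B)] \geq g^-(x)$; (ii) for submodular $g$ one has $g^- = g^L$ (Lov\'asz, 1983), and splitting the defining integral of $g^L$ at $\theta = q$ immediately gives $(1-q)\,g(\varnothing)$ from the tail (where the threshold set is empty) and $\geq 0$ from the head (by non-negativity). This is a clean and valid proof.

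One small slip in your closing ``self-contained alternative'': the telescoping formula you wrote,
\[
g(\varnothing) + \sum_{i=1}^n (x_{e_i} - x_{e_{i+1}})\, g(\{e_1,\ldots,e_i\}),
\]
is off by $x_{e_1}\,g(\varnothing)$; the correct expansion has leading term $(1 - x_{e_1})\,g(\varnothing)$, or equivalently $g^L(x) = g(\varnothing) + \sum_{i=1}^n x_{e_i}\bigl(g(\{e_1,\ldots,e_i\}) - g(\{e_1,\ldots,e_{i-1}\})\bigr)$. This does not affect your main argument, which does not rely on that formula.
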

Suppose that $S$ is the random set returned by an algorithm.
When using the subsampling technique, this lemma provides a nontrivial lower bound on the 
term 
$\bE [f(S \cup \OPT)]$.
In particular, we can apply the lemma to the function defined by $g(T) = f(T \cup \OPT) \ 
\forall \ T \subseteq \ground$, whose
submodularity and non-negativity is guaranteed by the same conditions on $f$.
The subsampling technique implies that each element of $\ground$ appears in $S$ with 
probability at 
most $q$, and therefore,
\[ \bE[ f(S \cup \OPT) ] = \bE[g(S)] \geq (1 - q) \cdot g(\varnothing) = (1-q) \cdot f(\OPT) 
\enspace. \]
This allows for a basically interchangeable step when lower bounding the term $\bE [f(S \cup 
\OPT)]$ in 
the algorithmic analysis for monotone and non-monotone functions.
If $f$ is monotone, then we have $\bE [f(S \cup \OPT)] \geq f(\OPT)$ by monotonicity; 
otherwise, we 
invoke the subsampling technique to obtain $\bE [f(S \cup \OPT)] \geq (1 - q) \cdot f(\OPT)$ 
by 
Lemma~\ref{lem:buchbinder}.
Note that Lemma~\ref{lem:buchbinder} alone is not enough to guarantee any approximation 
factor.
Rather, Lemma~\ref{lem:buchbinder} shows that
\emph{subsampling} algorithms for monotone optimization can be converted into 
subsampling 
algorithms for non-monotone optimization with a controlled loss in the approximation ratio.
The powerful{\textemdash}and arguably, shocking{\textemdash}result is that this controlled 
loss in 
approximation yields nearly optimal approximation ratios in the offline setting.

Our subsampling technique is different from that proposed by 
\citep{Mirzasoleiman15}, which was 
developed for  the problem of maximizing a monotone submodular function under a cardinality 
constraint.
The subsampling technique of \citep{Mirzasoleiman15} works by sampling a new subset at each 
iteration and greedily choosing the element in the random subset with highest marginal gain.
The analysis works by guaranteeing that if the sample size is large enough, then it is likely that an 
element with sufficiently large marginal gain is chosen by the algorithm.
On the other hand, our subsampling technique in \algrd is quite different in that it requires only one 
subsampling of the ground set at the beginning of the algorithm.
Moreover, our proposed subsampling technique admits guarantees in more general 
problem settings, including non-monotone objectives and $p$-extendible system constraints.

\subsection{Offline Algorithm: \algrd} \label{sec:centralized_alg}

In this section, we present \algrd, a subsampling algorithm for the offline setting.
The idea is simple: first independently sample elements to obtain a subsampled ground set, then 
run the vanilla greedy algorithm.
We present \algrd as Algorithm~\ref{alg:sample_greedy} here.

\IncMargin{0.5em}
\SetKwIF{With}{OtherwiseWith}{Otherwise}{with}{do}{otherwise with}{otherwise}{}
\begin{algorithm}[H]
	\SetAlgoLined
	\SetInd{0.5em}{0.1em}
	\Indp
	\caption{\algrd$(q, f, \ground, \cI)$} \label{alg:sample_greedy}
	\DontPrintSemicolon
	Let $\cN' \leftarrow \varnothing$ and $S \leftarrow \varnothing$.\\
	\For{each $u \in \cN$}{
		\With{probability $q$\label{line:sample}}{Add $u$ to $\cN'$.}
	}
	\While{there exists $u \in \cN'$ such that $S + u \in \cI$ and $ f(u \mid S) > 0$\label{line:test_sg}}{
		Let $u \in \cN'$ be the element of this kind maximizing $ f(u \mid S)$.\\
		Add $u$ to $S$.
	}
	\Return{$S$}.
\end{algorithm}
\DecMargin{0.5em}

For analysis purposes, we introduce an auxiliary algorithm, Algorithm~\ref{alg:equiv_sample_greedy}.
While  \algrd first independently samples elements from the ground set $\ground$ and then runs a 
greedy maximization, Algorithm~\ref{alg:equiv_sample_greedy} runs a greedy 
maximization over the entire ground set and independently samples the greedily chosen element at 
each iteration.
We will show that both algorithms produce the same distribution over their output sets $S \subseteq 
\ground$ because the sampling of elements is independent from the greedy maximization. 
Thus, approximation guarantees obtained for Algorithm~\ref{alg:equiv_sample_greedy} also hold for 
\algrd.

\IncMargin{0.5em}
\begin{algorithm}[htb!]
	\caption{Equivalent-\algrd$(q, f, \ground, \cI)$} \label{alg:equiv_sample_greedy}
	\SetAlgoLined
	\SetInd{0.5em}{0.1em}
	\Indp
	\DontPrintSemicolon
	Let $\cN' \gets \cN$, $S \gets \varnothing$, $O \gets \OPT$, and $O_u \gets \varnothing$, $S_u 
	\gets 
	\varnothing$ for each $u \in \ground$.\\
	\While{there exists an element $u \in \cN'$ such that $S + u \in \cI$ and $ f(u \mid S) > 
	0$\label{line:while_sg_equiv}}
	{
		Let $u \in \cN'$ be the element of this kind maximizing $ f(u \mid S)$, and let $S_u \gets S$. 
		\label{line:chosen_u}\\
		\With{probability $q$\label{line:sample_equiv}}
		{
			Add $u$ to $S$ and $O$.\\
			Let $O_u \subseteq O \setminus S$ be the smallest set such that $O \setminus O_u \in \cI$.
		}
		\Otherwise
		{
			\lIf{$u \in O$\label{line:if_u_in}}{let $O_u \gets \{u\}$.}
			\lElse{let $O_u \gets \varnothing$.}
		}
		Remove the elements of $O_u$ from $O$.\\
		Remove $u$ from $\cN'$.
	}
	\Return{$S$.}
\end{algorithm}
\DecMargin{0.5em}

As in \algrd, $S$ is the current solution to which elements are incrementally added.
Algorithm~\ref{alg:equiv_sample_greedy} also maintains several auxiliary sets for analysis purposes, 
such as $O$ and the sets $O_u$, $S_u$ for each element $u$ in $\ground$.
We use these sets in the analysis of the algorithm below, however, they do not affect the output distribution (or other 
behavior) of Algorithm~\ref{alg:equiv_sample_greedy}. 

We now establish the equivalence between \algrd and Algorithm~\ref{alg:equiv_sample_greedy}.
It is important, however, to note that for the equivalence to hold we must make some technical 
assumption about the tie-breaking rule used by the two algorithms. In the proof below we assume 
that this tie-breaking rule is based on an ordering of the ground set. In other words, suppose that 
the ground set is (arbitrarily) ordered  $\ground = \{ u_1, u_2 \dots u_n 
\}$ and that both algorithms break ties in favor of the element appearing earlier in this 
order.\footnote{There 
are of course other natural ways in which the tie-breaking rule can be defined, and the proof can be 
made to work with many of them. For example, the proof goes through if the tie-breaking rule picks 
at every step a uniformly random element out of all the elements that can be picked at this point and 
maximize the marginal gain.}

\begin{lemma} \label{lemma:offline_equiv}
	Let $S^1$ and $S^2$ be the random sets produced by \algrd and 
	Algorithm~\ref{alg:equiv_sample_greedy}, respectively.
	These random sets have the same probability distribution.
\end{lemma}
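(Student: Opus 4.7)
My plan is a coupling argument. For each element $u \in \cN$, draw an independent Bernoulli$(q)$ random variable $X_u$, and use these shared bits to determine both algorithms' random choices: in \algrd, $u$ is placed in $\cN'$ iff $X_u = 1$ (line~\ref{line:sample}), and in Algorithm~\ref{alg:equiv_sample_greedy}, the branching in line~\ref{line:sample_equiv} takes the ``probability $q$'' branch iff $X_u = 1$ at the moment $u$ is picked in line~\ref{line:chosen_u}. Under this coupling I will show $S^1 = S^2$ with probability one, which establishes equality in distribution. Throughout, I rely on the common tie-breaking convention by the ordering of $\cN$ noted just before the lemma.

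Let $\cN^* = \{u \in \cN : X_u = 1\}$, let $a_1, a_2, \dotsc, a_K$ be the elements added to $S$ by \algrd in order, and let $b_1, b_2, \dotsc$ be the elements picked by Algorithm~\ref{alg:equiv_sample_greedy} in line~\ref{line:chosen_u} in order. Write $j_1 < j_2 < \dotsc$ for the indices with $X_{b_{j_k}} = 1$ (i.e., the iterations where Algorithm~\ref{alg:equiv_sample_greedy} adds to $S$). I will prove by induction on $k$ that $a_k = b_{j_k}$ for all $k \leq K$, and afterwards verify that neither sequence has any further additions.

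For the inductive step, assume $a_i = b_{j_i}$ for $i < k$, so both algorithms share the same current solution $S_k = \{a_1, \dotsc, a_{k-1}\}$. In \algrd, $a_k$ is the $f(\cdot \mid S_k)$-maximizer over the set $T_k \triangleq \{u \in \cN^* \setminus S_k : S_k + u \in \cI,\ f(u \mid S_k) > 0\}$. In Algorithm~\ref{alg:equiv_sample_greedy}, just before iteration $j_{k-1}+1$ of its outer while loop (using $j_0 = 0$), the set $\cN'$ equals $\cN \setminus \{b_1, \dotsc, b_{j_{k-1}}\}$. Among these removed elements, those with $X_{b_j} = 1$ are exactly the members of $S_k$, while those with $X_{b_j} = 0$ lie outside $\cN^*$; hence $\cN' \cap \cN^* = \cN^* \setminus S_k$, so every element of $T_k$ is still available in $\cN'$. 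Between iterations $j_{k-1}+1$ and $j_k$, Algorithm~\ref{alg:equiv_sample_greedy} picks elements of $\cN'$ in decreasing order of $f(\cdot \mid S_k)$ (among feasible, positive-marginal ones, breaking ties by the fixed ordering), discarding those with $X = 0$ and stopping at the first one with $X = 1$. That first such element lies in $\cN^* \cap \cN' = \cN^* \setminus S_k$, and by the shared tie-breaking rule it is precisely the $f(\cdot \mid S_k)$-maximizer of $T_k$, namely $a_k$. Thus $b_{j_k} = a_k$.

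Finally, once \algrd halts (no element of $T_{K+1}$ exists), Algorithm~\ref{alg:equiv_sample_greedy} may still process elements of $\cN' \setminus \cN^*$ that happen to be feasible and have positive marginal gain with respect to $S_{K+1} = \{a_1, \dotsc, a_K\}$, but each such element has $X = 0$ and is discarded without modifying $S$. Therefore, the final $S$ produced by Algorithm~\ref{alg:equiv_sample_greedy} equals $\{a_1, \dotsc, a_K\}$, matching \algrd. The main obstacle is the careful bookkeeping around the tie-breaking rule and around the fact that elements of $\cN \setminus \cN^*$ may be examined (and discarded) in Algorithm~\ref{alg:equiv_sample_greedy} even though \algrd never considers them; both concerns are handled cleanly by the invariant $\cN' \cap \cN^* = \cN^* \setminus S$, which holds throughout the execution of Algorithm~\ref{alg:equiv_sample_greedy}.
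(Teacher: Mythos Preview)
Your proof is correct and follows essentially the same approach as the paper: both fix a shared family of Bernoulli bits per element, then argue deterministically that the two algorithms add the same sequence of elements. The only cosmetic difference is that you organize the bookkeeping around the explicit invariant $\cN' \cap \cN^* = \cN^* \setminus S$, whereas the paper phrases the same idea by noting that both algorithms pick the tie-breaking-earliest maximizer over $\{u_i : b_i = 1,\ A + u_i \in \cI,\ f(u_i \mid A) > 0\}$; your handling of the trailing iterations (where Algorithm~\ref{alg:equiv_sample_greedy} may still scan and discard $X=0$ elements after \algrd has halted) is in fact slightly more explicit than the paper's.
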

\begin{proof}{Proof} 
	Let $u_1^1, u_2^1 \dots u_{k_1}^1$ be the random 
	sequence of elements chosen by \algrd, and let 
	$u_1^2, u_2^2 \dots \allowbreak u_{k_2}^2$ be the random sequence of elements chosen by 
	Algorithm~\ref{alg:equiv_sample_greedy}.
	We prove that these two random sequences have the same distribution, which implies the lemma 
	since $S^1 = \{ u_1^1, u_2^1 \dots u_{k_1}^1 \}$ and
	$S^2 = \{ u_1^2, u_2^2 \dots u_{k_2}^2 \}$.
	
	Observe that both \algrd and Algorithm~\ref{alg:equiv_sample_greedy} require at most $n$ random (biased) 
	bits, as each element 
	from the ground set is sampled at most once.
	This occurs at Line~\ref{line:sample} in \algrd and at Line~\ref{line:sample_equiv} in 
	Algorithm~\ref{alg:equiv_sample_greedy}.
	For each element $u_i \in \ground$, let $b_i \in \{ 0 , 1 \}$ be the corresponding random bit that 
	takes 
	the value $1$ if $u_i$ is accepted in the sampling step and $0$ if $u_i$ is rejected.
	Because the random sampling at Line~\ref{line:sample} in \algrd and at 
	Line~\ref{line:sample_equiv} in 
	Algorithm~\ref{alg:equiv_sample_greedy} are independent of the algorithms' previous and current 
	states, the bits $b_1, 
	b_2, 
	\dots b_n \in \{0,1\}^n$ are independent.
	Thus, we may assume for the sake of the proof that these bits are chosen before the execution of 
	the algorithms. Note also that, by definition of the 
	sampling probabilities in the algorithms, each bit $b_i$ takes the value $1$ with probability $q$ 
	and 
	$0$ with probability $1 - q$.
	
	We show below that conditioned on any fixed realization of the random bits $b_1, b_2 \dots 
	b_n$, the two 
	sequences $u_1^1, u_2^1 \dots \allowbreak u_{k_1}^1$ and $u_1^2, u_2^2 \dots u_{k_2}^2$, 
	produced by \algrd 
	and Algorithm~\ref{alg:equiv_sample_greedy} are the same. We do this by induction on the index of 
	the element in the 
	sequence. In other words, let us denote for every $i \in \{1,2\}$ and $0 \leq j \leq k_i$,
	\[
	S_j^i = \{ u_1^i, u_2^i \dots u_j^i \}
	\enspace.
	\]
	Then, we prove below by induction that $j \leq k_1$ if and only if $j \leq k_2$, and that $S_j^1 
	= S_j^2$ whenever $j \leq k_1$. Before starting the proof by induction, however, let us note that it 
	implies the lemma by the law of total probability, i.e., summing over all fixed realizations of the bits 
	$b_1, b_2, \dots b_n$.

	Clearly, $0 \leq k_1, k_2$ and $S_0^1 = S_0^2 = \varnothing$, which establishes the base of the 
	induction. Assume now that the induction hypothesis holds for some $j - 1 \geq 0$, and let us 
	prove it for $j$. If $j - 1 > k_1$, then the induction hypothesis implies $j > j -1 > \max\{k_1, k_2\}$, and 
	there is nothing left to prove. Thus, let us assume $j - 1 \leq k_1$, which  by the induction 
	hypothesis implies $j - 1 \leq k_2$ and $S_{j-1}^1 = S_{j-1}^2$. We denote the common value of the 
	last two sets by $A$,  i.e., $A = S_{j-1}^1 = S_{j-1}^2$. We note that {\algrd} chooses 
	its 
	next element from the set of 
	elements obeying $u \in \ground'$, $A + u \in \cI$ and $f(u \mid A)$ is positive 
	and maximal. Since $u \in \ground'$ if and only if $b_u = 1$, this implies that \algrd chooses $u_j$ 
	to be the earliest element in the tie-breaking order from the set
	\begin{equation}\label{eq:u_pick}
	\argmax_{\substack{ u_i \in \ground \\ b_i = 1 \\ A + u_i \in \cI }} \{ f(u \mid A) \mid  f(u \mid A) > 0 \} 
	\enspace,
	\end{equation}
	whenever this set is non-empty. Otherwise, if the set is empty, then \algrd terminates.
	Now, consider the element which is chosen next by Algorithm~\ref{alg:equiv_sample_greedy}.
	This element is chosen from the set of elements obeying $u \in \ground'$, $A + u \in \cI$ and $f(u 
	\mid 
	A)$ is positive and maximal. However, if $b_u = 0$, then the element is rejected and the algorithm 
	continues to the next element in the set (according to the tie-breaking order).
	Thus, Algorithm~\ref{alg:equiv_sample_greedy} also chooses the earliest element from the 
	set~\eqref{eq:u_pick} if this set 
	is non-empty; and terminates if the set is empty. Therefore, if the set~\eqref{eq:u_pick} is empty, 
	then $k_1 = k_2 = j - 1 < j$, and if it is not empty, then $j \leq \min\{k_1, k_2\}$ and $S_j^1 = S_j^2$. 
	In either case the induction step holds, which completes the proof by induction.
	\qedhere
\end{proof}

Now we return to the task of analyzing Algorithm~\ref{alg:equiv_sample_greedy}.
Let us explain the intuition behind the auxiliary sets $O$, $O_u$ and $S_u$ appearing in this algorithm.
The set $O$ begins as an optimal solution and is updated throughout the algorithm to maintain 
independence.
We say that an element $u \in \ground$ is \emph{considered} by 
Algorithm~\ref{alg:equiv_sample_greedy} if it is chosen in Line~\ref{line:chosen_u} at some iteration.
An element $u \in \ground$ is considered at most once, and perhaps not at all.
If $u$ is considered, then $S_u$ is the solution at the iteration in which this happens and $O_u$ is the 
subset of
$O$ which must be removed at this iteration to maintain independence and a few other properties; otherwise, if $u$ is 
not considered, then $S_u$ and $O_u$ are empty.
More formally, Lemma~\ref{lem:structural_properties} gives several key properties of these auxiliary 
sets.

\begin{lemma} \label{lem:structural_properties}
	The following three properties hold throughout Algorithm~\ref{alg:equiv_sample_greedy}.
	\begin{enumerate}[label=\textit{(P\arabic*)},leftmargin=1.5cm]
		\item $O$ is an independent set. \label{itm:O_ind}
		\item Every element of $S$ is an element of $O$. \label{itm:S_sub_O}
		\item Every element of $O \setminus S$ is an element not yet considered by 
		Algorithm~\ref{alg:equiv_sample_greedy}. \label{itm:not_considered}
	\end{enumerate}
\end{lemma}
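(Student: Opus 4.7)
The plan is to prove all three properties simultaneously by induction on the number of completed iterations of the while loop in Algorithm~\ref{alg:equiv_sample_greedy}. For the base case, before any iteration the algorithm has $O = \OPT$ (which is independent, giving \ref{itm:O_ind}), $S = \varnothing \subseteq O$ (giving \ref{itm:S_sub_O}), and no element has yet been considered, so every element of $O \setminus S = \OPT$ trivially satisfies \ref{itm:not_considered}.

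For the inductive step, suppose all three properties hold at the start of some iteration, and let $u \in \cN'$ be the element chosen at Line~\ref{line:chosen_u}. First I would observe that $u \notin S$: since $S + u \in \cI$ and $f(u \mid S) > 0$ by the while-loop guard, $u$ cannot already be in $S$. The proof then splits on the random branch taken.

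In the probability-$q$ branch, $u$ is added to both $S$ and $O$, and $O_u \subseteq O \setminus S$ is defined as a smallest set making $O \setminus O_u$ independent. The existence of such an $O_u$ is the main technical step: by \ref{itm:S_sub_O} and \ref{itm:O_ind} for the previous state, the pre-iteration $O$ is an independent extension of $S_u$, and since $S_u + u \in \cI$, the $p$-extendibility of $(\ground,\cI)$ yields a set $Y \subseteq O \setminus S_u$ with $|Y|\leq p$ such that $(O \setminus Y) + u \in \cI$; a smallest such $Y$ serves as $O_u$, establishing \ref{itm:O_ind} for the updated $O$. Since $O_u$ is drawn from $O \setminus S$ (post-update), it is disjoint from the new $S$ and contains neither $u$ nor any older element of $S$, so \ref{itm:S_sub_O} is preserved. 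For \ref{itm:not_considered}, a routine set computation shows that the new $O \setminus S$ is obtained from the old $O \setminus S$ by deleting $O_u$ and $\{u\}$; the surviving elements were not yet considered by the inductive hypothesis, and $u$ itself, while now considered, lies in $S$.

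In the otherwise branch, $S$ is unchanged and $O$ loses at most the single element $u$, so \ref{itm:O_ind} is immediate; \ref{itm:S_sub_O} is preserved because $u \notin S$; and \ref{itm:not_considered} is preserved because the only possible change to $O \setminus S$ is the removal of $u$, which has just been considered. At the end of the iteration $u$ is removed from $\cN'$, so it will never be considered again. The only substantive obstacle in the argument is the existence of a valid $O_u$ in the probability-$q$ branch, which is exactly the single place $p$-extendibility is invoked, and it relies crucially on having \ref{itm:O_ind} and \ref{itm:S_sub_O} available from the inductive hypothesis; every other check reduces to elementary set manipulation.
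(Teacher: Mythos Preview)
Your proof is correct and follows the same inductive structure as the paper's, splitting on the two random branches and verifying each property. The one minor difference is that you invoke $p$-extendibility to guarantee a valid $O_u$ exists, whereas the paper observes more simply that taking $O_u = O \setminus (S \cup \{u\})$ already works (since then the updated $O$ equals $S \cup \{u\} \in \cI$); so $p$-extendibility is not actually needed for this lemma and only enters later, in Lemma~\ref{lem:link}, to bound $|O_u| \leq p$.
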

\begin{proof}{Proof}
	It is clear that all three properties hold at the beginning of the algorithm when $O = \OPT$ and $S = 
	\varnothing$.
	Let us now show that these properties are maintained throughout 
	Algorithm~\ref{alg:equiv_sample_greedy} by induction over the 
	iterations of the algorithm.
	Suppose that $u$ is the element being considered at some iteration, $S$ and $O$ are the sets at the 
	beginning of this iteration (satisfying properties \ref{itm:O_ind}, \ref{itm:S_sub_O}, and 
	\ref{itm:not_considered} by the inductive hypothesis) and $S'$ and $O'$ are these sets at the end of the 
	iteration.
	
	\paragraph{Case 1} Suppose that $u$ is chosen to be added to the current solution.
	By \ref{itm:O_ind} and \ref{itm:S_sub_O}, $O$ is an extension of $S$, and by 
	Line~\ref{line:chosen_u}, $S \cup \{u\} 
	\in \cI$. Thus, the algorithm is able to find a set $O_u \subseteq (O \cup \{u\}) \setminus (S \cup \{u\})$ such that $O' = \left( O 
	\cup \{u\} \right) \setminus O_u \in \cI$ ($O_u = O \setminus (S \cup \{u\})$ is one possible option). 
	Thus, $O'$ remains independent, and \ref{itm:O_ind} is maintained.
	We have that $S \subseteq O$ by \ref{itm:S_sub_O}. The element $u$ is added to both $S$ and $O$, 
	and the only elements which are removed from $O$ are not in $S$.
	Thus, $S' \subseteq O'$ and so \ref{itm:S_sub_O} is maintained at well.
	By \ref{itm:not_considered}, all elements in $O \setminus S$ had not yet been considered by 
	Algorithm~\ref{alg:equiv_sample_greedy} at the beginning of the iteration. 
	The only element which is considered in this iteration is $u$, and it is in both $O'$ and $S'$, so 
	it is not in $O' \setminus S'$. Hence, the elements in $O' 	\setminus S'$ have still not been considered 
	by Algorithm~\ref{alg:equiv_sample_greedy} at the end of iteration, and \ref{itm:not_considered} is 
	maintained.
	
	\paragraph{Case 2} Consider now the case that $u$ is not added to the current solution. 
	In this case, $O' \subseteq O$, and therefore, it remains independent and \ref{itm:O_ind} is maintained.
	Because $S \subseteq O$ by \ref{itm:S_sub_O}, $u \notin S' = S$, and the only element which is 
	possibly removed from $O$ is $u$, we have that all elements in $S'$ belong to $O'$, and thus,
	\ref{itm:S_sub_O} is maintained.
	Finally, by \ref{itm:not_considered}, none of the elements in $O \setminus S$ were considered prior to this 
	iteration. By Line~\ref{line:if_u_in}, $u$ does not appear in $O' \setminus S'$. Since $u$ is the only element considered during the current iteration and $O' \setminus S' = (O \setminus S) \setminus \{u\}$, property  
	\ref{itm:not_considered} is maintained. 
\end{proof}

Throughout the remainder of this section, every expression involving $S$ or $O$ is assumed to refer to 
the final values of these sets. 
The following lemma provides a deterministic lower bound on $f(S)$.
Intuitively, this lemma follows from the observation that, when an element $u$ is considered by 
Algorithm~\ref{alg:equiv_sample_greedy}, its marginal contribution is at least as large as the marginal 
contribution of any element of $O \setminus S$. 

\begin{lemma} \label{lem:lower_bound_S}
	$f(S) \geq f(S \cup \OPT) - \sum \limits_{u \in \cN} |O_u \setminus S | \cdot f(u \mid S_u)$.
\end{lemma}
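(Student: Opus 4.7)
The plan is to decompose the gap $f(S \cup \OPT) - f(S)$ by submodularity and then charge each element of $\OPT \setminus S$ to an iteration of the algorithm via the evicted set $O_u$.

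First, I would apply the standard submodular telescoping inequality over an arbitrary ordering of $\OPT \setminus S$ to obtain
\[
f(S \cup \OPT) - f(S) \;\leq\; \sum_{e \in \OPT \setminus S} f(e \mid S).
\]
So it suffices to prove the deterministic bound $\sum_{e \in \OPT \setminus S} f(e \mid S) \leq \sum_{u \in \ground} |O_u \setminus S| \cdot f(u \mid S_u)$.

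Next, I would set up a charging scheme. Since $O$ starts as $\OPT$, every $e \in \OPT$ is initially in $O$. For $e \in \OPT \setminus S$, there are two possibilities: either (a) $e$ is never removed from $O$, so $e$ lies in the final $O \setminus S$, or (b) $e$ is removed from $O$ at a unique iteration whose considered element I denote by $u(e)$. In case (a), property \ref{itm:not_considered} says $e$ was never considered, so $e$ is still in $\cN'$ at termination, and properties \ref{itm:O_ind}, \ref{itm:S_sub_O} give $S + e \subseteq O \in \cI$; the only way the while-loop could have terminated without ever picking $e$ is $f(e \mid S) \leq 0$, so these terms contribute non-positively and may be dropped. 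In case (b), $e \in O_{u(e)} \setminus S$ and I would argue $f(e \mid S) \leq f(u(e) \mid S_{u(e)})$ in two sub-cases: if $u(e) = e$ (the ``not sampled'' branch) this is immediate from $S_e \subseteq S$ and submodularity; if $u(e) \neq e$ (the ``sampled'' branch), I need $e$ to have been a valid candidate at that iteration, using \ref{itm:S_sub_O} to show $S_{u(e)} + e \subseteq O \in \cI$ and observing that $e$ has not yet been considered (otherwise it would have been removed from $O$ strictly earlier). Then greedy maximality plus submodularity handles the sub-case $f(e \mid S_{u(e)}) > 0$, while $f(e \mid S_{u(e)}) \leq 0$ together with the while-loop condition $f(u \mid S_u) > 0$ handles the remaining sub-case.

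Finally, I would collect the charges by iteration: the preimage of $u$ under $e \mapsto u(e)$ is $O_u \cap (\OPT \setminus S) \subseteq O_u \setminus S$, and $f(u \mid S_u) > 0$ for every considered $u$, so replacing the count by the larger $|O_u \setminus S|$ only weakens the inequality, yielding the claimed bound.

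The main obstacle I anticipate is the ``sampled'' sub-case of (b): I must verify that when $u \neq e$ evicts $e$ from $O$, the element $e$ was genuinely available in the greedy maximization at that iteration (i.e., $e \in \cN'$, $S_u + e \in \cI$, and either $f(e \mid S_u) > 0$ or the marginal is non-positive and we fall back on $f(u \mid S_u) > 0$). Everything else is bookkeeping; this is the one place where the structural properties \ref{itm:O_ind}--\ref{itm:not_considered} from Lemma~\ref{lem:structural_properties} really have to be used in concert.
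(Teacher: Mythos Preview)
Your proposal is correct and follows essentially the same approach as the paper: both arguments use the termination condition to dispose of elements remaining in the final $O \setminus S$, and both use the greedy maximality $f(v \mid S_u) \leq f(u \mid S_u)$ for $v \in O_u$ (verified via properties~\ref{itm:O_ind}--\ref{itm:not_considered}) to charge the removed elements. The only difference is packaging: the paper routes the argument through the intermediate inequality $f(S) \geq f(O)$ and the identity $O = (S \cup \OPT) \setminus \bigcup_u (O_u \setminus S)$, whereas you bound $f(S \cup \OPT) - f(S) \leq \sum_{e \in \OPT \setminus S} f(e \mid S)$ directly and then do an explicit element-by-element charging---but the substance is identical.
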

\begin{proof}{Proof}
	We first show that $f(S) \geq f(O)$, then we lower bound $f(O)$ to complete the proof. 
	By \ref{itm:O_ind} and \ref{itm:S_sub_O} of Lemma~\ref{lem:structural_properties}, we have $O \in \cI$ 
	and $S \subseteq O$, and thus, $S + v \in 
	\mathcal{I}$ for all $v \in O \setminus S$ because $(\cN, \cI)$ is an independence system. 
	Consequently, the termination condition of Algorithm~\ref{alg:equiv_sample_greedy} guarantees that 
	$\Delta f(v \mid S) \leq 0$ for all $v \in O \setminus S$. 
	To use these observations, let us denote the elements of $O \setminus S$ by $v_1, v_2, \dotsc, v_{|O 
	\setminus S|}$ in an arbitrary order. Then
	\[
	f(O) = f(S) + \sum \limits_{i=1}^{ |O \setminus S|} f \left( v_i \mid S \cup \{v_1, \dots , v_{i - 1} \} 
	\right) \leq f(S) + \sum \limits_{i=1}^{ |O \setminus S|} f \left( v_i \mid S \right) \leq f(S)\enspace,
	\]
	where the first inequality follows by the submodularity of $f$, and the second inequality follows from 
	the termination condition.
	
	It remains to prove the lower bound on $f(O)$. 
	By definition, $O$ is the set obtained from $\OPT$ after the elements of $\cup_{u \in \cN} O_u$ are 
	removed and the elements of $S$ are added. 
	Additionally, an element that is removed from $O$ is never added to $O$ again, unless it becomes a 
	part of $S$. 
	This implies that the sets $\left\{ O_u \setminus S \right\}_{u \in \cN}$ are disjoint, and that $O$ can also 
	be written as
	\begin{equation} \label{eq:rewriting_O}
	O = \left( S \cup \OPT \right) \setminus \cup_{u \in \ground} \left(O_u \setminus S \right)
	\enspace.
	\end{equation}
	Denoting the elements of $\ground$ by $u_1,u_2, \dotsc, u_n$ in an arbitrary order, and using 
	the 
	above, we get
	\begin{align*}
	f(O)
	&= f(S \cup \OPT) - \sum \limits_{i=1}^n f \left( O_{u_i} \setminus S \mid (S \cup \OPT) \setminus 
	\cup_{1 
	\leq 
	j \leq i} ( O_{u_j} \setminus S) \right) &\text{(Equality~\eqref{eq:rewriting_O})} \\
	&\geq f(S \cup \OPT) - \sum \limits_{i=1}^n  f(O_{u_i} \setminus S \mid S_{u_i}) \\
	&\geq f(S \cup \OPT) - \sum \limits_{i=1}^n \sum \limits_{v \in O_{u_i} \setminus S}  f(v \mid S_{u_i}) \\
	&= f(S \cup \OPT) - \sum \limits_{u \in \ground} \sum \limits_{v \in O_u \setminus S}  f(v \mid S_{u})
	\enspace,
	\end{align*}
	where the first inequality follows from the submodularity of $f$ because $S_{u_i} \subseteq S 
	\subseteq (S \cup \OPT) \setminus \cup_{u \in \ground} \left(O_u \setminus S \right)$, and the second 
	inequality follows from the submodularity of $f$ as well.
	
	To complete the proof of the lemma, we need one more observation. 
	Consider an element $u$ for which $O_u$ is not empty. 
	Since $O_u$ is not empty, we know that $u$ was considered by the algorithm at some iteration. 
	Moreover, every element of $O_u$ was also a possible candidate for consideration at this iteration, and 
	thus, it must be the case that $u$ was selected for consideration because its marginal contribution 
	with respect to $S_u$ is at least as large as the marginal contribution of every element of $O_u$. 
	Plugging this observation into the last inequality, we get the following desired lower bound on $f(O)$.
	\begin{align*}
	f(O)
	\geq{} &
	f(S \cup \OPT) - \sum \limits_{u \in \cN} \sum \limits_{v \in O_u \setminus S}  f(v \mid S_{u})  \\
	\geq{} &
	f(S \cup \OPT) - \sum \limits_{u \in \cN} \sum \limits_{v \in O_u \setminus S}  f(u \mid S_{u}) \\
	={} &
	f(S \cup \OPT) - \sum \limits_{u \in \cN} |O_u \setminus S| \cdot f(u \mid S_{u})
	\enspace. 
	\qedhere
	\end{align*}
\end{proof}

While the previous lemma was true deterministically, the next two lemmata are statements about 
expected values. At this point, it is convenient to define some random variables. For every element $u 
\in 
\ground$, let $X_u$ be an indicator for the event that $u$ is considered by 
Algorithm~\ref{alg:equiv_sample_greedy} in one of its iterations.

\begin{lemma} \label{lem:link}
	Suppose that the sampling probability is $q = \frac{1}{p+1}$. Then,
	for every element $u \in \cN$,
	\begin{equation}\label{eq:linking_inequality}
	\E{ |O_u \setminus S | \cdot  f(u \mid S_u) } \leq \frac{p}{p+1} \cdot \E{X_u  f(u \mid S_u)}\enspace.
	\end{equation}
\end{lemma}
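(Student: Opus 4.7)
The plan is to carefully condition on the event that $u$ is considered by the algorithm and on the state of the algorithm at that iteration, then use $p$-extendibility to bound $|O_u \setminus S|$.

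First, I would observe that if $X_u = 0$, i.e., $u$ is never considered, then $O_u$ remains at its initial value $\varnothing$, so the contribution to the LHS of \eqref{eq:linking_inequality} is zero. Likewise $S_u$ remains $\varnothing$ and the RHS is zero. So it suffices to establish
\[
\E{|O_u \setminus S| \cdot f(u \mid S_u) \mid X_u = 1} \leq \tfrac{p}{p+1} \cdot \E{f(u \mid S_u) \mid X_u = 1},
\]
and then multiply through by $\Pr[X_u = 1]$. Recall from Lemma~\ref{lemma:offline_equiv} that we may take all the Bernoulli bits $b_1,\dots,b_n$ to be drawn in advance. Thus the event $\{X_u = 1\}$, together with the state of the algorithm (namely $S_u$ and whether $u \in O$) at the moment $u$ is considered, is measurable with respect to the $\sigma$-algebra generated by $\{b_i : i \neq u\}$, and is therefore independent of $b_u$.

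Next I would case-split on the state at that iteration, conditioning on $X_u = 1$ and on the history just before $u$ is considered. Writing $Y_u = b_u$ for the bit deciding whether $u$ is sampled, there are two subcases. If $u \in O$ at that moment, then $u \in O \setminus S$ by \ref{itm:S_sub_O}; when $Y_u = 1$, adding $u$ to $O$ is a no-op (since $u \in O$ already) and we may take $O_u = \varnothing$, while when $Y_u = 0$ we have $O_u = \{u\}$ by Line~\ref{line:if_u_in}. So $\E{|O_u \setminus S| \mid \text{history}} = (1-q)\cdot 1 = p/(p+1)$. If $u \notin O$ at that moment, then when $Y_u = 0$ we have $O_u = \varnothing$, and when $Y_u = 1$ the $p$-extendibility of $(\ground, \cI)$ applied to the independent set $S_u$, its extension $O$ (independent by \ref{itm:O_ind}, containing $S_u$ by \ref{itm:S_sub_O}), and the element $u \notin S_u$ with $S_u + u \in \cI$, guarantees the existence of a set $Y \subseteq O \setminus S_u$ with $|Y| \leq p$ such that $(O \setminus Y) \cup \{u\} \in \cI$; hence the minimal $O_u$ satisfies $|O_u| \leq p$. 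Thus $\E{|O_u \setminus S| \mid \text{history}} \leq q \cdot p = p/(p+1)$.

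Finally, since the value $f(u \mid S_u)$ is determined by the history while $|O_u \setminus S|$ additionally depends only on $Y_u$, I would apply the tower property to factor out the deterministic factor $f(u \mid S_u)$:
\[
\E{|O_u \setminus S| \cdot f(u \mid S_u) \mid X_u = 1}
= \E{f(u \mid S_u) \cdot \E{|O_u \setminus S| \mid X_u = 1, \text{history}} \,\Big|\, X_u = 1}
\leq \tfrac{p}{p+1} \cdot \E{f(u \mid S_u) \mid X_u = 1}.
\]
Multiplying both sides by $\Pr[X_u = 1]$ and recognizing that $X_u \cdot f(u\mid S_u)$ equals $f(u \mid S_u)$ on $\{X_u=1\}$ and $0$ otherwise yields the claim.

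The main obstacle, I expect, is the bookkeeping around the $u \in O$ subcase: one must carefully verify that sampling an element already in $O$ does not force any element of $O$ to be removed, so that $|O_u| = 0$ in that branch. Without this observation the naive bound $|O_u| \leq p$ in the sampled case, combined with the deterministic bound $|O_u \setminus S| = 1$ in the unsampled branch, yields only $p/(p+1) + p/(p+1)$, which is too weak by a factor of two.
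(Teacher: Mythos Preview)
Your proposal is correct and follows essentially the same approach as the paper: condition on the history up to the iteration in which $u$ is considered (the paper calls this event $\mathcal{E}_u$), observe that $S_u$ and the membership $u\in O$ are then deterministic while the bit $b_u$ remains fresh, and case-split on $u\in O$ versus $u\notin O$ exactly as you do, invoking $p$-extendibility in the latter case. One minor imprecision: you write that ``$|O_u\setminus S|$ additionally depends only on $Y_u$,'' but since $S$ is the \emph{final} solution this quantity can in principle depend on later bits too; what makes the argument go through is that in every branch you are bounding $|O_u\setminus S|$ by something (namely $0$, $1$, or $|O_u|\le p$) that is determined by the history and $Y_u$ alone, and that $f(u\mid S_u)>0$ whenever $X_u=1$ so the direction of the inequality is preserved when you factor it out.
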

\begin{proof}{Proof}
	Let $\mathcal{E}_u$ be an arbitrary event specifying all 
	random decisions made by Algorithm~\ref{alg:equiv_sample_greedy} up until the iteration in which it 
	considers $u$ if $u$ is considered, or all random decisions made by 
	Algorithm~\ref{alg:equiv_sample_greedy} throughout its execution if it never considers $u$. 
	By the law of total probability, since these events are disjoint, it is enough to prove 
	Inequality~(\ref{eq:linking_inequality}) conditioned on every such event $\mathcal{E}_u$.
	If $\mathcal{E}_u$ implies that $u$ is not considered, then both $|O_u|$ and $X_u$ are $0$ 
	conditioned on $\mathcal{E}_u$, and thus, the inequality holds as an equality. 
	Thus, we may assume in the rest of the proof that $\mathcal{E}_u$ implies that $u$ is considered by 
	Algorithm~\ref{alg:equiv_sample_greedy}. 
	Notice that conditioned on $\cE_u$ the set $S_u$ is deterministic and $X_u$ takes the value $1$. 
	Denoting the deterministic value of $S_u$ conditioned on $\cE_u$ by $S'_u$, 
	Inequality~(\ref{eq:linking_inequality}) reduces to 
	\[\E{|O_u \setminus S | \mid \mathcal{E}_u} \cdot f(u \mid S'_u) \leq \frac{p}{p+1} \cdot f(u \mid S'_u) 
	\enspace. 
	\]
	Since $u$ is being considered, it must hold that $ f(u \mid S'_u) > 0$, and thus, it suffices to show 
	that $\E{|O_u \setminus S | \mid \mathcal{E}_u} \leq \frac{p}{p+1} $. 
	There are now two cases to consider.
	
	\paragraph{Case 1}
	If $\mathcal{E}_u$ implies that $u \in O$ at the beginning of the iteration in which 
	Algorithm~\ref{alg:equiv_sample_greedy} considers $u$, then $O_u = \varnothing$ if $u$ is added to 
	$S$ 
	and $ O_u = \{u\}$ if $u$ is not added to $S$. 
	As $u$ is added to $S$ with probability $\frac{1}{p+1}$, this gives
	\[\E{|O_u \setminus S | \mid \mathcal{E}_u} = \frac{1}{p+1} \cdot | \varnothing | + \left( 1- 
	\frac{1}{p+1}\right) \cdot |\{u\}| = \frac{p}{p+1} \enspace, \]
	and we are done. 
	
	\paragraph{Case 2}
	Consider now the case that $\mathcal{E}_u$ implies that $u \not \in O$ at the beginning of the iteration 
	in which Algorithm~\ref{alg:equiv_sample_greedy} considers $u$. 
	Because $u$ is being considered, $S \cup \{u\} $ is independent and by \ref{itm:S_sub_O} and 
	\ref{itm:O_ind} of Lemma~\ref{lem:structural_properties},  $O$ is an extension of $S$.
	This implies that $O_u$ has size at most $p$ because $(\ground, \cI)$ is $p$-extendible.
	As $u$ is added to $S$ with probability $\frac{1}{p+1}$, we get in this case
	\[ 
	\E{|O_u \setminus S | \mid \mathcal{E}_u} \leq \frac{1}{p+1} \cdot p + \left( 1- \frac{1}{p+1}\right) \cdot 
	|\varnothing| = \frac{p}{p+1} \enspace. 
	\qedhere
	 \] 	
\end{proof}

The next lemma relates the expected marginal gains of considered elements in individual 
iterations to the final expected value of $f(S)$ produced by the algorithm.

\begin{lemma} \label{lem:exp_S}
	$q \cdot \sum \limits_{u \in \cN} \E{X_u  f(u \mid S_u)} \leq \E{f(S)}$.
\end{lemma}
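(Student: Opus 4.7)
The plan is to exhibit $f(S)$ as a telescoping sum of the marginal gains contributed by the elements that are actually added to $S$, and then use the independence of the sampling coin at Line~\ref{line:sample_equiv} from the algorithm's prior state to introduce the factor $q$.

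For each $u \in \cN$, let $Y_u$ be the indicator of the event that $u$ is added to $S$ during Algorithm~\ref{alg:equiv_sample_greedy}. Whenever $Y_u = 1$, the element $u$ must have been considered (so $X_u = 1$), and by the definition of $S_u$ in Line~\ref{line:chosen_u}, $S_u$ is precisely the contents of $S$ right before $u$ is added. Ordering the elements that are added in the order they are added and telescoping, we obtain the deterministic identity
\[
f(S) \;=\; \sum_{u \in \cN} Y_u \cdot f(u \mid S_u) \enspace.
\]

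Next, for each $u$, I would condition on an event $\mathcal{E}_u$ that specifies every random decision of Algorithm~\ref{alg:equiv_sample_greedy} up to (but not including) the sampling at Line~\ref{line:sample_equiv} in the iteration in which $u$ is considered, or the entire execution if $u$ is never considered. On events of the former type, $X_u = 1$ deterministically and $S_u$ equals some fixed set $S'_u$; the sampling at Line~\ref{line:sample_equiv} is independent of $\mathcal{E}_u$, so
\[
\E{Y_u \cdot f(u \mid S_u) \,\big|\, \mathcal{E}_u} \;=\; q \cdot f(u \mid S'_u) \;=\; q \cdot \E{X_u \cdot f(u \mid S_u) \,\big|\, \mathcal{E}_u} \enspace.
\]
On events of the latter type, $X_u = Y_u = 0$, so both sides are zero. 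Summing over the partition into events $\mathcal{E}_u$ using the law of total expectation yields $\E{Y_u \cdot f(u \mid S_u)} = q \cdot \E{X_u \cdot f(u \mid S_u)}$.

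Finally, summing this identity over all $u \in \cN$ and using the telescoping identity above,
\[
\E{f(S)} \;=\; \sum_{u \in \cN} \E{Y_u \cdot f(u \mid S_u)} \;=\; q \cdot \sum_{u \in \cN} \E{X_u \cdot f(u \mid S_u)} \enspace,
\]
which gives the claim (in fact as an equality). There is no real obstacle here: the only thing to be careful about is making sure that the conditioning event $\mathcal{E}_u$ is defined so that $S_u$ becomes deterministic while the Line~\ref{line:sample_equiv} coin remains unbiased, which is precisely how $\mathcal{E}_u$ was used in the proof of Lemma~\ref{lem:link}.
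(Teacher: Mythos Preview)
Your approach is essentially the same as the paper's: define the per-element contribution (the paper calls it $G_u$, which equals your $Y_u\cdot f(u\mid S_u)$), telescope, and then condition on the same events $\mathcal{E}_u$ to pull out the factor $q$. The conditioning step is carried out identically in both.

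There is one small slip. The telescoping identity you state,
\[
f(S) \;=\; \sum_{u\in\cN} Y_u\cdot f(u\mid S_u),
\]
omits the base term $f(\varnothing)$; the correct identity is $f(S)=f(\varnothing)+\sum_{u} Y_u\cdot f(u\mid S_u)$. Consequently, your parenthetical claim that the lemma holds ``in fact as an equality'' is not justified: you need $f(\varnothing)\geq 0$ (non-negativity of $f$) to drop that term and obtain the stated inequality, which is exactly what the paper does. This does not affect the validity of the proof of the lemma as stated, only the stronger equality you assert.
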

\begin{proof}{Proof}
	For each $u \in \cN$, let $G_u$ be a random variable whose value is equal to the increase in the 
	value of $S$ when $u$ is added to $S$ by Algorithm~\ref{alg:equiv_sample_greedy}.
	If $u$ is never added to $S$ by Algorithm~\ref{alg:equiv_sample_greedy}, then the value of $G_u$ is 
	simply $0$. Clearly,
	\[f(S) = f(\varnothing) + \sum \limits_{u \in \cN} G_u \geq \sum \limits_{u \in \cN} G_u \enspace,\]
	where $f(\varnothing) \geq 0$ follows from non-negativity of $f$.
	By the linearity of expectation, it suffices to show that
	\begin{equation} \label{eq:exp_G}
	\E{G_u} = q \cdot \E{X_u f(u \mid S_u)} \enspace.
	\end{equation}
	As in the proof of Lemma~\ref{lem:link}, let $\mathcal{E}_u$ be an arbitrary event specifying all random 
	decisions made by Algorithm~\ref{alg:equiv_sample_greedy} up until the iteration in which it considers 
	$u$ if $u$ is considered, or all random decisions made by Algorithm~\ref{alg:equiv_sample_greedy} 
	throughout its execution if $u$ is never considered. 
	By the law of total probability, since these events are disjoint, it is enough to prove that 
	Equality~(\ref{eq:exp_G}) holds when conditioned on every such event $\mathcal{E}_u$. 
	If $\mathcal{E}_u$ is an event that implies that Algorithm~\ref{alg:equiv_sample_greedy} does not 
	consider $u$, then, by conditioning on $\mathcal{E}_u$, we obtain
	\[\E{G_u \mid \mathcal{E}_u}  = 0 = q \cdot \E{0 \cdot  f(u \mid S_u) \mid \mathcal{E}_u} = q \cdot \E{X_u 
	f(u \mid S_u) \mid \mathcal{E}_u} \enspace.\]
	On the other hand, if $\mathcal{E}_u$ implies that Algorithm~\ref{alg:equiv_sample_greedy} does 
	consider $u$, then we observe that $S_u$ is a deterministic set given $\cE_u$. 
	Denoting this set by $S'_u$, we obtain
	\[\E{G_u \mid \mathcal{E}_u} = \Pr\left[ u \in S \mid \cE_u\right] \cdot f(u \mid S'_u) = q \cdot f(u \mid 
	S'_u) = 
	q \cdot \E{X_u  f(u \mid S_u) \mid \mathcal{E}_u} \enspace,\]
	where the second equality holds since an element considered by 
	Algorithm~\ref{alg:equiv_sample_greedy} is added to $S$ with probability $q$. 
\end{proof}

With these lemmata, we are now ready to prove Theorem~\ref{thm:centralized_alg} in the case of 
submodular (not necessarily linear) objectives.

\subsubsection{Proof of Theorem~\ref{thm:centralized_alg}, Submodular Objectives.}
We prove the first part of Theorem~\ref{thm:centralized_alg} concerning submodular functions in this 
section. The improved approximation guarantees for linear functions requires a few tighter lemmas, and 
so we prove the case of linear functions in the next section (Section~\ref{sec:linear-centralized-proof}).

Theorem~\ref{thm:centralized_alg}. 
As discussed earlier, Algorithms~\ref{alg:sample_greedy} and~\ref{alg:equiv_sample_greedy} have 
identical output distributions, and so it suffices to show that Algorithm~\ref{alg:equiv_sample_greedy} 
achieves the desired approximation ratios. 
Note that $q = \frac{1}{p+1}$, and therefore,
\begin{align*}
\E{f(S)} &\geq \E{f(S \cup \OPT)} - \sum \limits_{u \in \cN} \E{|O_u \setminus S| \cdot f(u \mid S_u)} 
&\text{(Lemma~\ref{lem:lower_bound_S})} \\
&\geq \E{f(S \cup \OPT)} - \frac{p}{p+1} \sum \limits_{u \in \cN} \E{X_u \cdot f(u \mid S_u)} 
&\text{(Lemma~\ref{lem:link})} \\
&\geq \E{f(S \cup \OPT)} - p \cdot \E{f(S)}\enspace. &\text{(Lemma~\ref{lem:exp_S})}
\end{align*}
If $f$ is monotone, then by monotonicity we have that $\E{f(S \cup \OPT)} \geq f( \OPT)$. Substituting 
this 
in the expression above yields 
\[ \E{f(S)} \geq f(\OPT) - p \cdot \E{f(S)} \]
and rearranging this expression yields the desired approximation ratio of $p+1$.
Suppose now that $f$ is non-monotone. Note that each element appears in $S$ with probability at most $q = 
\frac{1}{p+1}$, and hence, by Lemma~\ref{lem:buchbinder}, we have that 
$\E{f(S \cup \OPT)} \geq \left( 1 - \frac{1}{p+1} \right) f( \OPT)$. Substituting this into the inequalities 
above yields
\[ \E{f(S)} \geq \left( 1 - \frac{1}{p+1} \right)  f(\OPT) - p \cdot \E{f(S)} \enspace,  \]
and rearranging this expression yields the desired approximation ratio of $(p+1)^2/p$.

It remains to bound the number of oracle calls required by \algrd. Because $\E{ |\ground'| } = n \cdot q$, 
iterating over each 
$u \in \ground'$ and testing $S + u\in \cI$ and $f( u \mid S) > 0$ requires $O \left( nq \right)$ 
calls to the evaluation and independence oracle. Moreover, because $|S|$ increases at each iteration, the 
while loop (Line~\ref{line:test_sg} in Algorithm~\ref{alg:sample_greedy}) is repeated at most $k$ times. 
Using that $q = O(p^{-1})$, we have shown that Algorithm~\ref{alg:sample_greedy} requires $O(knq) = 
O\left(  nk/p \right)$ calls to evaluation and independence oracles in expectation.
\qedhere

\subsubsection{Proof of Theorem~\ref{thm:centralized_alg}, Linear Objectives.}\label{sec:linear-centralized-proof}
The method for proving the improved approximation guarantees for linear objectives uses essentially the same ideas as in the general submodular setting. 
However, further care is required to obtain the $p$-approximation guarantee.
In this section, we prove two lemmata which are 
analogous to Lemmata~\ref{lem:lower_bound_S} and~\ref{lem:link}, but tighter in the case of linear 
functions. 

We begin with the following lemma, which corresponds to Lemma~\ref{lem:lower_bound_S}. For every 
$u \in \ground$, let $Y_u$ be a random variable which takes the value $1$ if $u \in S$ and, in addition, 
$u$ does not belong to $O$ at the beginning of the iteration in which $u$ is considered. In every other 
case the value of $Y_u$ is $0$.

\begin{lemma} \label{lem:linear_lower_bound_S}
	$f(S) \geq f(\OPT) - \sum \limits_{u \in \ground} [|O_u| - Y_u] f(u)$.
\end{lemma}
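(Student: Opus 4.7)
The plan is to lower-bound $f(S)$ by tracking how $f(O)$ evolves through the iterations of Algorithm~\ref{alg:equiv_sample_greedy}. Since $O$ is initialized to $\OPT$, I will show by a per-iteration case analysis that the change $f(O_{\text{new}}) - f(O_{\text{old}})$ is at least $-[|O_u| - Y_u] f(u)$ whenever element $u$ is considered. Telescoping and combining with the deterministic bound $f(O_{\text{final}}) \leq f(S)$---established exactly as in the proof of Lemma~\ref{lem:lower_bound_S}, using properties \ref{itm:O_ind}--\ref{itm:not_considered} and the termination condition of the while loop to conclude that $f(v) \leq 0$ for every $v \in O_{\text{final}} \setminus S$, and then linearity to write $f(O_{\text{final}}) = f(S) + f(O_{\text{final}} \setminus S) \leq f(S)$---yields the claimed inequality.

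For the per-iteration change, there are four cases distinguished by whether $u \in O$ at the start of the iteration and whether the sampling coin at Line~\ref{line:sample_equiv} succeeds. In Case A ($u \in O$, added) and Case D ($u \notin O$, not added), the set $O$ is unchanged and $|O_u| = Y_u = 0$, so there is nothing to prove. In Case B ($u \in O$, not added), $O_u = \{u\}$ and $O$ loses exactly $u$, giving $f(O_{\text{new}}) - f(O_{\text{old}}) = -f(u) = -[|O_u| - Y_u] f(u)$. The substantive case is Case C ($u \notin O$, added): the algorithm sets $O_{\text{new}} = (O \cup \{u\}) \setminus O_u$ with $u \notin O_u$, so by linearity
\[
f(O_{\text{new}}) - f(O_{\text{old}}) = f(u) - f(O_u) \geq f(u) - |O_u| \cdot f(u) = -(|O_u| - 1) f(u),
\]
and since $Y_u = 1$ in Case C, this matches $-[|O_u| - Y_u] f(u)$.

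The only point requiring care is the inequality $f(v) \leq f(u)$ used in Case C, which holds because each $v \in O_u$ was a legitimate candidate for the greedy rule at $u$'s iteration. Indeed, $v \in O_u$ implies $v \in O \setminus S$ at that moment; properties~\ref{itm:O_ind} and~\ref{itm:S_sub_O} then give $v \in O \in \cI$ and $S + v \subseteq O \in \cI$, while property~\ref{itm:not_considered} gives $v \in \cN'$. So either $f(v) > 0$, in which case the greedy's preference for $u$ forces $f(v) \leq f(u)$, or $f(v) \leq 0 < f(u)$. This is precisely where linearity yields a gain over Lemma~\ref{lem:lower_bound_S}: we may sum the inequality $f(v) \leq f(u)$ over all of $O_u$ rather than only over $O_u \setminus S$, saving a term of $f(u)$ per Case~C iteration and producing the refined $|O_u| - Y_u$ coefficient that drives the eventual $p$-approximation.
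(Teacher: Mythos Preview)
Your proof is correct and follows essentially the same approach as the paper's: both establish $f(S) \geq f(O_{\text{final}})$ via the termination condition, then telescope the per-iteration change in $f(O)$, bounding the decrease at the iteration where $u$ is considered by $[|O_u| - Y_u] f(u)$ using the greedy comparison $f(v \mid S_u) \leq f(u \mid S_u)$ and linearity. Your four-case split is a slightly more explicit version of the paper's compact argument, and your direct justification of $f(v) \leq f(u)$ via properties \ref{itm:O_ind}--\ref{itm:not_considered} is equivalent to the paper's appeal to the observation made in the proof of Lemma~\ref{lem:lower_bound_S}.
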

\begin{proof}{Proof}
	The proof of Lemma~\ref{lem:lower_bound_S} begins by showing that $f(S) \geq f(O)$. This part of 
	the proof is of course still true. Thus, we only need to show 
	that
	\[	f(O)	\geq	f(\OPT) - \sum \limits_{u \in \ground} [|O_u| - Y_u] f(u)	\enspace.
	\]
	Recall that $O$ begins as equal to $\OPT$. Thus, to prove the last inequality it is enough to show 
	that the second term on its right hand side is an upper bound on the decrease in the value of $O$ 
	over time. In the rest of the proof we do this by showing that $[|O_u| - Y_u] f(u)$ is an upper bound 
	on the decrease in the value of $O$ in the iteration in which $u$ is considered, and is equal to $0$ 
	when $u$ is not considered at all. 
	
	Let us first consider the case that $u$ is not considered at all. In this case, by definition, $O_u = 
	\varnothing$ and $Y_u = 0$, which imply together $[|O_u| - Y_u] f(u) = 0 \cdot f(u) = 0$. Consider 
	now the case that $u$ is considered by Algorithm~\ref{alg:equiv_sample_greedy}. In this case, $O$ is 
	changed during the iteration in which $u$ is considered in two ways. First, the elements of $O_u$ 
	are removed from $O$, and second, $u$ is added to $O$ if it is added to $S$ and it does not 
	already belong to $O$. Thus, the decrease in the value of $O$ during this iteration can be written as
	\[	\sum_{v \in O_u} f(v) - Y_u \cdot f(u)	\enspace.	\]
	To see why this expression is upper bounded by $[|O_u| - Y_u] f(u)$, we recall that in the proof of 
	Lemma~\ref{lem:lower_bound_S} we showed that $f(v\mid S_u) \leq f(u\mid S_u)$ for every $v \in O_u$, 
	which implies, since $f$ is linear, that $f(v) \leq f(u)$ for every such element $v$. 
\end{proof}

We need one more lemma which corresponds to Lemma~\ref{lem:link}.

\begin{lemma} \label{lem:linear_link}
	Suppose that the sampling probability is $q = \frac{1}{p}$. Then,
	for every element $u \in \ground$,
	\begin{equation} \label{eq:linear_linking_inequality}
	\E{|O_u| - Y_u} \leq \frac{p - 1}{p} \cdot \E{X_u}\enspace.
	\end{equation}
\end{lemma}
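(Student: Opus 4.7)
The plan is to mirror the proof strategy of Lemma~\ref{lem:link} but with the tighter bookkeeping enabled by the indicator $Y_u$. As before, I would condition on an arbitrary event $\cE_u$ describing all random decisions made by Algorithm~\ref{alg:equiv_sample_greedy} up through the iteration in which $u$ is considered (or the entire run, if $u$ is never considered), and use the law of total probability. If $\cE_u$ implies that $u$ is not considered, then $X_u = 0$, $O_u = \varnothing$ and $Y_u = 0$, so both sides of \eqref{eq:linear_linking_inequality} vanish. Thus it suffices to prove, for every $\cE_u$ implying that $u$ is considered, the deterministic bound $\E{|O_u| - Y_u \mid \cE_u} \leq (p-1)/p$, since $X_u = 1$ on such $\cE_u$.

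For this core case I would split into two subcases based on whether $u$ lies in $O$ at the start of the iteration in which it is considered (note that this membership is determined by $\cE_u$). In the first subcase $u \in O$: with probability $q = 1/p$ the algorithm adds $u$ to $S$ (and to $O$, which does not actually change $O$), so $O_u = \varnothing$; since $u \in O$ originally, $Y_u = 0$ here by definition. With probability $1 - q$, $u$ is not added to $S$, and the $\cI$-branch sets $O_u = \{u\}$ while $Y_u = 0$ (as $u \notin S$). Averaging gives $\E{|O_u| - Y_u \mid \cE_u} = (1-q)\cdot 1 = (p-1)/p$.

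In the second subcase, $u \notin O$ at the start of the iteration. With probability $q$, $u$ is added to $S$ (and $O$); then $u \in S$ and $u \notin O$ originally, so $Y_u = 1$, and by $p$-extendibility combined with properties \ref{itm:O_ind} and \ref{itm:S_sub_O} of Lemma~\ref{lem:structural_properties} (so that $O$ is an extension of $S$ and $S + u \in \cI$), the minimal set $O_u$ needed to restore independence satisfies $|O_u| \leq p$. Hence $|O_u| - Y_u \leq p - 1$. With probability $1 - q$, the ``Otherwise'' branch applies and $u \notin O$, so $O_u = \varnothing$ and $Y_u = 0$. Averaging gives $\E{|O_u| - Y_u \mid \cE_u} \leq q(p-1) = (p-1)/p$.

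Combining the two subcases with the trivially-zero case and summing via the law of total probability yields $\E{|O_u| - Y_u} \leq \tfrac{p-1}{p} \cdot \Pr[u \text{ considered}] = \tfrac{p-1}{p}\cdot \E{X_u}$, which is the desired inequality. The main subtlety, and the only real departure from Lemma~\ref{lem:link}, is the bookkeeping introduced by $Y_u$: subtracting it precisely cancels the ``$u$ contributes to $O$ itself'' effect in the subcase $u \notin O$, which is what allows the sampling probability to be increased from $1/(p+1)$ to $1/p$ while keeping the same normalized right-hand side. I expect this $Y_u$-accounting to be the step most prone to slips, so I would be especially careful that both the definition of $Y_u$ and the algorithm's update to $O$ are invoked consistently in each subcase.
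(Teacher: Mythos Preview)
Your proposal is correct and follows essentially the same approach as the paper's proof: conditioning on the event $\cE_u$ determining the history up to the iteration in which $u$ is considered, handling the ``not considered'' case trivially, and then splitting into the two subcases $u \in O$ and $u \notin O$ at the start of that iteration, with the same arithmetic in each. The only (minor) difference is that the paper states Case~1 with an inequality rather than an equality, but your computation is correct and matches the paper's bound.
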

\begin{proof}{Proof}
	As in the proof of Lemma~\ref{lem:link}, let $\mathcal{E}_u$ be an arbitrary event specifying all 
	random decisions made by Algorithm~\ref{alg:equiv_sample_greedy} up until the iteration in which it 
	considers $u$ if $u$ is considered, or all random decisions made by 
	Algorithm~\ref{alg:equiv_sample_greedy} throughout its execution if it never considers $u$. By the 
	law of total probability, since these events are disjoint, it is enough to prove 
	Inequality~(\ref{eq:linear_linking_inequality}) conditioned on every such event $\mathcal{E}_u$. If 
	$\mathcal{E}_u$ implies that $u$ is not considered, then $|O_u|$, $X_u$ and $Y_u$ are all $0$ 
	conditioned on $\mathcal{E}_u$, and thus, the inequality holds as an equality. Thus, we may assume 
	in the rest of the proof that $\mathcal{E}_u$ implies that $u$ is considered by 
	Algorithm~\ref{alg:equiv_sample_greedy}. Notice that, conditioned on $\cE_u$, $X_u$ takes the 
	value $1$. Hence, Inequality~(\ref{eq:linear_linking_inequality}) reduces to 
	\[ \E{|O_u| - Y_u\mid \mathcal{E}_u} \leq \frac{p - 1}{p} \enspace. \]
	There are now two cases to consider. 
	
	\paragraph{Case 1}
	The first case is that $\mathcal{E}_u$ implies that $u \in O$ at 
	the beginning of the iteration in which Algorithm~\ref{alg:equiv_sample_greedy} considers $u$. In 
	this case $Y_u = 0$, and in addition, $O_u$ is empty if $u$ is added to $S$, and is $\{u\}$ if $u$ is 
	not added to $S$. As $u$ is added to $S$ with probability $\frac{1}{p}$, this gives
	\[\E{|O_u| - Y_u \mid \mathcal{E}_u} \leq \frac{1}{p} \cdot | \varnothing | + \left( 1- \frac{1}{p}\right) 
	\cdot
	|\{u\}| = \frac{p-1}{p} \enspace, \]
	and we are done. 
	
	\paragraph{Case 2}
	Consider now the case that $\mathcal{E}_u$ implies that $u \not \in O$ at the 
	beginning of the iteration in which Algorithm~\ref{alg:equiv_sample_greedy} considers $u$. In this 
	case, if $u$ is not added to $S$, then we get $Y_u = 0$ and $O_u = \varnothing$. In contrast, if $u$ 
	is added to $S$, then $Y_u = 1$ by definition and $|O_u| \leq p$ as in the proof of Lemma~\ref{lem:link}. As $u$ is added to $S$ with probability 
	$\frac{1}{p}$, we get in this case
	\[ \E{|O_u| - Y_u | \mid \mathcal{E}_u} \leq \frac{1}{p} \cdot (p - 1) + \left( 1- \frac{1}{p}\right) \cdot
	|\varnothing| = \frac{p - 1}{p} \enspace.
	\qedhere
	\]
\end{proof}

We are now ready to prove the guarantee of Theorem~\ref{thm:centralized_alg} for linear objectives.
\begin{proof}{Proof of Theorem~\ref{thm:centralized_alg} for linear objectives.}
	We prove here that the approximation ratio guaranteed by Theorem~\ref{thm:centralized_alg} for 
	linear objectives is obtained by Algorithm~\ref{alg:sample_greedy} for $q = 1/p$. As discussed 
	earlier, Algorithms~\ref{alg:sample_greedy} and~\ref{alg:equiv_sample_greedy} have identical output 
	distributions, and so it suffices to show that Algorithm~\ref{alg:equiv_sample_greedy} achieves this 
	approximation ratio. Since we assume $q = \frac{1}{p}$,
	\begin{align*}
	\E{f(S)} &\geq f(\OPT) - \sum \limits_{u \in \ground} \bE[|O_u| - Y_u] f(u) 
	&\text{(Lemma~\ref{lem:linear_lower_bound_S})} \\
	&\geq f(\OPT) - \frac{p-1}{p}\sum \limits_{u \in \ground} \bE[X_u] f(u) 
	&\text{(Lemma~\ref{lem:linear_link})} 
	\\
	&\geq f(\OPT) - (p - 1) \E{f(S)}\enspace. &\text{(Lemma~\ref{lem:exp_S})}
	\end{align*}
	Rearranging the above inequality completes the proof, as the oracle complexity is unchanged.
\end{proof}

\subsection{Streaming Algorithm} \label{sec:streaming_alg}

In this section, we present \AlgSampling, a subsampling algorithm for the streaming setting.
\AlgSampling has two parameters: a sampling probability $q \in (0,1]$ and an acceptance parameter 
$c>0$.
At a given iteration $i = 1, \dotsc, n$, the arriving element $u_i$ is considered for exchange with probability 
$q$, and rejected without being considered for an exchange with probability $1 - q$. 
This step acts as an independent subsampling of elements in the stream, in an analogous manner to 
the subsampling in \algrd.
If the element $u_i$ is considered for exchange, a subroutine \ExchangeAlg 
\footnote{The subroutine {\ExchangeAlg} has appeared in a previous work \citep{CGQ15} as a method 
for exchanging in a $p$-matchoid.}
produces a set $U_i \subseteq S_i$ of low marginal contribution such that 
$\left( S \setminus U_i \right) \cup \{u_i\}$ is independent.
If the marginal contribution of adding $u_i$ to the current solution is large enough compared to the 
value of the elements of $U$, then $u$ is added to the solution and the elements of $U$ are removed. 
\AlgSampling and the subroutine \ExchangeAlg are presented here as Algorithms~\ref{alg:actual} and~\ref{alg:exchange_alg}, respectively.

\IncMargin{0.5em}
\SetKwIF{With}{OtherwiseWith}{Otherwise}{with}{do}{otherwise with}{otherwise}{end}
\begin{algorithm}[htb!]
	\caption{\AlgSampling$(q, f, \cI_1, \dotsc, \cI_m)$} \label{alg:actual}
		\SetAlgoLined
	\SetInd{0.5em}{0.1em}
	\Indp
	\DontPrintSemicolon
	Let $S_0 \gets \varnothing$.\\
	\For{every arriving element $u_i$}
	{
		Let $S_i \gets S_{i-1}$.\\
		\With{probability $q$\label{line:rand_consider}}
		{
			Let $U_i \gets {\ExchangeAlg}(S_{i-1}, u_i)$.\\
			\lIf{$f(u_i \mid S_{i-1}) \geq (1 + c) \cdot f(U_i : S_{i-1})$}
			{
				let $S_i \gets (S_{i-1} \setminus U_i) \cup \{ u_i \}$.
			}
		}
	}
	\Return{$S_n$}.
\end{algorithm}
\DecMargin{0.5em}

\IncMargin{0.5em}
\begin{algorithm}[htb!]
	\caption{{\ExchangeAlg} $(S, u)$}\label{alg:exchange_alg}
		\SetAlgoLined
	\SetInd{0.5em}{0.1em}
	\Indp
	\DontPrintSemicolon
	Let $U \gets \varnothing$.\\
	\For{$\ell = 1$ \KwTo $m$}
	{
		\If{$(S + u) \cap \cN_\ell \not \in \cI_\ell$}
		{
			Let $X_\ell \gets \{x \in S \mid ((S - x + u) \cap \cN_\ell) \in \cI_\ell\}$.\\
			Let $x_\ell \gets \arg \min_{x \in X_\ell} f(x : S)$. \label{line:tie_breaking}\\
			Add $x_\ell$ to $U$.
		}
	}
	\Return{U}.
\end{algorithm}
\DecMargin{0.5em}

\AlgSampling adds an element $u$ to the current solution if two conditions are satisfied:
first, the element is randomly sampled from the stream in Line~\ref{line:rand_consider} 
and second, the element has sufficient marginal contribution.
These two conditions are checked in this order because it is more computationally efficient as it avoids 
unnecessary oracle calls.
However, the order that these conditions are checked may be swapped without affecting the 
distribution of outcomes of the algorithm.
In fact, it is easier to analyze the algorithm when these conditions are reversed.
It is also convenient to assume that elements which have sufficiently large marginal contributions but 
are not subsampled from the stream are put into a set $R$.
We present Algorithm~\ref{alg:analysis} with these changes for the purpose of analysis. 

\IncMargin{0.5em}
\begin{algorithm}
	\caption{Equivalent-\AlgSampling$(q, f, \cI_1, \dotsc, \cI_m)$} \label{alg:analysis}
		\SetAlgoLined
	\SetInd{0.5em}{0.1em}
	\Indp
	\DontPrintSemicolon
	Let $S_0 \gets \varnothing$ and $R \gets \varnothing$.\\
	\For{every arriving element $u_i$}
	{
		Let $S_i \gets S_{i-1}$.\\
		Let $U_i \gets {\ExchangeAlg}(S_{i-1}, u_i)$.\\
		\If{$f(u_i \mid S_{i-1}) \geq (1 + c) \cdot f(U_i : S_{i-1})$}
		{
			\lWith{probability $q$}
			{
			Let $S_i \gets (S_{i-1} \setminus U_i) \cup \{ u_i \}$.
			}
			\lOtherwise
			{
			Add $u_i$ to $R$.
			}
		}
	}
	\Return{$S_n$}.
\end{algorithm} 
\DecMargin{0.5em}

We now formally show that \AlgSampling and Algorithm~\ref{alg:analysis} have the same distribution of 
returned sets.
The proof of equivalence in the streaming setting is simpler than in the offline setting.
This is due to the fact that an ordering of the ground set does not need to be chosen by the 
algorithm{\textemdash}it is already determined by the order of the stream.
We assume that the procedure \ExchangeAlg uses a tie-breaking rule at 
Line~\ref{line:tie_breaking} of Algorithm~\ref{alg:exchange_alg} which depends only on the set 
$S$ and the element $u$. For simplicity, we also assume that this rule is deterministic, so that the set 
$U_i$ is deterministic conditioned on the current solution 
$S_{i-1}$ and the new element $u_i$ which are given as input to \ExchangeAlg. This is useful 
because it implies that the event $f(u_i \mid S_{i-1}) \geq (1 + c) \cdot f(U_i : S_{i-1})$ is also 
deterministic conditioned on  $S_{i-1}$ and $u_i$. We note, however, that the proof can be easily 
made to work also with a randomized tie-breaking rule in the procedure \ExchangeAlg, as long as 
this rule only depends on the set $S$ and the element $u$.

\begin{lemma} \label{lemma:streaming_equiv}
	Let $S_i^1$ and $S_i^2$ be the random solution sets maintained by \AlgSampling and 
	Equivalent-\AlgSampling at iterations $i=0, 1, \dots ,n$, respectively.
	At each iteration $i=0, 1,  \dots, n$, the random sets $S_i^1$ and $S_i^2$ have the same 
	distribution.
\end{lemma}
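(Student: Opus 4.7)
{Proof Plan}
The plan is to couple the two algorithms through a shared sequence of independent biased random bits $b_1, b_2, \dots, b_n \in \{0,1\}$, where $b_i = 1$ with probability $q$, and argue by induction on $i$ that, conditioned on any realization of the bits, both algorithms produce identical solution sets at every iteration. Since the bit sequence has the same joint distribution in both algorithms, the law of total probability will then yield the claim.

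First I would set things up by noting that each algorithm consumes exactly one fresh random bit per iteration: in \AlgSampling it is the one driving Line~\ref{line:rand_consider}, and in Equivalent-\AlgSampling it is the one driving the \textbf{With}/\textbf{Otherwise} branch. I would denote these bits by $b_i$ and, because they are drawn independently of the history, assume without loss of generality that they are generated up front and shared between the two executions. The crucial deterministic observation that makes the coupling go through is that the subroutine \ExchangeAlg is assumed to use a deterministic tie-breaking rule depending only on $(S, u)$, so conditioned on $S_{i-1}^1 = S_{i-1}^2$ and on the identity of $u_i$, both executions compute the same $U_i$ and therefore agree on whether the marginal test $f(u_i \mid S_{i-1}) \geq (1+c)\cdot f(U_i : S_{i-1})$ holds.

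The induction base is immediate from $S_0^1 = S_0^2 = \varnothing$. For the inductive step, assume $S_{i-1}^1 = S_{i-1}^2$, call their common value $A$, and case on $b_i$. If $b_i = 0$, then \AlgSampling skips the entire exchange step and sets $S_i^1 = A$; Equivalent-\AlgSampling computes $U_i$ and either fails the marginal test and sets $S_i^2 = A$, or passes it but diverts $u_i$ into $R$ and again sets $S_i^2 = A$. If $b_i = 1$, then both algorithms compute the same $U_i$, evaluate the same marginal test against the same $A$, and either perform the identical swap $(A \setminus U_i)\cup\{u_i\}$ or retain $A$, according to the same outcome of the test. In both cases $S_i^1 = S_i^2$, completing the induction.

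The main subtlety (and essentially the only thing one must be careful about) is justifying that the placement of the subsampling bit relative to the marginal-gain test does not alter the output distribution. This is precisely where the deterministic nature of \ExchangeAlg given $(S_{i-1}, u_i)$ is used: it guarantees that the event on which the swap would occur is measurable with respect to the past and $u_i$ alone, so swapping the order of the two checks (first sample, then test; versus first test, then sample) is immaterial. The same coupling argument would go through with a randomized tie-breaking rule in \ExchangeAlg provided its randomness depends only on $(S, u)$, by augmenting each $b_i$ with the independent coins used inside \ExchangeAlg, but this generalization is not needed here.
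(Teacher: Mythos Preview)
Your proposal is correct. The key ingredients---induction on $i$, the base case $S_0^1=S_0^2=\varnothing$, the observation that {\ExchangeAlg} is deterministic in $(S_{i-1},u_i)$ so both algorithms see the same $U_i$ and the same outcome of the marginal test, and the fact that the Bernoulli$(q)$ bit is independent of the past---are exactly what the paper relies on.

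The presentation differs slightly. You argue by \emph{coupling}: pre-generate the bits $b_1,\dotsc,b_n$, share them across the two runs, and prove pathwise equality $S_i^1=S_i^2$ for every realization. The paper instead shows that the one-step conditional laws agree, i.e., $\Pr[S_i^1=B\mid S_{i-1}^1=A]=\Pr[S_i^2=B\mid S_{i-1}^2=A]$ for every $A,B$, and then invokes the law of total probability together with the inductive hypothesis. Amusingly, your coupling argument is precisely the style the paper adopts for the \emph{offline} analogue (Lemma~\ref{lemma:offline_equiv}); for the streaming lemma it switches to the transition-probability argument, noting that the fixed stream order makes this route especially short. Either approach is fine; yours is arguably more transparent about \emph{why} the order of ``sample then test'' versus ``test then sample'' is immaterial, while the paper's avoids having to say that bits unused by Equivalent-{\AlgSampling} (when the marginal test fails) can simply be discarded without affecting the coupling.
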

\begin{proof}{Proof} 
	We prove the lemma by induction on the iteration $i$. For $i = 0$ the lemma is trivial since both 
	$S_0^1$ and $S_0^2$ are initialized to be empty.
	Suppose now that $S_k^1$ and $S_k^2$ have the same distributions for all iterations $k = 0, 1, 
	\dots, 
	i-1$ and let us prove that $S_i^1$ and $S_i^2$ also share the same distribution. In fact, we show 
	the even stronger property that for every set $A \subseteq \cN$ such that $\Pr[S_{i-1}^1 = A] = 
	\Pr[S_{i-1}^2 = A] > 0$, the sets $S_i^1$ and $S_i^2$ have the same distributions conditioned on 
	the events $S_{i-1}^1 = A$ and $S_{i-1}^2 = A$, respectively.
	
	Let $u_i$ be the $i$th element in the stream, encountered by both algorithms.
	Recall that since we condition on $S_{i - 1}^1 = A$ or $S_{i - 1}^2 = A$, the same set $U_i$ is 
	chosen by both algorithms (in the case of \AlgSampling, we mean here the set that is chosen if 
	the algorithm decides to pick an element in this iteration).
	Suppose now that $A$ is such that $f(u_i \mid S_{i-1}) \geq (1 + c) \cdot f(U_i : S_{i-1})$.
	Then in \AlgSampling, the probability of updating $S_{i}^1 \gets (A \setminus U_i) \cup \{u_i\}$ is 
	$q$ and the probability of keeping $S_{i}^1 \gets S_{i-1}^1$ is $1 - q$.
	This is also true for Algorithm~\ref{alg:analysis}; that is, 
	the probability of updating $S_{i}^2 \gets (A \setminus U_i) \cup \{u_i\}$ is $q$ and the probability 
	of keeping $S_{i}^2 \gets S_{i-1}^2$ is $1 - q$.
	This is due to the fact that sampling and exchange procedures are independent in both 
	algorithms.
	If $A$ is such that $f(u_i \mid S_{i-1}) < (1 + c) \cdot f(U_i : S_{i-1})$, then both algorithms keep 
	the current solution (that is, $S_{i} \gets S_{i-1}$) with probability 1.
	Thus, we have shown that for every $A , B \subseteq \ground$, if $\Pr[S_{i - 1}^1 = A] > 0$, then
	\[
	\CondProb{S_i^1 = B}{S_{i-1}^1 = A} 
	=
	\CondProb{S_i^2 = B}{S_{i-1}^2 = A} 
	\enspace.
	\]
	The lemma now follows by the law of total probability and the inductive hypothesis.
\end{proof}

Now that the equivalence of \AlgSampling and Algorithm~\ref{alg:analysis} has been established, we are guaranteed that any 
approximation guarantee for Algorithm~\ref{alg:analysis} also holds for \AlgSampling.
Accordingly, in the remainder of the section, we analyze Algorithm~\ref{alg:analysis}.
The following technical lemma shows that, for every two sets $A$ and $B$, the sum of the marginal contributions of the
elements of $B$ (as they arrive) to the already arrived elements of $A$ is larger than the total marginal 
contribution of $B$ to $A$.
\newcommand{\obsTechnical}{For every two sets $A,B \subseteq \cN$, $f(B \mid A \setminus B) \leq 
f(B : A)$.}
\begin{observation} \label{obs:technical}
	\obsTechnical
\end{observation}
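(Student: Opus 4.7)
The plan is to unfold the left-hand side into a telescoping sum indexed by stream order, and then apply submodularity element-by-element to each term. Let me order the elements of $B$ by the time they arrive in the stream: write $B = \{b_1, b_2, \dotsc, b_k\}$ where $b_j = u_{i_j}$ and $i_1 < i_2 < \dotsb < i_k$. I would then chain-decompose the joint marginal along this order:
\[
f(B \mid A \setminus B) = \sum_{j=1}^{k} f\bigl(b_j \,\bigm|\, T_j\bigr), \quad \text{where } T_j \triangleq (A \setminus B) \cup \{b_1, \dotsc, b_{j-1}\}.
\]

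Next, for each $j$ I would identify the conditioning set that appears on the right-hand side: by definition of the shorthand, $f(b_j : A) = f\bigl(b_j \mid S_j\bigr)$ where $S_j \triangleq A \cap \{u_1, \dotsc, u_{i_j - 1}\}$ is the portion of $A$ that has arrived strictly before $b_j$. The key containment is $S_j \subseteq T_j$: any $u \in S_j$ lies in $A$ and has stream-index less than $i_j$; if $u \in B$ then $u$ must be one of $b_1, \dotsc, b_{j-1}$ (the elements of $B$ arriving before $b_j$), so $u \in T_j$; otherwise $u \in A \setminus B \subseteq T_j$. With $S_j \subseteq T_j$ and $b_j \notin T_j$, submodularity (in the diminishing returns form \eqref{eq:def_submodular}) immediately yields
\[
f\bigl(b_j \mid T_j\bigr) \leq f\bigl(b_j \mid S_j\bigr).
\]

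Summing this inequality over $j = 1, \dotsc, k$ gives
\[
f(B \mid A \setminus B) = \sum_{j=1}^k f(b_j \mid T_j) \leq \sum_{j=1}^k f(b_j \mid S_j) = \sum_{j=1}^k f(b_j : A) = f(B : A),
\]
which is the claim. There is no real obstacle here; the only subtlety is being careful about the possibility that $A$ and $B$ overlap, which is precisely what forces the index-by-arrival-order decomposition together with the containment argument for $S_j \subseteq T_j$ — a naive attempt to condition on $A$ directly would fail when elements of $B \cap A$ are present.
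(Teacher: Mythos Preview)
Your proof is correct and follows essentially the same approach as the paper: order $B$ by arrival, telescope $f(B \mid A \setminus B)$ into per-element marginals, and apply submodularity termwise using the containment $A \cap \{u_1,\dotsc,u_{i_j-1}\} \subseteq (A \setminus B) \cup \{b_1,\dotsc,b_{j-1}\}$. The paper's write-up leaves this containment implicit, whereas you spell it out explicitly (and correctly handle the overlap $A \cap B$), but the argument is the same.
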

\begin{proof}{Proof}
	Let us denote the elements of $B$ by $u_{i_1}, u_{i_2}, \dotsc, u_{i_{|B|}}$, where $i_1 < i_2 < \dotsb < 
	i_{|B|}$. Then,
	\begin{align*}
	f(B \mid A \setminus B)
	&= \sum_{j = 1}^{|B|} f(u_{i_j} \mid (A \cup B) \setminus \{u_{i_j}, u_{i_{j + 1}} \dotsc, u_{i_{|B|}}\}) \\
	&\leq \sum_{j = 1}^{|B|} f(u_{i_j} \mid A \setminus \{u_{i_j}, u_{i_j + 1} \dotsc, u_n\})\\
	&= \sum_{j = 1}^{|B} f(u_{i_j} \mid A \cap \{u_1, u_2, \dotsc, u_{i_j - 1}\}) \\
	&= \sum_{j = 1}^{|B|} f(u_{i_j} : A) \\
	&= f(B : A) \enspace,
	\end{align*}
	where the inequality follows from the submodularity of $f$. 
\end{proof}

Let us denote from this point on by $A$ the set of elements that ever appeared in the solution maintained by 
Algorithm~\ref{alg:analysis}---formally, $A = \bigcup_{i=1}^n S_i$. The following lemma and corollary 
show that the elements of $A \setminus S_n$ cannot contribute much to the output solution $S_n$ of 
Algorithm~\ref{alg:analysis}, and thus, their absence from $S_n$ does not make $S_n$ much less 
valuable than $A$.
\begin{lemma} \label{lem:marginals_sum}
	$f(A \setminus S_n : S_n) \leq \frac{f(S_n)}{c}$.
\end{lemma}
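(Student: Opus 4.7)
My plan is to combine a telescoping argument over exchange iterations with Observation~\ref{obs:technical}. Let $I = \{i : S_i \neq S_{i-1}\}$ denote the set of iterations at which an exchange occurred.

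First, I would observe the structural decomposition $A \setminus S_n = \bigsqcup_{i \in I} U_i$: each element of $A$ enters $S$ exactly once (as the new element $u_i$ of some exchange) and leaves at most once (as a member of some $U_{i'}$ with $i' > i$), since after removal it never returns to $S$. Hence an element is in $A \setminus S_n$ iff it was removed by exactly one exchange.

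Next, I would lower-bound each increment $f(S_i) - f(S_{i-1})$ for $i \in I$. Writing $T_i = S_{i-1} \setminus U_i$, so that $S_i = T_i + u_i$ and $S_{i-1} = T_i \cup U_i$, one has
\[
f(S_i) - f(S_{i-1}) = f(u_i \mid T_i) - f(U_i \mid T_i) \geq f(u_i \mid S_{i-1}) - f(U_i : S_{i-1}),
\]
using submodularity on the first term and Observation~\ref{obs:technical} (with $B=U_i \subseteq S_{i-1}$) on the second. The acceptance test $f(u_i \mid S_{i-1}) \geq (1+c) f(U_i : S_{i-1})$ then yields $f(S_i) - f(S_{i-1}) \geq c \cdot f(U_i : S_{i-1})$. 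Summing over $i \in I$ and using $f(\varnothing) \geq 0$,
\[
f(S_n) \;\geq\; c \sum_{i \in I} f(U_i : S_{i-1}).
\]

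The final step, and the main obstacle, is to convert $\sum_{i \in I} f(U_i : S_{i-1})$ into the claimed quantity $f(A \setminus S_n : S_n) = \sum_{i \in I} f(U_i : S_n)$. Pointwise, for $u_j \in U_i$, submodularity gives the \emph{wrong} direction: because $S_n \cap \{u_1, \ldots, u_{j-1}\} \subseteq S_{i-1} \cap \{u_1, \ldots, u_{j-1}\}$ (any index-${<}j$ element surviving to time $n$ was present at time $i-1$), diminishing returns yields $f(u_j : S_n) \geq f(u_j : S_{i-1})$. So a naive elementwise bound fails, and one must close the gap by a finer global accounting. The approach I would attempt is an induction on $t$ with the hypothesis $c \cdot f(A_t \setminus S_t : S_t) \leq f(S_t)$: at each exchange iteration $t$, one must show that the combined cost of (a) charging the newly-removed $U_t$ via $c f(U_t : S_t)$ and (b) the ``prefix-shrinkage'' increments $\Delta_t(u) = f(u : S_t) - f(u : S_{t-1})$ summed over $u \in (A_{t-1} \setminus S_{t-1}) \cup U_t$, is absorbed by $f(S_t) - f(S_{t-1})$. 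The slack in the acceptance condition, $f(u_t \mid S_{t-1}) - (1+c) f(U_t : S_{t-1})$, is the only resource available to cover these increments, so verifying that it always suffices — by invoking the submodular identity $f(u \mid A) - f(u \mid B) = f(B \mid A) - f(B \mid A+u)$ to express each $\Delta_t(u)$ in terms of the removed $U_t^{<\mathrm{idx}(u)}$ — is the crux of the argument and the step I expect to be most technically delicate.
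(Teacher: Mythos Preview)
Your first two steps coincide with the paper's argument exactly: the partition $A\setminus S_n=\bigsqcup_{i\in I}U_i$ and the per-iteration bound $f(S_i)-f(S_{i-1})\ge c\cdot f(U_i:S_{i-1})$ (via submodularity, Observation~\ref{obs:technical}, and the acceptance test) are precisely what the paper proves, and telescoping gives $f(S_n)\ge c\sum_{i} f(U_i:S_{i-1})$. Where you diverge is that you correctly flag the final step as problematic: for $u_j\in U_i$ one has $S_n\cap\{u_1,\dots,u_{j-1}\}\subseteq S_{i-1}\cap\{u_1,\dots,u_{j-1}\}$, so submodularity gives $f(u_j:S_n)\ge f(u_j:S_{i-1})$, the wrong direction for replacing $S_{i-1}$ by $S_n$. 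The paper does \emph{not} address this; it simply writes $\sum_{u_i\in A}f(U_i:S_n)\le\sum_{u_i\in A}\tfrac{f(S_i)-f(S_{i-1})}{c}$ and cites ``the inequalities above,'' which only yield the $S_{i-1}$ version. So you have identified a genuine gap in the paper's own proof rather than overlooked a step.

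Your inductive repair is therefore extra work the paper never does, and as sketched it looks fragile: there is no evident reason the slack in a single acceptance test should dominate the accumulated prefix-shrinkage increments from all previously removed elements. A more robust fix is to observe that the inequality the argument \emph{does} establish, namely $\sum_{u_j\in A\setminus S_n}f(u_j:S_{d(j)-1})\le f(S_n)/c$, already suffices for every downstream use. Lemma~\ref{lem:raw} invokes only this version anyway, and Corollary~\ref{cor:sets_ratio} follows from it via $f(A\setminus S_n\mid S_n)\le f(A\setminus S_n:A)\le\sum_{u_j\in A\setminus S_n}f(u_j:S_{d(j)-1})$, using Observation~\ref{obs:technical} for the first inequality and the inclusion $S_{d(j)-1}\cap\{u_1,\dots,u_{j-1}\}\subseteq A\cap\{u_1,\dots,u_{j-1}\}$ for the second. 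So rather than the delicate induction, the cleanest resolution is to weaken the lemma's statement.
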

\begin{proof}{Proof}
Fix an element $u_i \in A$, then
\begin{align*}
f(S_i) - f(S_{i-1})
&= f(S_{i-1} \setminus U_i + u_i) - f(S_{i-1}) \\
&= f(u_i \mid S_{i-1} \setminus U_i) - f(U_i \mid S_{i-1} \setminus U_i) \\
&\geq f(u_i \mid S_{i-1}) - f(U_i : S_{i-1}) \\
&\geq c \cdot f(U_i : S_{i-1}) 
\enspace,
\end{align*}
where the first inequality follows from the submodularity of $f$ and Observation~\ref{obs:technical}, 
and the second inequality holds since the fact that Algorithm~\ref{alg:analysis} accepted $u_i$ into its 
solution implies $f(u_i \mid S_{i-1}) \geq (1 + c) \cdot f(U_i : S_{i-1})$.

Because every element of $A \setminus S_n$ has been removed exactly once from 
the solution of Algorithm~\ref{alg:analysis}, the sets $U_i$ such that $u_i \in A$ form a disjoint 
partition of $A \setminus S_n$. Thus,
\[
f(A \setminus S_n : S_n)
=
\sum_{u_i \in A} f(U_i : S_n)
\leq
\sum_{u_i \in A} \frac{f(S_i) - f(S_{i-1})}{c}
=
\frac{f(S_n) - f(\varnothing)}{c}
\leq
\frac{f(S_n)}{c}
\enspace,
\]
where the first inequality follows from the inequalities above, the second equality holds 
since $S_i = S_{i-1}$ whenever $u_i \not \in A$ and the second inequality follows from the 
non-negativity of $f$. 
\end{proof}

\begin{corollary} \label{cor:sets_ratio}
	$f(A) \leq \frac{c + 1}{c} \cdot f(S_n)$.
\end{corollary}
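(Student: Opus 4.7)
The plan is to bound $f(A)$ by splitting off $f(S_n)$ and controlling the extra contribution coming from the elements that were once in the solution but have since been discarded, i.e., the set $A \setminus S_n$. Concretely, I would start from the identity
\[
f(A) = f(S_n) + f(A \setminus S_n \mid S_n),
\]
so the whole task reduces to bounding $f(A \setminus S_n \mid S_n)$ by $f(S_n)/c$, after which the corollary follows by simple arithmetic.

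To bound $f(A \setminus S_n \mid S_n)$, I would bridge the $f(\cdot \mid \cdot)$ notation and the streaming $f(\cdot : \cdot)$ notation in two steps. First, apply Observation~\ref{obs:technical} with $B = A \setminus S_n$ (and the observation's ``$A$'' taken to be our set $A$), noting that $A \setminus B = S_n$, to obtain
\[
f(A \setminus S_n \mid S_n) \leq f(A \setminus S_n : A).
\]
Second, since $S_n \subseteq A$, for every $u_i \in A \setminus S_n$ we have $S_n \cap \{u_1, \dots, u_{i-1}\} \subseteq A \cap \{u_1, \dots, u_{i-1}\}$, so submodularity gives $f(u_i : A) \leq f(u_i : S_n)$. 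Summing over $u_i \in A \setminus S_n$ yields $f(A \setminus S_n : A) \leq f(A \setminus S_n : S_n)$.

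Chaining these with Lemma~\ref{lem:marginals_sum} gives
\[
f(A \setminus S_n \mid S_n) \leq f(A \setminus S_n : A) \leq f(A \setminus S_n : S_n) \leq \frac{f(S_n)}{c},
\]
and substituting back into the decomposition produces
\[
f(A) \leq f(S_n) + \frac{f(S_n)}{c} = \frac{c+1}{c} \cdot f(S_n),
\]
as desired. The only subtle point I anticipate is keeping track of the two flavors of marginal contribution: Observation~\ref{obs:technical} converts a ``set-wise'' marginal into a stream-ordered sum but conditions on $A$ rather than $S_n$, and the extra submodularity step is exactly what is needed to replace $A$ by $S_n$ in the conditioning so that Lemma~\ref{lem:marginals_sum} becomes directly applicable.
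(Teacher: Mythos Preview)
Your proof is correct and follows essentially the same approach as the paper. The paper's version is slightly more direct: it applies Observation~\ref{obs:technical} with the observation's ``$A$'' taken to be $S_n$ (rather than the full set $A$), which immediately yields $f(A \setminus S_n \mid S_n) \leq f(A \setminus S_n : S_n)$ and makes your extra submodularity step unnecessary.
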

\begin{proof}{Proof} Observe that
\[ f(A) 
=  f(A \setminus S_n \mid S_n) + f(S_n) 
\leq f(A \setminus S_n : S_n) + f(S_n) 
\leq \frac{f(S_n)}{c} + f(S_n)
= \frac{c + 1}{c} \cdot f(S_n) \enspace,\]
where the first equality follows from $S_n \subseteq A$, 
the first inequality follows from Observation~\ref{obs:technical},
and the second inequality follows from Lemma~\ref{lem:marginals_sum}.
\end{proof}

Our next objective is to show that the value of the elements of the optimal solution that do not belong 
to $A$ is not too large compared to the value of $A$ itself. To this end, we need a mapping from the 
elements of the optimal solution to elements of $A$. Such a mapping is given by 
Proposition~\ref{prop:mapping}. However, before stating Proposition~\ref{prop:mapping}, we need to present a simplification given by 
Reduction~\ref{red:exact}.
\begin{reduction} \label{red:exact}
	For the sake of analyzing the approximation ratio of Algorithm~\ref{alg:analysis}, one may assume 
	that every element $u \in \cN$ belongs to \emph{exactly} $p$ out of the $m$ ground sets $\cN_1, 
	\cN_2, \dotsc, \cN_m$ of the matroids defining the $p$-matchoid $(\ground, \cI)$.
\end{reduction}
\begin{proof}{Proof}
For every element $u \in \cN$ that belongs to the ground sets of only $p' < p$ out of the $m$ matroids 
$(\cN_1, \cN_1), (\cN_2, \cN_2), \dotsc, (\cN_m, \cI_m)$, we can add $u$ to $p - p'$ additional matroids 
as a free element (\ie, an element whose addition to an independent set always keeps the set 
independent). One can observe that the addition of $u$ to these matroids does not affect the behavior 
of Algorithm~\ref{alg:analysis} at all, but makes $u$ obey the technical property of belonging to 
\emph{exactly} $p$ out of the ground sets $\cN_1, \cN_2, \dotsc, \cN_m$.
\end{proof}

From this point on we implicitly make the assumption allowed by Reduction~\ref{red:exact}. In 
particular, the proof of Proposition~\ref{prop:mapping} relies on this assumption. To state the proposition, we still need some additional notation. For every $1 \leq i \leq n$, we define
\[
d(i)
=
\begin{cases}
1 + \max \{i \leq j \leq n \mid u_i \in S_j\} & \text{if $u_i \in A$} \enspace,\\
i & \text{otherwise} \enspace.
\end{cases}
\]
In general, $d(i)$ is the index of the element whose arrival made Algorithm~\ref{alg:analysis} remove 
$u_i$ from its solution. Two exceptions to this rule are as follows. If $u_i$ was never added to the 
solution, then $d(i) = i$; and if $u_i$ was never removed from the solution, then $d(i) = n + 1$. 
\newcommand{\propMapping}{%
	For every set $T \in \cI$ which does not include elements of $R$, there exists a mapping $\phi_T$ 
	from elements of $T$ to multi-subsets of $A$ such that
	\begin{itemize}
		\item every element $u \in S_n$ appears at most $p$ times in the multi-sets of $\{\phi_T(u) \mid u 
		\in T\}$.
		\item every element $u \in A \setminus S_n$ appears at most $p - 1$ times in the multi-sets of 
		$\{\phi_T(u) \mid u \in T\}$.
		\item every element $u_i \in T \setminus A$ obeys
		$
		f(u_i \mid S_{i-1})
		\leq
		(1 + c) \cdot \sum_{u_j \in \phi_T(u_i)} f(u_j : S_{d(j) - 1})
		$.
		\item every element $u_i \in T \cap A$ obeys
		$
		f(u_i \mid S_{i-1})
		\leq
		f(u_j : S_{d(j)-1})
		$
		for every
		$
		u_j \in \phi_T(u_i)
		$,
		and the multi-set $\phi_T(u_i)$ contains exactly $p$ elements (including repetitions).
\end{itemize}}
\begin{proposition} \label{prop:mapping}
	\propMapping
\end{proposition}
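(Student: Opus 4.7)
My plan is to construct $\phi_T$ by first defining, for each matroid $\ell \in [m]$, an injection $\pi_\ell : T \cap \cN_\ell \to A \cap \cN_\ell$ with suitable exchange-compatibility properties, and then setting $\phi_T(u_i) \triangleq \{\pi_\ell(u_i) : \ell \in [m],\ u_i \in \cN_\ell\}$ as a multiset. Because every $u \in \cN$ belongs to exactly $p$ matroid ground sets by Reduction~\ref{red:exact}, this yields $|\phi_T(u_i)| = p$ for every $u_i \in T$. Injectivity of each $\pi_\ell$, combined with the fact that each $u_j \in A$ lies in at most $p$ of the $\cN_\ell$'s, immediately gives the at-most-$p$ multiplicity bound. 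The sharper $(p-1)$-bound for $u_j \in A \setminus S_n$ will be obtained by forbidding $u_j$ from the range of $\pi_{\ell^*(j)}$, where $\ell^*(j)$ is a ``responsible'' matroid, \ie, one in which $\ExchangeAlg$ selected $u_j$ as $x_\ell$ at time $d(j)$.

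Each $\pi_\ell$ will be built inductively along the stream. When $u_i \in T \cap A \cap \cN_\ell$, the natural choice $\pi_\ell(u_i) = u_i$ is used whenever it is compatible with the $(p-1)$-avoidance constraint; when $u_i \in T \setminus A$ and the $\ell$-matroid contribution $x_\ell \in U_i \cap \cN_\ell$ exists, we take $\pi_\ell(u_i) = x_\ell$; in the remaining cases, the matroid exchange property is invoked to locate an image, possibly reassigning earlier mappings. An invariant---that the combination of unmatched elements of $S_i \cap \cN_\ell$ with the yet-unprocessed tail of $T \cap \cN_\ell$ remains independent in $\cI_\ell$---will ensure the construction is extensible at every step.

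The inequality conditions rest on two observations. First, for any $u_j \in A$ and any $i \leq d(j)$, we have $S_{d(j)-1} \cap \{u_1, \ldots, u_{j-1}\} \subseteq S_{i-1}$, because no element of index less than $j$ can be added to $S$ after time $j$ (it can only be removed); submodularity then gives $f(u_j : S_{d(j)-1}) \geq f(u_j : S_{i-1})$. Second, the minimum-value tie-breaking rule in $\ExchangeAlg$ yields $f(U_i : S_{i-1}) \geq |U_i \cap \cN_\ell| \cdot f(x_\ell : S_{i-1})$ for each conflicting matroid $\ell$. Combining these with the rejection condition $f(u_i \mid S_{i-1}) < (1+c) \cdot f(U_i : S_{i-1})$ for $u_i \in T \setminus A$ (valid since $u_i \notin R$), and with the acceptance condition $f(u_{d(j)} \mid S_{d(j)-1}) \geq (1+c) \cdot f(U_{d(j)} : S_{d(j)-1})$ at displacement events, produces both required inequalities after bookkeeping over the images assigned to each $u_i$.

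The main obstacle is the simultaneous satisfaction of injectivity, $(p-1)$-avoidance, and the two inequality conditions. Handling $u_i \in T \cap A \setminus S_n$ is particularly delicate: identity mapping in every matroid would over-populate $u_i$'s multiplicity, so in one matroid an alternative image (a natural candidate is $u_{d(i)}$, which lies in the responsible matroid and satisfies $f(u_{d(i)} : S_{d(d(i))-1}) \geq f(u_i \mid S_{i-1})$ via the acceptance-plus-submodularity chain) must be substituted, and one must verify that this substitution does not violate the multiplicity budgets of any other element. Coordinating these substitutions while preserving the matroid-exchange invariant throughout the inductive construction of the $\pi_\ell$'s is the technical heart of the proof.
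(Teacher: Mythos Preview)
Your high-level architecture matches the paper: define per-matroid injections $\pi_\ell\colon T\cap\cN_\ell \to A\cap\cN_\ell$ and set $\phi_T(u_i)=\{\pi_\ell(u_i)\}$, deduce the multiplicity bounds from injectivity plus the fact that each element lies in at most $p$ ground sets, and obtain the $(p-1)$ bound by excluding the range of the ``responsible'' matroid. The value observations you state (monotonicity of $f(u_j:S_{r-1})$ in $r$ for $r\ge j$, and the acceptance/rejection inequalities) are the right building blocks.

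The gap is in how you actually build the $\pi_\ell$. Your direct rule $\pi_\ell(u_i)=x_{i,\ell}$ for $u_i\in T\setminus A$ is not injective: if $u_i$ is rejected by the value test, the candidate $x_{i,\ell}$ stays in $S$, so a later rejected element $u_{i'}\in T$ can have exactly the same $x_{i',\ell}=x_{i,\ell}$. Your proposed repair---reassign via ``the matroid exchange property'' while maintaining an independence invariant on unmatched elements of $S_i\cap\cN_\ell$ plus unprocessed $T\cap\cN_\ell$---does give you \emph{some} element to move to, but the exchange axiom provides no control over $f$-values. The inequality you need is anchored to the specific argmin choices that {\ExchangeAlg} makes; an arbitrary exchange target can have arbitrarily large or small $f(\cdot:S_{\cdot})$. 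The same cascading problem hits your $(p-1)$ fix: substituting $u_{d(i)}$ for $u_i$ in the responsible matroid is fine for a single displacement, but $u_{d(i)}$ may itself be displaced later, and another element of $T$ may want to map to it, forcing further reassignments whose value bounds you have not established.

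The paper resolves exactly this cascading issue by building, for each matroid $\ell$, a directed acyclic graph $G_\ell$ on $A\cap\cN_\ell$ (plus the rejected $\cN_\ell$-elements) whose arcs record the circuit at each step; the key lemmas are that (i) every non-sink is spanned by its out-neighbors, so any independent set admits an injection to the sinks of $G_\ell$, and (ii) $f(\cdot:S_{d(\cdot)-1})$ is non-decreasing along every arc, hence along every path. Mapping $u_i$ to a reachable sink therefore gives injectivity and the value inequality simultaneously, and elements of $A\setminus S_n$ are never sinks in their responsible matroid, yielding the $(p-1)$ bound for free. This path-to-sink mechanism is the missing idea in your construction; without it, the ``reassigning earlier mappings'' step does not preserve the third and fourth properties of $\phi_T$.
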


Because the proof of Proposition~\ref{prop:mapping} is lengthy and detailed, we 
defer it to Section~\ref{ssc:mapping}. 
Instead, we prove now a useful technical observation.
Let $Z = \{u_i \in \cN \mid f(u_i \mid 
S_{i-1}) < 0\}$. 
\newcommand{\obsSetProperties}{Consider an arbitrary element $u_i \in \cN$.
	\begin{itemize}
		\item If $u_i \not \in Z$, then $f(u_i : S_{i'}) \geq 0$ for every $i' \geq i - 1$. In particular, since 
		$d(i) \geq i$, $f(u_i : S_{d(i) - 1}) \geq 0$.
		\item $A \cap (R \cup Z) = \varnothing$.
\end{itemize}}
\begin{observation} \label{obs:set_properties}
	\obsSetProperties
\end{observation}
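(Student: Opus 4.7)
The plan is to prove the two bullet points separately. The first is a direct submodularity argument based on the streaming structure (once an element leaves the solution it never returns), and the second needs a small induction showing that whenever an element is accepted into the solution, the set $U_i$ that it replaces has non-negative aggregate marginal contribution $f(U_i : S_{i-1})$.

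For the first bullet, I would first rewrite the streaming shorthand as $f(u_i : S_{i'}) = f(u_i \mid S_{i'} \cap \{u_1, u_2, \dotsc, u_{i-1}\})$, and similarly $f(u_i : S_{i-1}) = f(u_i \mid S_{i-1})$. The key observation is that an element can be removed from the current solution but can never re-enter it in Algorithm~\ref{alg:analysis}; therefore any element of $S_{i'}$ that arrived strictly before $u_i$ must also belong to $S_{i-1}$. Hence
\[
S_{i'} \cap \{u_1, u_2, \dotsc, u_{i-1}\} \subseteq S_{i-1} \enspace,
\]
and by submodularity of $f$,
\[
f(u_i : S_{i'}) = f(u_i \mid S_{i'} \cap \{u_1, \dotsc, u_{i-1}\}) \geq f(u_i \mid S_{i-1}) \geq 0 \enspace,
\]
where the last inequality uses $u_i \notin Z$. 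Since $d(i) \geq i$ by the definition of $d$, specializing to $i' = d(i) - 1$ yields the "in particular" claim.

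For the second bullet, the disjointness $A \cap R = \varnothing$ is immediate: in Algorithm~\ref{alg:analysis} the branch that adds an element to $R$ is the \emph{otherwise} branch of the sampling coin flip, so elements added to $R$ are never placed in any $S_i$, and thus never appear in $A = \bigcup_{i=1}^n S_i$. For $A \cap Z = \varnothing$, I would argue by induction on the iteration $i$ that every element ever inserted into the solution has $f(u_i \mid S_{i-1}) \geq 0$ (hence lies outside $Z$). The inductive invariant is that for every iteration $j \leq i$ at which an element was accepted, all elements of the corresponding removed set $U_j$ were themselves accepted earlier with non-negative marginal contribution at arrival, so by the first bullet $f(x : S_{j-1}) \geq 0$ for each $x \in U_j$, giving $f(U_j : S_{j-1}) \geq 0$. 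Combined with the acceptance criterion $f(u_j \mid S_{j-1}) \geq (1 + c) \cdot f(U_j : S_{j-1}) \geq 0$, this shows $u_j \notin Z$ and closes the induction.

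The only subtle point, and the place where one has to be careful, is making sure the induction is set up so that applying the first bullet to the earlier-accepted elements $x \in U_j$ is valid; this works precisely because the inductive hypothesis guarantees $x \notin Z$, which is the hypothesis required by the first bullet. Beyond that, both statements are essentially immediate from the algorithm's structure and submodularity.
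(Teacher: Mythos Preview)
The proposal is correct and takes essentially the same approach as the paper: the first bullet is proved identically via the inclusion $S_{i'} \cap \{u_1,\dotsc,u_{i-1}\} \subseteq S_{i-1}$ and submodularity, and the second bullet rests on the same key insight that elements of $U_i \subseteq S_{i-1}$ were previously accepted and hence, by the inductive hypothesis, lie outside $Z$, so the first bullet applies to them. The only cosmetic difference is that the paper casts the $A \cap Z = \varnothing$ argument as a minimal-counterexample contradiction rather than a forward induction, but the logic is the same.
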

\begin{proof}{Proof}
	To see why the first part of the observation is true, consider an arbitrary element $u_i \not \in Z$. 
	Then,
	\[
	0
	\leq
	f(u_i \mid S_{i - 1})
	\leq
	f(u \mid S_{i'} \cap \{u_1, u_2, \dotsc, u_{i - 1}\})
	=
	f(u : S_{i'})
	\enspace,
	\]
	where the second inequality follows from the submodularity of $f$ and the inclusion $S_{i'} \cap 
	\{u_1, u_2, \dotsc,\allowbreak u_{i - 1}\} \subseteq S_{i-1}$ (which holds because elements are only 
	added by Algorithm~\ref{alg:analysis} to its solution at the time of their arrival).
	
	It remains to prove the second part of the observation. Note that Algorithm~\ref{alg:analysis} adds 
	every arriving element to at most one of the sets $A$ and $R$, and thus, these sets are disjoint; 
	hence, to prove the observation it is enough to show that $A$ and $Z$ are also disjoint. Assume 
	towards a contradiction that this is not the case, and let $u_i$ be the first element to arrive which 
	belongs to both $A$ and $Z$. Then,
	\[
	f(u_i \mid S_{i - 1})
	\geq
	(1 + c) \cdot f(U_i : S_{i - 1})
	=
	(1 + c) \cdot \sum_{u_j \in U_i} f(u_j : S_{d(j) - 1})
	\enspace.
	\]
	To see why that inequality leads to a contradiction, notice that its leftmost hand side is negative by our 
	assumption that $u_i \in Z$, while its rightmost hand side is non-negative by the first part of this 
	observation since the choice of $u_i$ implies that no element of $U_i \subseteq S_{i - 1} \subseteq A 
	\cap \{u_1, u_2, \dotsc, u_{i-1}\}$ can belong to $Z$. 
\end{proof}

%
%

We are now ready to show that the value of the elements of the optimal solution that do not belong 
to $A$ is not too large compared to the value of $A$ itself when the sampling parameter $q$ is chosen 
appropriately.
\begin{lemma} \label{lem:raw}
	If $q = \left( (1 + c)p + 1 \right)^{-1}$,  then $\bE[f(S_n)] \geq \frac{c}{(1+c)^2p} \cdot \bE[f(A \cup 
	\OPT)]$.
\end{lemma}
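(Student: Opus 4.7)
The strategy is to upper bound $f(A \cup \OPT)$ by a sum of stream-marginals along $\OPT$, relate the contributions from $R$ back to contributions from elements accepted into $A$ via the subsampling structure, and finally cash in using Proposition~\ref{prop:mapping} together with the estimates on $f(A)$ and the stream-marginals along $A$ developed earlier in this section.

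First I would set $T = \OPT \setminus R$, which lies in $\cI$ (as a subset of $\OPT$) and is disjoint from $R$, so Proposition~\ref{prop:mapping} provides a mapping $\phi_T$. Submodularity combined with Observation~\ref{obs:technical} and the inclusion $S_{i-1} \subseteq A \cap \{u_1,\dotsc,u_{i-1}\}$ yields
\[
f(A \cup \OPT) \leq f(A) + f(\OPT \setminus A : A) \leq f(A) + \sum_{u_i \in \OPT \setminus A} f(u_i \mid S_{i-1}),
\]
after which I would split the last sum by whether $u_i \in \OPT \cap R$ or $u_i \in T \setminus A$. The key step handles the $R$ part by randomization: conditioned on the state just before $u_i$'s arrival, the event that $u_i$ passes the exchange test is deterministic while the Bernoulli-$q$ sampling is a fresh independent coin; hence, conditionally, $u_i$ is added to $A$ with probability $q$ and placed in $R$ with probability $1-q$. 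Since $f(u_i \mid S_{i-1})$ depends only on the history and $\OPT \cap A = T \cap A$ (using $A \cap R = \varnothing$ from Observation~\ref{obs:set_properties}), summing over $u_i \in \OPT$ gives
\[
\bE\!\left[\sum_{u_i \in \OPT \cap R} f(u_i \mid S_{i-1})\right] = \frac{1-q}{q}\cdot \bE\!\left[\sum_{u_i \in T \cap A} f(u_i \mid S_{i-1})\right].
\]

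Next, I would invoke Proposition~\ref{prop:mapping}: for $u_i \in T \cap A$, averaging the $p$ pointwise inequalities gives $f(u_i \mid S_{i-1}) \leq \tfrac{1}{p}\sum_{u_j \in \phi_T(u_i)} f(u_j : S_{d(j)-1})$, while for $u_i \in T \setminus A$ the analogous bound carries the factor $(1+c)$. The choice $q = ((1+c)p + 1)^{-1}$ is engineered precisely so that $\tfrac{1-q}{qp} = 1+c$, making the two coefficients match after the randomization identity above is applied. Re-indexing the double sum by $u_j \in A$ and applying the multiplicity bounds ($\leq p$ for $u_j \in S_n$, $\leq p-1$ for $u_j \in A \setminus S_n$), together with $f(u_j : S_{d(j)-1}) \geq 0$ (which holds for $u_j \in A$ since such an element passed its acceptance test and so lies outside $Z$ by Observation~\ref{obs:set_properties}), the mapping sum is at most
\[
(1+c)\bigl[p \cdot f(S_n) + (p-1) \cdot f(S_n)/c\bigr],
\]
where I used $\sum_{u_j \in S_n} f(u_j : S_n) \leq f(S_n)$ by telescoping and non-negativity of $f$, and $\sum_{u_j \in A \setminus S_n} f(u_j : S_{d(j)-1}) \leq f(S_n)/c$ from the proof of Lemma~\ref{lem:marginals_sum} (the $U_i$'s for $u_i \in A$ partition $A \setminus S_n$, and $d(j)$ is the arrival index of the element whose $U_i$ contains $u_j$). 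Combining with $\bE[f(A)] \leq \tfrac{c+1}{c}\bE[f(S_n)]$ from Corollary~\ref{cor:sets_ratio}, the algebra collapses exactly to the target $\bE[f(A \cup \OPT)] \leq \tfrac{(1+c)^2 p}{c}\bE[f(S_n)]$.

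The main obstacle is the randomization identity: one must condition on the correct $\sigma$-algebra (the state just before $u_i$'s arrival) so that the test-passing event becomes deterministic and the subsequent coin flip is an independent Bernoulli. A subtler point is that the sharper multiplicity bound $p - 1$ (rather than $p$) for elements of $A \setminus S_n$ is essential; using only $p$ would leave a residual $\tfrac{c+1}{c}\bE[f(S_n)]$ term that fails to absorb cleanly into the target ratio $(1+c)^2 p / c$.
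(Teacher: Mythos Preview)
Your proof is correct and uses the same ingredients as the paper's proof: the submodular decomposition of $f(A\cup\OPT)$, Corollary~\ref{cor:sets_ratio}, Proposition~\ref{prop:mapping} applied to $T=\OPT\setminus R$, the multiplicity bounds $p$ and $p-1$, Lemma~\ref{lem:marginals_sum} (or its proof), and the randomization identity linking $\OPT\cap R$ to $\OPT\cap A$ via the ratio $(1-q)/q=(1+c)p$. The only difference is organizational: the paper first bounds everything pathwise to obtain
\[
f(A\cup\OPT)\leq \tfrac{(1+c)^2p}{c}\,f(S_n)-(1+c)p\!\!\sum_{u_i\in\OPT\cap A}\!\!f(u_i\mid S_{i-1})+\!\!\sum_{u_i\in\OPT\cap R}\!\!f(u_i\mid S_{i-1}),
\]
and then shows the last two terms cancel in expectation, whereas you take expectations earlier, convert the $R$-sum into $(1+c)p$ times the $A$-sum, and only then invoke the fourth bullet of Proposition~\ref{prop:mapping} (averaged over the $p$ copies) so that both pieces carry the same coefficient $(1+c)$ before combining into a single mapping sum over all of $T$. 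Both routes lead to the identical algebra.
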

\begin{proof}{Proof}
	Since $S_i \subseteq A$ for every $0 \leq i \leq n$, the submodularity of $f$ guarantees that
	\begin{align*}
	f(A \cup \OPT)
	&\leq
	f(A) + \sum_{u_i \in \OPT \setminus (R \cup A)}  \mspace{-36mu} f(u_i \mid A) + \sum_{u_i \in (\OPT 
	\setminus A) \cap R} \mspace{-18mu} f(u_i \mid A) \\
	&\leq
	f(A) + \sum_{u_i \in \OPT \setminus (R \cup A)}  \mspace{-36mu} f(u_i \mid S_{i-1}) + \sum_{u_i \in 
	(\OPT \setminus A) \cap R} \mspace{-18mu} f(u_i \mid S_{i - 1}) \\
	&\leq
	\frac{1+c}{c} \cdot f(S_n) + \sum_{u_i \in \OPT \setminus (R \cup A)} \mspace{-36mu} f(u_i \mid 
	S_{i-1})  + \sum_{u_i \in \OPT \cap R} \mspace{-18mu} f(u_i \mid S_{i-1})
	\enspace,
	\end{align*}
	where the third inequality follows from Corollary~\ref{cor:sets_ratio} and the fact that $A \cap R = 
	\varnothing$ by Observation~\ref{obs:set_properties}. Let us now consider the function $\phi_{\OPT 
	\setminus R}$ whose existence is guaranteed by Proposition~\ref{prop:mapping} when we choose 
	$T = \OPT \setminus R$. The property guaranteed by Proposition~\ref{prop:mapping} for 
	elements of $T \setminus A$ implies
	\[
	\sum_{u_i \in \OPT \setminus (R \cup A)} \mspace{-27mu} f(u_i \mid S_{i-1})
	\leq
	(1 + c ) \cdot \mspace{-18mu} 
	\sum_{\substack{u_i \in \OPT \setminus (R \cup A)\\u_j \in \phi_{\OPT 
	\setminus R}(u_i)}} 
	\mspace{-27mu} 
	f(u_j:S_{d(j) - 1} )
	\enspace.
	\]
	Additionally,
	\begin{align*}
	\sum_{\substack{u_i \in \OPT \setminus (R \cup A)\\u_j \in \phi_{\OPT \setminus R}(u_i)}} 
	\mspace{-36mu} f(u_j:S_{d(j) - 1} ) &+ p \cdot \mspace{-18mu} \sum_{u_i \in \OPT \cap A} 
	\mspace{-18mu} f(u_i \mid S_{i-1}) \\
	&\leq
	\mspace{-18mu} \sum_{\substack{u_i \in \OPT \setminus R\\u_j \in \phi_{\OPT \setminus R}(u_i)}} 
	\mspace{-36mu} f(u_j: S_{d(j)- 1} )\\
	&\leq
	p \cdot \sum_{u_j \in S_n} f(u_j:S_n) + (p-1) \cdot \mspace{-9mu} \sum_{u_j \in A \setminus S_n} 
	\mspace{-9mu} f(u_j: S_{d(j) - 1})\\
	&\leq
	p \cdot f(S_n) + \frac{p-1}{c} \cdot f(S_n) \\
	&=
	\frac{(1 + c) \cdot p - 1}{c}  \cdot f(S_n)
	\enspace,
	\end{align*}
	where the first inequality follows from the properties guaranteed by Proposition~\ref{prop:mapping} 
	for elements of $T \cap A$ (note that the sets $\OPT \setminus (R \cup A)$ and $OPT \cap A$ are a 
	disjoint partition of $\OPT \setminus R$ by Observation~\ref{obs:set_properties}), and the second 
	inequality follows from the properties guaranteed by Proposition~\ref{prop:mapping} for elements of 
	$A \setminus S_n$ and $S_n$ because every element $u_i$ in the multisets produced by $\phi_{\OPT 
	\setminus R}$ belongs to $A$, and thus, obeys $f(u_i : S_{d(i) - 1}) \geq 0$ by 
	Observation~\ref{obs:set_properties}. Finally, the last inequality follows from 
	Lemma~\ref{lem:marginals_sum} and the fact that $f(u_j : S_{d(j) - 1}) \leq f(u_j : S_n)$ for every $1 
	\leq j \leq n$. Combining all the above inequalities, we get
	\begin{align*} \label{eq:cup_bound}
	f(A&{} \cup \OPT) \\
	&\leq \frac{1 + c}{c} \cdot f(S_n) + (1 + c) \cdot \left[  \frac{(1 + c) \cdot p - 1}{c}  \cdot f(S_n) - p \cdot 
	\mspace{-9mu} \sum_{u_i \in 
	\OPT \cap A} \mspace{-18mu} f(u_i \mid S_{i-1}) \right] + \sum_{u_i \in \OPT \cap R} \mspace{-18mu} 
	f(u_i \mid S_{i-1}) \nonumber \\
	&=
	\frac{(1 + c)^2\cdot p}{c} \cdot f(S_n) - (1 + c)p \cdot \mspace{-9mu} \sum_{u_i \in \OPT \cap A} 
	\mspace{-18mu} f(u_i \mid S_{i-1}) + \sum_{u_i \in \OPT \cap R} \mspace{-18mu} f(u_i \mid S_{i-1})
	\enspace.
	\end{align*}
	
	By the linearity of expectation, to prove the lemma it suffices to show that the expectation of the last two terms is 
	non-positive.
	We will show the stronger statement that the expectation of the last two terms is zero.  
	To this end, consider an arbitrary element $u_i \in \OPT$. When $u_i$ 
	arrives, one of two things happens. The first option is that Algorithm~\ref{alg:analysis} discards 
	$u_i$ without adding it to either its solution or to $R$. The other option is that 
	Algorithm~\ref{alg:analysis} adds $u_i$ to its solution (and thus, to $A$) with probability $q$, and 
	to $R$ with probability $1 - q$. The crucial observation here is that at the time of $u_i$'s arrival the 
	set $S_{i - 1}$ is already determined, and thus, this set is independent of the decision of the 
	algorithm to add $u$ to $A$ or to $R$; which implies the following equality (given an event $\cE$, 
	we use here $\characteristic[\cE]$ to denote an indicator for it).
	\[
	\frac{\bE[\characteristic[u_i \in A] \cdot f(u_i \mid S_{i - 1})]}{q}
	=
	\frac{\bE[\characteristic[u_i \in R] \cdot f(u_i \mid S_{i - 1})]}{1 - q}
	\enspace.
	\]
	Rearranging the last equality, and summing it up over all elements $u_i \in \OPT$, we get
	\[
	\frac{1 - q}{q} \cdot \bE\left[\sum_{u_i \in \OPT \cap A} \mspace{-27mu} f(u_i \mid S_{i - 1})\right]
	=
	\bE\left[\sum_{u_i \in \OPT \cap R} \mspace{-18mu} f(u_i \mid S_{i - 1})\right]
	\enspace.
	\]
	By assumption, $q = \left( (1 + c)p + 1 \right)^{-1}$, which implies $(1 - q)/q = q^{-1} - 1 = (c + 1)p$. 
	Substituting this into the equality above completes the proof.
\end{proof}

Now we are ready to prove the approximation and efficiency guarantees of 
Theorem~\ref{thm:streaming_alg}.
\subsubsection{Proof of Theorem~\ref{thm:streaming_alg}.}
	We first prove that Algorithm~\ref{alg:actual} achieves the approximation ratios guaranteed by 
	Theorem~\ref{thm:streaming_alg}. 
	As discussed earlier, Algorithms~\ref{alg:actual} and~\ref{alg:analysis} have 
	identical output distributions, and so it suffices to show that Algorithm~\ref{alg:analysis} 
	achieves the desired approximation ratios. 
	Recall that $q= \left( (1+c)p + 1 \right)^{-1}$, so by Lemma~\ref{lem:raw},
	\begin{equation} \label{eq:final_inequality_c}
	\bE[f(S_n)] \geq \frac{c}{(1 + c)^2p} \cdot \bE[f(A \cup \OPT)] \enspace.
	\end{equation}
	Suppose that $f$ is monotone. Setting $c = 1$ yields $\frac{c}{(1 + c)^2p} = \nicefrac{1}{4p}$. 
	Additionally, the monotonicity of $f$ implies that $\bE[f(A \cup \OPT)] \geq f(\OPT)$.
	Substituting these two observations into \eqref{eq:final_inequality_c} yields
	\[ \bE[f(S_n)] \geq \frac{1}{4p} f(\OPT) \enspace, \]
	which establishes the approximation guarantee in the monotone case.
	Consider now the more general case in which $f$ is not necessarily monotone.
	Note that each element appears in $A$ with probability at most $q$ due to subsampling, thus, by 
	Lemma~\ref{lem:buchbinder}, we have 
	$\E{f(A \cup \OPT)} \geq \left( 1 - q \right) f( \OPT)$. Substituting this into 
	Inequality~\eqref{eq:final_inequality_c} and setting $c = \sqrt{1 + 1/p}$ yields
	\begin{align*}
	\bE[f(S_n)] &\geq (1 - q) \frac{c}{(1 + c)^2p} \cdot f(\OPT) \\
	&= \left( 1 - \frac{1}{(1+c)p + 1}\right) \frac{c}{(1 + c)^2p} \cdot f(\OPT) \\
	&= \left( 1 - \frac{1}{(1+\sqrt{1 + 1/p})p + 1} \right) \frac{\sqrt{1 + 1/p}}{(1 + \sqrt{1 + 1/p})^2p} \cdot 
	f(\OPT) \\
	&= \frac{1}{2p + 2 \sqrt{p(p+1)} + 1} \cdot f(\OPT),
	\end{align*}
	which yields the promised approximation guarantee for this case.
	
	Now we verify that \AlgSampling achieves the guaranteed memory and oracle complexities.
	Algorithm~\ref{alg:actual} has to keep the following three sets in memory: $S_i$, $U_i$ and $X_\ell$. 
	Since $U_i$ and $X_\ell$ are subsets of $S_{i-1}$, they are independent, and so is $S_i$. Hence, each one of the three sets $S_i$, $U_i$ and $X_\ell$ contains at most $k$ elements. Thus, 
	$O(k)$ memory suffices for the algorithm.
	When an arriving element is not sampled (which happens with probability $1 - q$), no queries to the 
	evaluation or independence oracles are required. 
	A sampled element requires $O(km)$ queries. Because $q = O(1/p)$, an arriving element requires $q 
	\cdot O(km) = O(km/p)$ oracle queries in expectation.  
	
\subsubsection{Proof of Proposition~\ref{prop:mapping}.} 
\label{ssc:mapping}
In this section we prove Proposition~\ref{prop:mapping}. 
Before doing so, we introduce some terminology regarding matroids which will be used in the proof. 
A \emph{circuit} is a dependent set which is minimal with respect to inclusion; that is, $C \notin 
\cI$ is a circuit if $A \notin \cI$ and $A \subseteq C$ imply $C = A$.
An element $u$ is \emph{spanned} by a set $S$ if the maximum size independent subsets of $S$ and $S+u$ are of the same size.
Note that it follows from these definitions that every element of $u$ of a circuit $C$ is spanned by $C - u$.

Let us also recall some of the sets involved in Algorithm~\ref{alg:analysis}.
The sequence $S_1, \dots S_n$ are the solutions constructed by the algorithm and $S_n$ is the final returned solution.
The set $A = \cup_{i=1}^n S_i$ is the set of all elements which were added to some solution and the set $R$ contains elements which had sufficiently large marginal gain but were rejected with probability $q$.
Lastly, $d(i)$ is the index of the element whose arrival made Algorithm~\ref{alg:analysis} remove $u_i$ from its solution.
Now, we restate the proposition itself.

\begin{manualtheorem}{\ref{prop:mapping}}
	\propMapping
\end{manualtheorem}

We begin the proof of Proposition~\ref{prop:mapping} by constructing $m$ graphs, one for each of the matroids defining $\cM$. 
For every $1 \leq \ell \leq m$, the graph $G_\ell$ contains two types of vertices: its internal vertices are the elements of $A \cap \cN_\ell$, and its external vertices are the elements of $\{u_i \in \cN_\ell \setminus (R \cup A) \mid (S_{i-1} + u_i) \cap \cN_\ell \not \in \cI_\ell\}$. 
Informally, the external elements of $G_\ell$ are the elements of $\cN_\ell$ which were rejected upon arrival by Algorithm~\ref{alg:analysis} and the matroid $\cM_\ell = (\cN_\ell, \cI_\ell)$ can be (partially) blamed for this rejection. 

The arcs of $G_\ell$ are created using the following iterative process that 
creates some arcs of $G_\ell$ in response to every arriving element. For every $1 \leq i \leq n$, consider 
the element $x_\ell$ selected by the execution of {\ExchangeAlg} on the element $u_i$ and the set 
$S_{i-1}$. From this point on we denote this element by $x_{i, \ell}$. If no $x_{i, \ell}$ element was 
selected by the above execution of {\ExchangeAlg}, or $u_i \in R$, then no $G_\ell$ arcs are created in 
response to $u_i$. Otherwise, let $C_{i, \ell}$ be the single circuit of the matroid $\cM_\ell$ in the set 
$(S_{i-1} + u_i) \cap \cN_\ell$---there is exactly one circuit of $\cM_\ell$ in this set because $S_{i-1}$ is 
independent, but $(S_{i-1} + u_i) \cap \cN_\ell$ is not independent in $\cM_\ell$. One can observe that 
$C_{i, \ell} - u_i$ is equal to the set $X_\ell$ in the above-mentioned execution of {\ExchangeAlg}, and 
thus, $x_{i, \ell} \in C_{i, \ell}$. We now denote by $u'_{i, \ell}$ the vertex out of $\{u_i, x_{i, \ell}\}$ that 
does not belong to $S_i$---notice that there is exactly one such vertex since $x_{i, \ell} \in U_i$, which 
implies that it appears in $S_i$ if $S_i = S_{i-1}$ and does not appear in $S_i$ if $S_i = S_{i-1} \setminus 
U_i + u_i$. Regardless of the node chosen as $u'_{i, \ell}$, the arcs of $G_\ell$ created in response to $u_i$ 
are all the possible arcs from $u'_{i, \ell}$ to the other vertices of $C_{i, \ell}$. Observe that these are 
valid arcs for $G_\ell$ in the sense that their endpoints (\ie, the elements of $C_{i, \ell}$) are all vertices 
of $G_\ell$ (for the elements of $C_{i, \ell} - u_i$ this is true since $C_{i, \ell} - u_i \subseteq S_{i-1} 
\cap \cN_\ell \subseteq A \cap \cN_\ell$, and for the element $u_i$ this is true since the existence of 
$x_{i, \ell}$ implies $(S_{i-1} + u_i) \cap \cN_\ell \not \in \cI_\ell$).
See Figure~\ref{fig:adding-arcs} for a sketch of how arcs are added to $G_\ell$.

\begin{figure}
	\centering
	\includegraphics[width=0.75\textwidth]{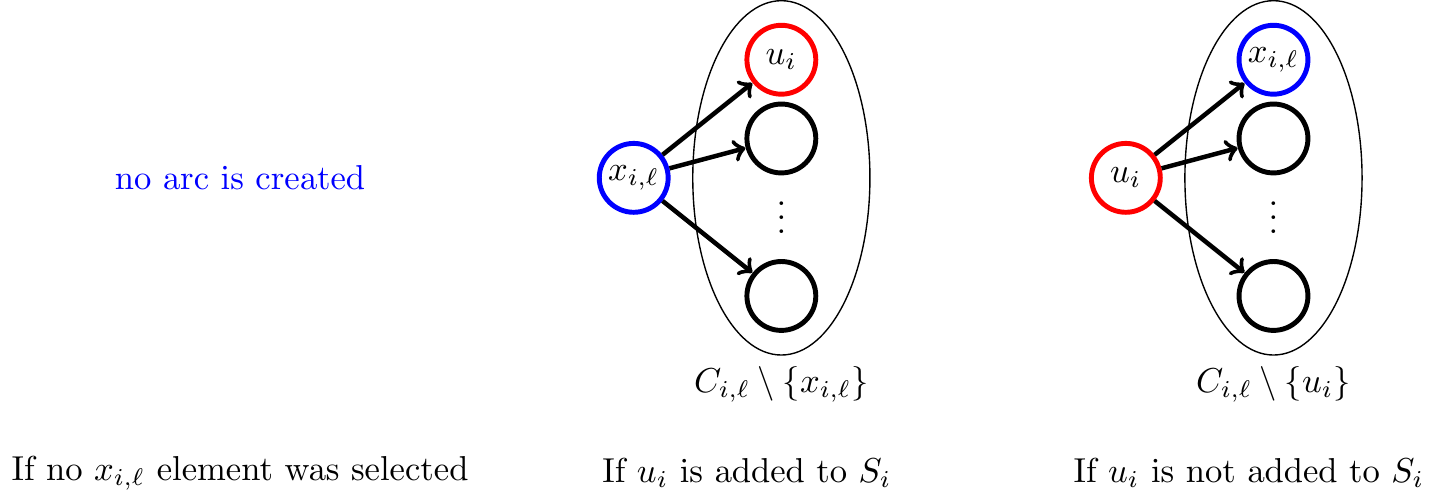}
	\caption{A sketch of how arcs are constructed in the graph $G_\ell$ at each 
	iteration.}
	\label{fig:adding-arcs}
\end{figure}

Some properties of $G_\ell$ are given by the following observation. Given a graph $G$ and a vertex 
$u$, we denote by $\delta^+_G(u)$ the set of vertices to which there is a direct arc from $u$ in $G$.
\newcommand{\obsGraphProperties}{%
	For every $1 \leq \ell \leq m$,
	\begin{itemize}
		\item every non-sink vertex $u$ of $G_\ell$ is spanned by the set $\delta^+_{G_\ell}(u)$.
		\item for every two indexes $1 \leq i, j \leq n$, if $u'_{i, \ell}$ and $u'_{j, \ell}$ both exist and $i \neq 
		j$, then $u'_{i, \ell} \neq u'_{j, \ell}$.
		\item $G_\ell$ is a directed acyclic graph.
\end{itemize}}
\begin{observation} \label{obs:graph_properties}
	\obsGraphProperties
\end{observation}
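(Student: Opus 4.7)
I would prove the three bullet points of the observation in the order listed, each with a short argument that reads directly off the construction of $G_\ell$.

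For the first bullet (spanning), my plan is to use the fact that every outgoing arc of a vertex $u$ in $G_\ell$ was created at some iteration $i$ at which $u = u'_{i,\ell}$, and at that iteration \emph{all} of $C_{i,\ell} - u$ become targets of arcs leaving $u$. So if $u$ is non-sink, fix any iteration $i$ that produced an outgoing arc from $u$; then $\delta^+_{G_\ell}(u) \supseteq C_{i,\ell} - u$. Since $C_{i,\ell}$ is a circuit of $\cM_\ell$, every element of $C_{i,\ell}$, and in particular $u$, is spanned by $C_{i,\ell}$ with itself removed, and therefore by the superset $\delta^+_{G_\ell}(u)$.

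For the second bullet, I would do a short case analysis. By definition each $u'_{i,\ell}$ is either $u_i$ (when $u_i$ is rejected, so $u_i \notin S_i$; such $u'_{i,\ell}$ is external and in particular $u_i \notin A$) or $x_{i,\ell}$ (when $u_i$ is accepted and $x_{i,\ell}$ is removed; such $u'_{i,\ell} \in A$ and is not in $S_{i'}$ for any $i' \geq i$). Taking $i < j$ and comparing the four combinations: external/external is immediate because different arrivals give different elements; external/internal and internal/external contradict $u'_{i,\ell} \in A \Leftrightarrow$ the corresponding $u'_{j,\ell}$ is external; internal/internal uses that $u'_{i,\ell}$ is removed at iteration $i$ while $u'_{j,\ell} = x_{j,\ell} \in S_{j-1}$ and $j - 1 \geq i$.

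The third bullet is the delicate one, and this is where I expect the main obstacle. I would order vertices by the departure time $d(\cdot)$ defined in the main text and show that every arc $u \to v$ of $G_\ell$ satisfies $d(u) \leq d(v)$, with equality only when $v$ is a sink. For an arc created at iteration $i$ we have $u = u'_{i,\ell}$, so $d(u) = i$, and every target $v \in C_{i,\ell} - u$ is either $u_i$ itself (with $d(u_i) \geq i+1$, since $u_i$ is accepted) or lies in $S_{i-1}$ (so $d(v) \geq i$). The tie case $d(v) = i$ with $v \in S_{i-1}$ forces $v$ to be removed at iteration $i$ via some matroid $\cM_{\ell''}$ with $\ell'' \neq \ell$, since the only element removed via $\cM_\ell$ at iteration $i$ is $u'_{i,\ell} = u \neq v$. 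The heart of the argument is then to rule out $v$ ever being a source in $G_\ell$: by the case analysis used for the second bullet, $v$ could only be a source if $v = u_{i'}$ with $u_{i'}$ rejected (impossible since $v \in A$) or $v = x_{i',\ell}$ (forcing $i' = i$ because $v$ is removed only once, but $x_{i,\ell} = u \neq v$). Hence $v$ is a sink, and any directed cycle would make $d$ strictly increase all the way around, which is impossible. I expect the tie-case bookkeeping to be the only subtle point; everything else should follow cleanly from the construction.
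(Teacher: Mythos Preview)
Your proof is correct, but your route for the second and third bullets differs from the paper's.  The paper does not do a four-case analysis for the second bullet, nor a departure-time potential argument for the third.  Instead it proves a single technical claim---for $i<j$, the element $u'_{i,\ell}$ does not belong to $C_{j,\ell}$---by observing that $u'_{i,\ell}\neq u_j$ (it is either $u_i$ or an element of $S_{i-1}$) and $u'_{i,\ell}\notin S_{j-1}\supseteq C_{j,\ell}-u_j$ (since $u'_{i,\ell}\notin S_i$ and elements are never re-added).  The second bullet is then immediate because $u'_{j,\ell}\in C_{j,\ell}$.  For the third bullet the paper assumes a cycle, notes that every vertex on it is some $u'_{i,\ell}$, picks an arc from $u'_{i_2,\ell}$ to $u'_{i_1,\ell}$ with $i_1<i_2$, and concludes (using the second bullet) that this arc was created at iteration $i_2$, forcing $u'_{i_1,\ell}\in C_{i_2,\ell}$, contradicting the technical claim.

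Your departure-time argument is a valid alternative and is arguably more constructive (it exhibits an explicit topological order), but it requires the extra tie-case bookkeeping you flagged; the paper's approach avoids that by packaging everything into the one claim $u'_{i,\ell}\notin C_{j,\ell}$, which is both what rules out equality of distinct $u'_{\cdot,\ell}$'s and what rules out ``backward'' arcs in a cycle.
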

\begin{proof}{Proof}
Consider an arbitrary non-sink node $u$ of $G_\ell$. Since there are arcs leaving $u$, $u$ must be 
equal to $u'_{i, \ell}$ for some $1 \leq i \leq n$. This implies that $u$ belongs to the circuit $C_{i, \ell}$, 
and that there are arcs from $u$ to every other vertex of $C_{i, \ell}$. Thus, $u$ is spanned by the 
vertices of $\delta^+_{G_\ell}(u) \supseteq C_{i, \ell} - u$ because the fact that $C_{i, \ell}$ is a circuit 
containing $u$ implies that $C_{i, \ell} - u$ spans $u$. This completes the proof of the first part of the 
observation.

Let us prove now a useful technical claim. Consider an index $1 \leq i \leq n$ such that $u'_{i, \ell}$ 
exists, and let $j$ be an arbitrary value $i < j \leq n$. We will prove that $u'_{i, \ell}$ does not belong to $C_{j, 
\ell}$. By definition, $u'_{i, \ell}$ is either $u_i$ or the vertex $x_{i, \ell}$ that belongs to $S_{i - 1}$, and thus, 
arrived before $u_i$ and is not equal to $u_j$; hence, in both cases, we have that $u'_{i, \ell} \neq u_j$. 
Moreover, 
combining the fact that $u'_{i, \ell}$ is either $u_i$ or arrived before $u_i$ and the observation that $u'_{i, \ell}$ is 
never a part of $S_i$, we get that $u'_{i, \ell}$ cannot belong to $S_j \supseteq C_{j, \ell} - u_j$, which implies 
the claim together with the previous observation that $u'_{i, \ell} \neq u_j$.

The technical claim that we proved above implies the second part of the lemma, namely that for every 
two indexes $1 \leq i, j \leq n$, if $u'_{i, \ell}$ and $u'_{j, \ell}$ both exist and $i \neq j$, then $u'_{i, \ell} 
\neq u'_{j, \ell}$. To see why that is the case, assume without loss of generality $i < j$. Then, the above 
technical claim implies that $u'_{i, \ell} \not \in C_{j, \ell}$, which implies $u'_{i, \ell} \neq u'_{j, \ell}$ 
because $u'_{j, \ell} \in C_{j, \ell}$.

At this point, let us assume towards a contradiction that the third part of the observation is not true, 
\ie, that there exists a circuit $L$ in $G_\ell$. Since every vertex of $L$ has a non-zero out degree, every 
such vertex must be equal to $u'_{i, \ell}$ for some $1 \leq i \leq n$. Thus, there must be indexes $1 \leq 
i_1 < i_2 \leq n$ such that $L$ contains an arc from $u'_{i_2,\ell}$ to $u'_{i_1, \ell}$. Since we already 
proved that $u'_{i_2,\ell}$ cannot be equal to $u'_{j, \ell}$ for any $j \neq i_2$, the arc from 
$u'_{i_2,\ell}$ to $u'_{i_1, \ell}$ must have been created in response to $u_{i_2}$, hence, $u'_{i_1, \ell} \in 
C_{i_2, \ell}$, which contradicts the technical claim we have proved. 
\end{proof}

One consequence of the properties of $G_\ell$ proved by the last observation is given by the following 
lemma. A slightly weaker version of this lemma was proved implicitly by~\cite{V11}, and was stated as 
an explicit lemma by~\cite{CGQ15}.

\newcommand{\lemFunctionGraph}{%
	Consider an arbitrary directed acyclic graph $G = (V, E)$ whose vertices are elements of some 
	matroid $\cM'$. If every non-sink vertex $u$ of $G$ is spanned by $\delta^+_G(u)$ in $\cM'$, then 
	for every set $S$ of vertices of $G$ which is independent in $\cM'$ there must exist an injective 
	function $\psi_S: S \rightarrow V$ such that, for every vertex $u \in S$, $\psi_S(u)$ is a sink of $G$ which is 
	reachable from $u$.}
\begin{lemma} \label{lem:function_graph}
	\lemFunctionGraph
\end{lemma}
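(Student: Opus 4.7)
The plan is to reduce the existence of $\psi_S$ to a bipartite matching problem and invoke Hall's marriage theorem. For each $u \in S$, let $T(u) \subseteq V$ denote the set of sinks of $G$ that are reachable from $u$. Since $G$ is a finite DAG, every vertex can reach at least one sink by following any maximal directed path, so $T(u) \neq \varnothing$. The desired injective function $\psi_S$ is precisely a system of distinct representatives for the family $\{T(u)\}_{u \in S}$, so by Hall's theorem it suffices to verify that $|T(S')| \geq |S'|$ for every $S' \subseteq S$, where $T(S') = \bigcup_{u \in S'} T(u)$.

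Fix such an $S'$, let $U$ be the set of all vertices of $G$ reachable from some vertex of $S'$ (so $S' \subseteq U$), and let $R = T(S')$, which is exactly the set of sinks lying in $U$. The key claim is that $U \subseteq \mathrm{span}_{\cM'}(R)$. I would prove this by reverse topological induction on the vertices of $U$: any sink in $U$ is in $R \subseteq \mathrm{span}_{\cM'}(R)$ trivially; for a non-sink $u \in U$, its out-neighbors $\delta^+_G(u)$ all lie in $U$ (by transitivity of reachability), and by the induction hypothesis they each lie in $\mathrm{span}_{\cM'}(R)$. Combined with the hypothesis of the lemma that $u$ is spanned by $\delta^+_G(u)$, and the fact that matroid span is idempotent and monotone, it follows that $u \in \mathrm{span}_{\cM'}(R)$.

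Having established $S' \subseteq U \subseteq \mathrm{span}_{\cM'}(R)$, I would conclude using the rank axioms of $\cM'$: since $S'$ is independent (as a subset of the independent set $S$), $|S'| = \mathrm{rank}_{\cM'}(S') \leq \mathrm{rank}_{\cM'}(\mathrm{span}_{\cM'}(R)) = \mathrm{rank}_{\cM'}(R) \leq |R| = |T(S')|$. This establishes Hall's condition for the family $\{T(u)\}_{u \in S}$, and Hall's theorem then delivers the required injection $\psi_S$.

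The one subtle point to be careful about is the reverse induction step: one must make sure that for every non-sink $u \in U$, the out-neighbors $\delta^+_G(u)$ really do lie in $U$ (which is immediate from the definition of $U$ as the reachability closure of $S'$) and that the induction is well-founded, which it is precisely because $G$ is acyclic. Everything else is a straightforward application of standard matroid and Hall-theorem machinery, so I do not anticipate any serious obstacle beyond setting up the bookkeeping cleanly.
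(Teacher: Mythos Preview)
Your proof is correct and takes a genuinely different route from the paper. The paper argues by induction on the \emph{width} of $S$ (the number of arcs lying on some path starting in $S$): in the inductive step it selects a non-sink $u \in S$ with no other $S$-vertex reaching it, uses the spanning hypothesis together with independence of $S$ to find an out-neighbor $v \in \delta^+_G(u) \setminus S$ with $S' = S - u + v$ independent, verifies that $S'$ has strictly smaller width, and then constructs $\psi_S$ from the inductively obtained $\psi_{S'}$. Your approach instead reduces directly to Hall's marriage theorem, verifying the Hall condition by the clean observation that the reachability closure $U$ of any $S' \subseteq S$ satisfies $U \subseteq \mathrm{span}_{\cM'}(T(S'))$ (via reverse topological induction using idempotence of span), whence independence of $S'$ gives $|S'| = \mathrm{rank}(S') \leq \mathrm{rank}(T(S')) \leq |T(S')|$. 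Your argument is more conceptual and arguably tidier, leaning on standard matroid-closure and Hall machinery; the paper's argument is more self-contained (no appeal to Hall) and more explicitly constructive, at the price of the slightly ad hoc width parameter and the care needed to show it strictly decreases.
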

\begin{proof}{Proof}
Let us define the \emph{width} of a set $S$ of vertices of $G$ as the number of arcs that appear on 
some path starting at a vertex of $S$ (more formally, the width of $S$ is the size of the set $\{e \in E 
\mid \text{there is a path in $G$ that starts in a vertex of $S$ and includes $e$}\}$). We prove the 
lemma by induction of the width of $S$. 
We begin by considering the base case where $S$ is of width $0$. In this case, 
the vertices of $S$ cannot have any outgoing arcs because such arcs would have contributed to the 
width of $S$, and thus, they are all sinks of $G$. Thus, the lemma holds for the trivial function 
$\psi_S$ mapping every element of $S$ to itself. Assume now that the width $w$ of $S$ is larger than 
$0$, and assume that the lemma holds for every set of width smaller than $w$. Let $u$ be a non-sink 
vertex of $S$ such that there is no path in $G$ from any other vertex of $S$ to $u$. Notice that such 
a vertex must exist since $G$ is acyclic. By the assumption of the lemma, $\delta^+(u)$ spans $u$. In 
contrast, since $S$ is independent, $S - u$ does not span $u$, and thus, there must exist an element 
$v \in \delta^+(u) \setminus S$ such that the set $S' = S - u + v$ is independent.

Let us explain why the width of $S'$ must be strictly smaller than the width of $S$. First, consider an 
arbitrary arc $e$ which is on a path starting at a vertex $u' \in S'$. If $u' \in S$, then $e$ is also on a 
path starting in a vertex of $S$. On the other hand, if $u' \not \in S$, then $u'$ must be the vertex $v$. 
Thus, $e$ must be on a path $P$ starting in $v$. Adding $uv$ to the beginning of the path $P$, we 
get a path from $u$ which includes $e$. Hence, in conclusion, we have got that every arc $e$ which 
appears on a path starting in a vertex of $S'$  (and thus, contributes to the width of $S'$) also 
appears on a path starting in a vertex of $S$ (and thus, also contributes to the width of $S$); which 
implies that the width of $S'$ is not larger than the width of $S$. To see that the width of $S'$ is 
actually strictly smaller than the width of $S$, it only remains to find an arc which contributes to the 
width of $S$, but not to the width of $S'$. Towards this goal, consider the arc $uv$. Since $u$ is a 
vertex of $S$, the arc $uv$ must be on some path starting in $u$ (for example, the path including only 
this arc), and thus, contributes to the width of $S$. Assume now towards a contradiction that $uv$ 
contributes also to the width of $S'$, \ie, that there is a path $P$ starting at a vertex $w \in S'$ which 
includes $uv$. If $w = v$, then this leads to a contradiction since it implies the existence of a circuit in 
$G$. On the other hand, if $w \neq v$, then this implies a path in $G$ from a vertex $w \neq u$ of $S$ 
to $u$, which contradicts the definition of $u$. This completes the proof that the width of $S'$ is 
strictly smaller than the width of $S$.

Using the induction hypothesis, we now get that there exists an injective function $\psi_{S'}$ mapping 
every vertex of $S'$ to a sink of $G$. Using $\psi_{S'}$, we can define $\psi_S$ as follows. For every 
$w \in S$,
\[
\psi_S(w) =
\begin{cases}
\psi_{S'}(v) & \text{if $w = u$} \enspace,\\
\psi_{S'}(w) & \text{otherwise} \enspace.
\end{cases}
\]
Since $u$ appears in $S$ but not in $S'$, and $v$ appears in $S'$ but not in $S$, the injectiveness of 
$\psi_S$ follows from the injectiveness of $\psi_{S'}$. Moreover, $\psi_S$ clearly maps every vertex of 
$S$ to a sink of $G$ since $\psi_{S'}$ maps every vertex of $S'$ to such a sink. Finally, one can 
observe that $\psi_S(w)$ is reachable from $w$ for every $w \in S$ because $\psi_S(u) = \psi_{S'}(v)$ is 
reachable from $v$ by the definition of $\psi_{S'}$, and thus, also from $u$ due to the existence of the 
arc $uv$. 
\end{proof}

For every $1 \leq \ell \leq m$, let $T_\ell$ be the set of elements of $T$ that appear as vertices of 
$G_\ell$. Since $T$ is independent and $T_\ell$ contains only elements of $\cN_\ell$, 
Observation~\ref{obs:graph_properties} and Lemma~\ref{lem:function_graph} imply together the 
existence of an injective function $\psi_{T_\ell}$ mapping the elements of $T_\ell$ to sink vertices of 
$G_\ell$. We can now define the function $\phi_T$ promised by Proposition~\ref{prop:mapping}. For 
every element $u \in T$, the function $\phi_T$ maps $u$ to the multi-set $\{\psi_{T_\ell}(u) \mid 1 \leq 
\ell \leq m \text{ and } u \in T_\ell\}$, where we assume that repetitions are kept when the expression 
$\psi_{T_\ell}(u)$ evaluates to the same element for different choices of $\ell$. Let us explain why the 
elements in the multi-sets produced by $\phi_T$ are indeed all elements of $A$, as is required by the 
proposition. Consider an element $u_i \not \in A$, and let us show that it does not appear in the range 
of $\psi_{T_\ell}$ for any $1 \leq \ell \leq m$. If $u_i$ does not appear as a vertex in $G_\ell$, then this is 
obvious. Otherwise, the fact that $u_i \not \in A$ implies $u'_{i, \ell} = u_i$, and thus, the arcs of 
$G_\ell$ created in response to $u_i$ are arcs leaving $u_i$, which implies that $u_i$ is not a sink of 
$G_\ell$, and hence, does not appear in the range of $\psi_{T_\ell}$. 

Recall that every element $u \in \cN$ belongs to at most $p$ out of the ground sets $\cN_1, \cN_2, 
\dotsc,\allowbreak \cN_m$, and thus, is a vertex in at most $p$ out of the graphs $G_1, G_2, \dotsc, 
G_m$. Since $\psi_{T_\ell}$ maps every element to vertices of $G_\ell$, this implies that $u$ is in the 
range of at most $p$ out of the functions $\psi_{T_1}, \psi_{T_2}, \dotsc, \psi_{T_m}$. Moreover, since 
these functions are injective, every one of these functions that have $u$ in its range maps at most one 
element to $u$. Thus, the multi-sets produced by $\phi_T$ contain $u$ at most $p$ times. Since this 
is true for every element of $\cN$, it is true in particular for the elements of $S_n$, which is the first 
property of $\phi_T$ that we needed to prove.

Consider now an element $u \in A \setminus S_n$. Our next objective is to prove that $u$ appears at 
most $p - 1$ times in the multi-sets produced by $\phi_T$, which is the second property of $\phi_T$ 
that we need to prove. Above, we proved that $u$ appears at most $p$ times in these multi-sets by 
arguing that every such appearance must be due to a function $\psi_{T_\ell}$ that has $u$ in its range, 
and that the function $\psi_{T_\ell}$ can have this property only for the $p$ values of $\ell$ for which 
$u \in \cN_\ell$. Thus, to prove that $u$ in fact appears at most $p - 1$ times in the multi-sets produced 
by $\phi_T$, it is enough to argue that there exists a value $\ell$ such that $e \in \cN_\ell$, but 
$\psi_{T_\ell}$ does not have $u$ in its range. Let us prove that this follows from the membership of 
$u$ in $A \setminus S_n$. Since $u$ was removed from the solution of Algorithm~\ref{alg:analysis} at 
some point, there must be some index $1 \leq i \leq n$ such that both $u \in U_i$ and $u_i$ was added 
to the solution of Algorithm~\ref{alg:analysis}. Since $u \in U_i$, there must be a value $1 \leq \ell \leq 
m$ such that $u = x_{i, \ell}$, and since $u_i$ was added to the solution of 
Algorithm~\ref{alg:analysis}, $u'_{i, \ell} = x_{i, \ell}$. These equalities imply together that there are arcs 
leaving $u$ in $G_\ell$ (which were created in response to $u_i$). Thus, the function $\psi_{T_\ell}$ 
does not map any element to $u$ because $u$ is not a sink of $G_\ell$, despite the fact that $u \in 
\cN_\ell$.

To prove the other guaranteed properties of $\phi_T$, we need the following lemma.
\newcommand{\lemStepInternal}{%
	Consider two vertices $u_i$ and $u_j$ such that $u_j$ is reachable from $u_i$ in $G_\ell$. If $u_i \in 
	A$, then $f(u_i : S_{d(i) - 1}) \leq f(u_j : S_{d(j) - 1})$, otherwise, $f(x_{i, \ell} : S_{i-1}) \leq f(u_j : 
	S_{d(j)-1})$.}
\begin{lemma} \label{lem:step_internal}
	\lemStepInternal
\end{lemma}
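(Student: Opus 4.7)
My plan is to prove Lemma~\ref{lem:step_internal} by induction on the length $\lambda$ of a path in $G_\ell$ from $u_i$ to $u_j$. The base case $\lambda = 0$ reduces to $u_i = u_j$, which is trivial when $u_i \in A$ since both sides of the desired inequality coincide; the external case $u_i \notin A$ with $u_i = u_j$ does not actually need to be established for the subsequent applications of the lemma in Proposition~\ref{prop:mapping} (every target $u_j = \psi_{T_\ell}(u_i)$ there is a sink, hence an internal vertex, so $u_j \in A$), and I will simply note this.

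For the inductive step with $\lambda \geq 1$, I would let $u_k$ be the first vertex on the path after $u_i$, so that there is a direct arc $u_i \to u_k$ and $u_j$ is reachable from $u_k$ via a path of length $\lambda - 1$. The crucial observation is that every target of an arc in $G_\ell$ lies in $A$: arcs leaving $u'_{t,\ell}$ go into $C_{t,\ell} \setminus \{u'_{t,\ell}\}$, and by inspection of the two cases for $u'_{t,\ell}$ (either $u_t$ was added, in which case the other circuit elements are $u_t$ plus elements of $S_{t-1}$, or $u_t$ was rejected, in which case all other circuit elements live in $S_{t-1}$), every such target belongs to $A$. Thus $u_k \in A$ and the inductive hypothesis applied to the sub-path from $u_k$ to $u_j$ yields $f(u_k : S_{d(k)-1}) \leq f(u_j : S_{d(j)-1})$. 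It therefore remains to establish the direct-arc statement, namely that $f(u_i : S_{d(i)-1}) \leq f(u_k : S_{d(k)-1})$ when $u_i \in A$, or $f(x_{i,\ell} : S_{i-1}) \leq f(u_k : S_{d(k)-1})$ when $u_i \notin A$.

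To handle the direct-arc step, I would identify the time $t$ at which the arc was created (so $u_i = u'_{t,\ell}$ and $u_k \in C_{t,\ell} - u_i$) and split into three cases. (i) If $u_i \notin A$, the only way to have $u_i = u'_{t,\ell}$ is $t = i$ and $u_i = u_t$; then $u_k \in X_\ell$, and the tie-breaking rule in Line~\ref{line:tie_breaking} of \ExchangeAlg forces $f(x_{i,\ell} : S_{i-1}) \leq f(u_k : S_{i-1})$. (ii) If $u_i \in A$, then $u_i = x_{t,\ell}$ with $u_t$ accepted, and $d(i) = t$; if $u_k \in X_\ell$ the same tie-breaking rule gives $f(u_i : S_{t-1}) \leq f(u_k : S_{t-1})$. (iii) If $u_i \in A$ and $u_k = u_t$, then the acceptance test $f(u_t \mid S_{t-1}) \geq (1+c) \cdot f(U_t : S_{t-1})$ together with $u_i \in U_t$ yields $f(u_i : S_{t-1}) \leq f(u_t : S_{t-1})$ (noting that $f(u_t \mid S_{t-1}) = f(u_t : S_{t-1})$ since $u_t$ arrives last). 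In all three cases I end with a bound of the form $f(\cdot : S_{t-1}) \leq f(u_k : S_{t-1})$ (with $\cdot$ being $u_i$ or $x_{i,\ell}$ and $t = i$ or $t = d(i)$).

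Finally, I would upgrade $f(u_k : S_{t-1})$ to $f(u_k : S_{d(k)-1})$ by submodularity. The key monotonicity fact is that once an element arrives it is only ever removed from (never re-added to) the algorithm's solution; consequently, for $t \leq d(k)$ one has $S_{d(k)-1} \cap \{u_1,\dots,u_{k-1}\} \subseteq S_{t-1} \cap \{u_1,\dots,u_{k-1}\}$, and the diminishing returns inequality immediately gives $f(u_k : S_{t-1}) \leq f(u_k : S_{d(k)-1})$. Combining this with the direct-arc bound and the inductive hypothesis closes the induction. I expect the trickiest part to be the bookkeeping in case (iii): matching up the condition $u_i \in U_t$ with the acceptance inequality, and ensuring the passage from $f(u_t \mid S_{t-1})$ to $f(u_t : S_{d(t)-1})$ is valid (it is, precisely by the same monotonicity observation applied to $u_t$ with $d(t) \geq t + 1$).
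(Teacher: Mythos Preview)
Your proposal is correct and follows essentially the same route as the paper: both arguments reduce to a single-arc step (your cases (ii)/(iii) match the paper's Case~1/Case~2 for $u_i\in A$, and your case (i) matches the paper's handling of the first step when $u_i\notin A$), establish that arc targets lie in $A$ (the paper phrases this as ``no arc goes from an internal vertex to an external one''), and then extend along a path---you by explicit induction, the paper by chaining. One small point to make explicit in case~(iii): isolating $f(u_i:S_{t-1})$ from $f(U_t:S_{t-1})$ uses that the remaining summands are non-negative, which is exactly Observation~\ref{obs:set_properties}.
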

\begin{proof}{Proof}
We begin by proving a weaker version of this lemma that makes the following two simplifying assumptions: (i) $u_i \in A$, and (ii) there is a direct arc from $u_i$ to $u_j$. 
The existence of the last arc implies that there is some value $1 \leq h \leq n$ such that $u'_{h, \ell} = u_i$ 
and $u_j \in C_{h, \ell}$. Since $u_i \in A$ is an internal vertex of $G_\ell$, it cannot be equal to $u_h$ because this would have 
implied that $u_h$ was rejected immediately by Algorithm~\ref{alg:analysis}, and is thus, not internal. 
Thus, $u_i = x_{h, \ell}$. Recall now that $C_{h, \ell} - u_h$ is equal to the set $X_\ell$ chosen by 
${\ExchangeAlg}$ when it is executed with the element $u_h$ and the set $S_{h-1}$;
and let us consider two cases.

\paragraph{Case 1} 
In the case of $u_j \neq u_h$, the fact that $u_i = x_{h, \ell}$ and the way $x_{h, \ell}$ is chosen out of $X_\ell$ imply
\[
f(u_i : S_{d(i) - 1})
=
f(u_i : S_{h-1})
\leq
f(u_j : S_{h-1})
\leq
f(u_j : S_{d(j)-1})
\enspace,
\]
where the equality holds since $u'_{h, \ell} = u_i$ implies $d(i) = h$ and the last inequality holds since 
$f(u_j : S_{r - 1})$ is a non-decreasing function of $r$ when $r \geq j$ and the membership of $u_j$ in 
$C_{h, \ell}$ implies $j \leq h \leq d(j)$.

\paragraph{Case 2}
It remains to consider the case $u_j = u_h$. 
In this case, the fact that $u_j = u_h$ is accepted into the 
solution of Algorithm~\ref{alg:analysis} implies
\begin{align*}
f(u_j : S_{d(j)-1})
\geq{} &
f(u_j : S_{j-1})
=
f(u_j \mid S_{j-1} \cap \{u_1, u_2,\dotsc, u_{j-1}\})
=
f(u_j \mid S_{j-1})\\
={} &
f(u_h \mid S_{h-1})
\geq
(1 + c) \cdot f(U_h : S_{h - 1})
\geq
f(U_h : S_{h - 1})\\
\geq{} &
f(x_{h, \ell} : S_{h-1})
=
f(u_i : S_{h-1})
=
f(u_i : S_{d(i) - 1})
\enspace,
\end{align*}
where the first inequality holds since $d(j) \geq j$ by definition, the last equality holds since $u'_{h, \ell} 
= u_i$ implies $d(i) = h$ and the last two inequalities follow from the fact that the elements of $U_h 
\subseteq A$ do not belong to $Z$ by Observation~\ref{obs:set_properties}, which implies (again, by 
Observation~\ref{obs:set_properties}) that $f(u : S_{h-1}) \geq 0$ for every $u \in U_h$. This completes 
the proof of the weaker version of the lemma.

We begin the proof of the full version of the lemma by showing that no arc of $G_\ell$ goes from an internal vertex to an external one. Assume this is 
not the case, and that there exists an arc $uv$ of $G_\ell$ from an internal vertex $u$ to an external 
vertex $v$. By definition, there must be a value $1 \leq h \leq n$ such that $v$ belongs to the circuit 
$C_{h, \ell}$ and $u'_{h, \ell} = u$. The fact that $u$ is an internal vertex implies that $u_h$ must have 
been accepted by Algorithm~\ref{alg:analysis} upon arrival because otherwise we would have gotten $u 
= u'_{h, \ell} = u_{h}$, which implies that $u$ is external, and thus, leads to a contradiction. 
Consequently, we get $C_{h, \ell} \subseteq A$ because every element of $C_{h, \ell}$ must either be 
$u_{h}$ or belong to $S_{h - 1}$. In particular, $v \in A$, which contradicts our assumption that $v$ is 
an external vertex.

We are now ready to prove the lemma for $u_i \in A$.
Consider some path $P$ from $u_i$ to $u_j$, and let us denote the 
vertices of this path by $u_{r_0}, u_{r_1}, \dotsc, u_{r_{|P|}}$. Since $u_i$ is an internal vertex of 
$G_\ell$ and we already proved that no arc of $G_\ell$ goes from an internal vertex to an external one, 
all the vertices of $P$ must be internal. Thus, by applying the weaker version of the lemma that we have 
already proved to every pair of adjacent vertices along the path $P$, we get that the expression 
$f(u_{r_k} : S_{d(r_k) - 1})$ is a non-decreasing function of $k$, and in particular,
\[
f(u_i : S_{d(i) - 1})
=
f(u_{r_0} : S_{d(r_0) - 1})
\leq
f(u_{r_k} : S_{d(r_k) - 1})
=
f(u_j : S_{d(j) - 1})
\enspace.
\]

It remains to prove the lemma for $u_i \not \in A$. Let $u_h$ denote the vertex immediately after $u_i$ on some 
path from $u_i$ to $u_j$ in $G_\ell$. Since $u_i \not \in A$, we get that $u'_{i, \ell} = u_i$, which implies 
that the arcs of $G_\ell$ that were created in response to $u_i$ go from $u_i$ to the vertices of $C_{i, 
\ell} - u_i$. Since Observation~\ref{obs:graph_properties} guarantees that $u_i = u'_{i, \ell} \neq u'_{j, 
\ell}$ for every value $1 \leq j \leq n$ which is different from $i$, there cannot be any other arcs in 
$G_\ell$ leaving $u_i$, and thus, the existence of an arc from $u_i$ to $u_h$ implies $u_h \in C_{i, \ell} - 
u_i$. Recall now that $C_{i, \ell} - u_i$ is equal to the set $X_\ell$ in the execution of {\ExchangeAlg} 
corresponding to the element $u_i$ and the set $S_{i-1}$, and thus, by the definition of $x_{i, \ell}$, 
$f(x_{i, \ell} : S_{i-1}) \leq f(u_h : S_{i-1})$. Additionally, as an element of $C_{i, \ell} - u_i$, $u_h$ must 
be a member of $S_{i-1} \subseteq A$, and thus, by the part of the lemma we have already proved, we 
get $f(u_h : S_{d(h)-1}) \leq f(u_j : S_{d(j) - 1})$ because $u_j$ is reachable from $u_h$. Combining the 
two inequalities we have proved, we get
\[
f(x_{i, \ell} : S_{i-1})
\leq
f(u_h : S_{i-1})
\leq
f(u_h : S_{d(h)-1})
\leq
f(u_j : S_{d(j) - 1})
\enspace,
\]
where the second inequality holds since the fact that $u_h \in C_{i, \ell} - u_i \subseteq S_{i-1}$ implies 
$d(h) \geq i$. 
\end{proof}

Consider now an arbitrary element $u_i \in T \setminus A$. Let us denote by $u_{r_\ell}$ the element 
$u_{r_\ell} = \psi_{T_\ell}(u_i)$ if it exists, and recall that this element is reachable from $u_i$ in $G_\ell$. 
Thus, the fact that $u_i$ is not in $A$ implies
\begin{align*}
f(u_i \mid S_{i-1})
\leq{} &
(1 + c) \cdot \sum_{u \in U_i} f(u : S_{i-1})
=
(1 + c) \cdot \mspace{-18mu} \sum_{\substack{1 \leq \ell \leq m\\(S_{i-1} + u_i) \cap \cN_\ell \not \in 
\cI_\ell}} \mspace{-36mu} f(x_{i, \ell} : S_{i-1})\\
\leq{} &
(1 + c) \cdot \mspace{-18mu} \sum_{\substack{1 \leq \ell \leq m\\(S_{i-1} + u_i) \cap \cN_\ell \not \in 
\cI_\ell}} \mspace{-36mu} f(u_{r_\ell} : S_{d(r_\ell)})
=
(1 + c) \cdot \mspace{-18mu} \sum_{u_j \in \phi_T(u_i)} f(u_j : S_{d(j)})
\enspace,
\end{align*}
where the second inequality follows from Lemma~\ref{lem:step_internal} and the last equality holds since the 
values of $\ell$ for which $(S_{i-1} + u_i) \cap \cN_\ell \not \in \cI_\ell$ are exactly the values for which 
$u_i \in T_\ell$, and thus, they are all also exactly the values for which the multi-set $\phi_T(u_i)$ 
includes the value of $\psi_{T_\ell}(u_i)$. This completes the proof of the third property of $\phi_T$ 
that we need to prove.

Finally, consider an arbitrary element $u_i \in A \cap T$. Every element $u_j \in \phi_T(u_i)$ can be 
reached from $u_i$ in some graph $G_\ell$, and thus, by Lemma~\ref{lem:step_internal},
\begin{align*}
f(u_i \mid S_{i - 1})
={} &
f(u_i \mid S_{i - 1} \cap \{u_1, u_2, \dotsc, u_{i-1}\})
=
f(u_i : S_{i-1})\\
\leq{} &
f(u_i : S_{d(i)-1})
\leq
f(u_j : S_{d(j)-1})
\enspace,
\end{align*}
where the first inequality holds since $d(i) \geq i$ by definition and $f(u_i : S_{r-1})$ is a 
non-decreasing function of $r$ for $r \geq i$. Additionally, we observe that $u_i$, as an element of $T 
\cap A$, belongs to $T_\ell$ for every value $1 \leq \ell \leq m$ for which $u_i \in \cN_\ell$, and thus, the 
size of the multi-set $\phi_T(u_i)$ is equal to the number of ground sets out of $\cN_1, \cN_2, \dotsc, 
\cN_m$ that include $u_i$. Since we assume by Reduction~\ref{red:exact} that every element belongs 
to exactly $p$ out of these ground sets, we get that the multi-set $\phi_T(u_i)$ contains exactly $p$ 
elements (including repetitions), which completes the proof of Proposition~\ref{prop:mapping}.

\section{Experimental Results}\label{sec:experimental_results}
In this section, we compare the performance of our proposed offline and streaming algorithms with the 
state-of-the-art methods for maximizing non-monotone submodular functions on real-world datasets .
We empirically demonstrate that by using the subsampling technique, our algorithms return solutions 
with similar utility values as other methods, but at a fraction of computational cost.
We first compare the performance of \algrd with other competitive offline algorithms. 
Then, we evaluate the effectiveness of \AlgSampling in terms of utility and scalability.

\subsection{Offline Algorithms} \label{sec:experiments-greedy}
In this section, we compare the performance of \algrd with two other offline 
algorithms:  \algfantom \citep{MBK16} and \algdt  \citep{FHK17}, both of which are repeated greedy 
techniques requiring $O(nkp)$ and $O(nk\sqrt{p})$ oracle queries, respectively. 
In order to improve the performance of \algrd in terms of utility, we also consider a boosted version of 
\algrd by taking the best of four runs, denoted \msg.
We test these algorithms on a personalized movie recommendation system and find that while  \algrd 
and its boosted variant return comparable solutions to \algdt and \algfantom, they run orders of 
magnitude faster. 

In the movie recommendation system application, we observe movie ratings from users, and our objective is to recommend movies to users based on their reported favorite genres. 
In particular, given a user-specified input of favorite genres, we would like to recommend a short list of movies that are diverse, and yet representative, of those genres. 
The similarity score between movies that we use is derived from user ratings similar to the methods used in \citep{Lindgren2015}.
Next, we describe the experiment setting in more detail.
 
Let $\ground$ be a set of movies, and $\mathcal{G} = \{G_1, \dots G_p \}$ be the set of all movie 
genres, where each genre is a subset $G_i \subseteq \ground$.
Note that each movie may be identified with multiple genres.
Let $s_{i,j}$ be a non-negative similarity score between movies $i,j \in \cN$, and suppose a user $u$ 
seeks a representative set of movies from genres $\mathcal{G}_u \subseteq \mathcal{G}$. 
Note that the set of movies from these genres is $\ground_u = \cup_{G_i \in \mathcal{G}_u} G_i$. A 
reasonable utility function for choosing a diverse yet representative set of movies $S$ for $u$ is 
\begin{equation} \label{eq:cut_fun}
f_u(S) = \sum \limits_{i \in S} \sum \limits_{j \in \cN_u} s_{i,j} - \lambda \sum 
\limits_{i \in S} \sum \limits_{j \in S} s_{i,j},
\end{equation} 
for some parameter $0 \leq \lambda \leq 1$. Observe that the first term is a sum-coverage function that 
captures the representativeness of $S$, and the second term is a dispersion function penalizing 
similarity within $S$. When $\lambda > 0$, this function is non-monotone and for $\lambda = 1$, this 
utility is the usual cut function.

The user may specify an upper limit $k$ on the number of movies in his recommended set $S$, as well 
as an upper limit $k_i$ on the number of movies from each genre $G_i$ appearing in $S$ 
(we call the parameters $k_1, \dots k_p$ \emph{genre limits}). 
One can show that these constraints on the recommended movie set $S$ form a $p$-extendible 
system. 

For our experiments, we use the MovieLens 20M dataset, which features 20 million ratings of \num[group-separator={,}]{27000} movies by \num[group-separator={,}]{138000} users. 
To obtain a similarity score between movies, we take an approach developed in \citep{Lindgren2015}. 
First, we fill missing entries of an incomplete movie-user matrix $M \in \mathbb{R}^{n \times m}$ via low-rank matrix completion \citep{Candes2008, Hastie2015}, where $m$ is the total number of users.
Then we randomly sample to obtain a matrix $\tilde{M} \in \mathbb{R}^{n \times \ell}$ where $\ell \ll m$ and the inner products between rows is preserved. 
The similarity score between movies $i$ and $j$ is then defined as the inner product of their corresponding rows in $\tilde{M}$. 
In our experiment, we set the total recommended movies limit to $k=10$. The genre limits $k_g$ are set 
to be equal across genres, and we vary them from 1 to 6. 
For all experiments, we use $\lambda = 0.9$.
Finally, we set our favorite genres $G_u$ to be \textit{Adventure}, \textit{Animation} and \textit{Fantasy}; thus, the corresponding constraint set is 3-extendible.
For each algorithm and test instance, we record the function value of the returned solution $S$ and 
the number of evaluations of the objective $f$, which is a machine-independent measure of run-time.

\begin{figure}[htb!] 
\begin{center}
	\subfloat[Solution Quality]{\includegraphics[width=80mm]{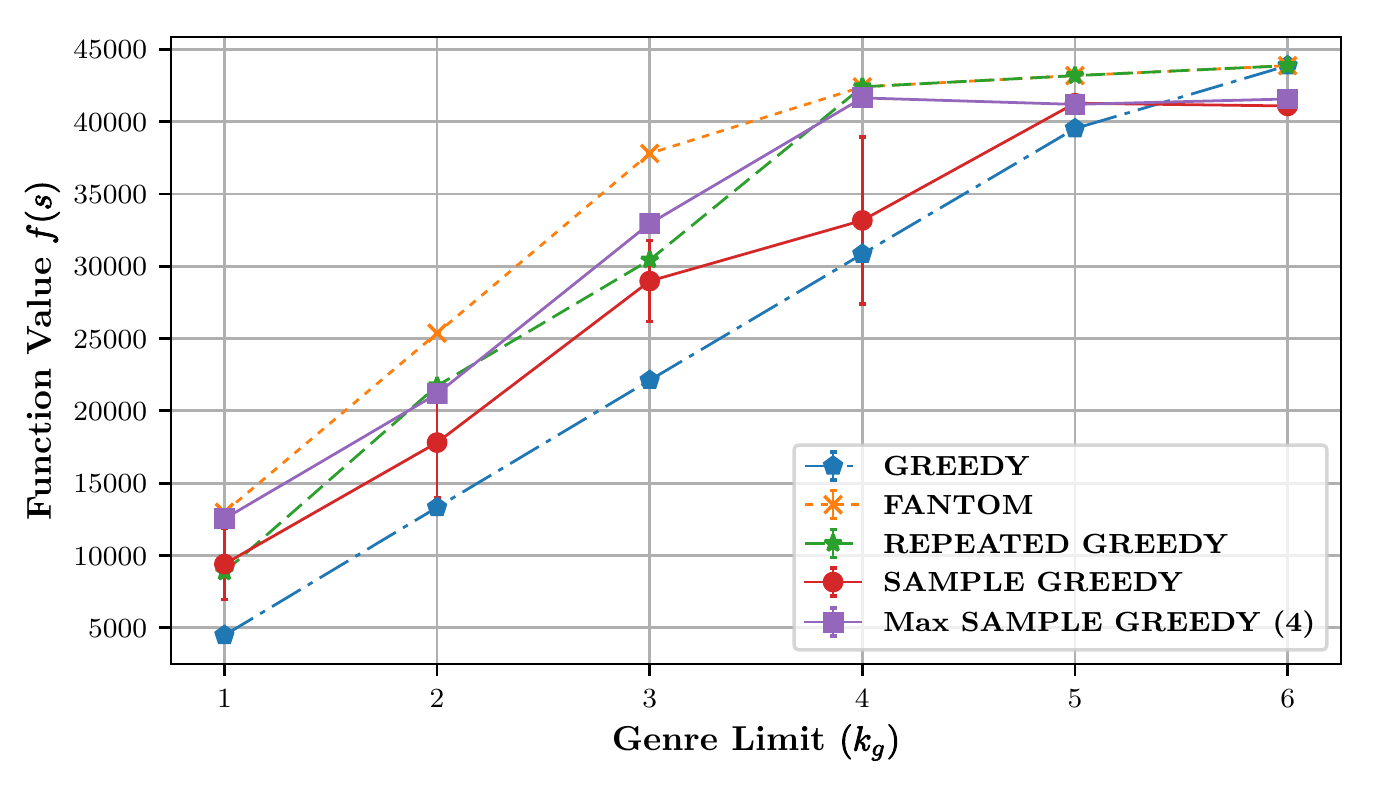}\label{fig:fun_val}}
	\subfloat[Run Time]{\includegraphics[width=80mm]{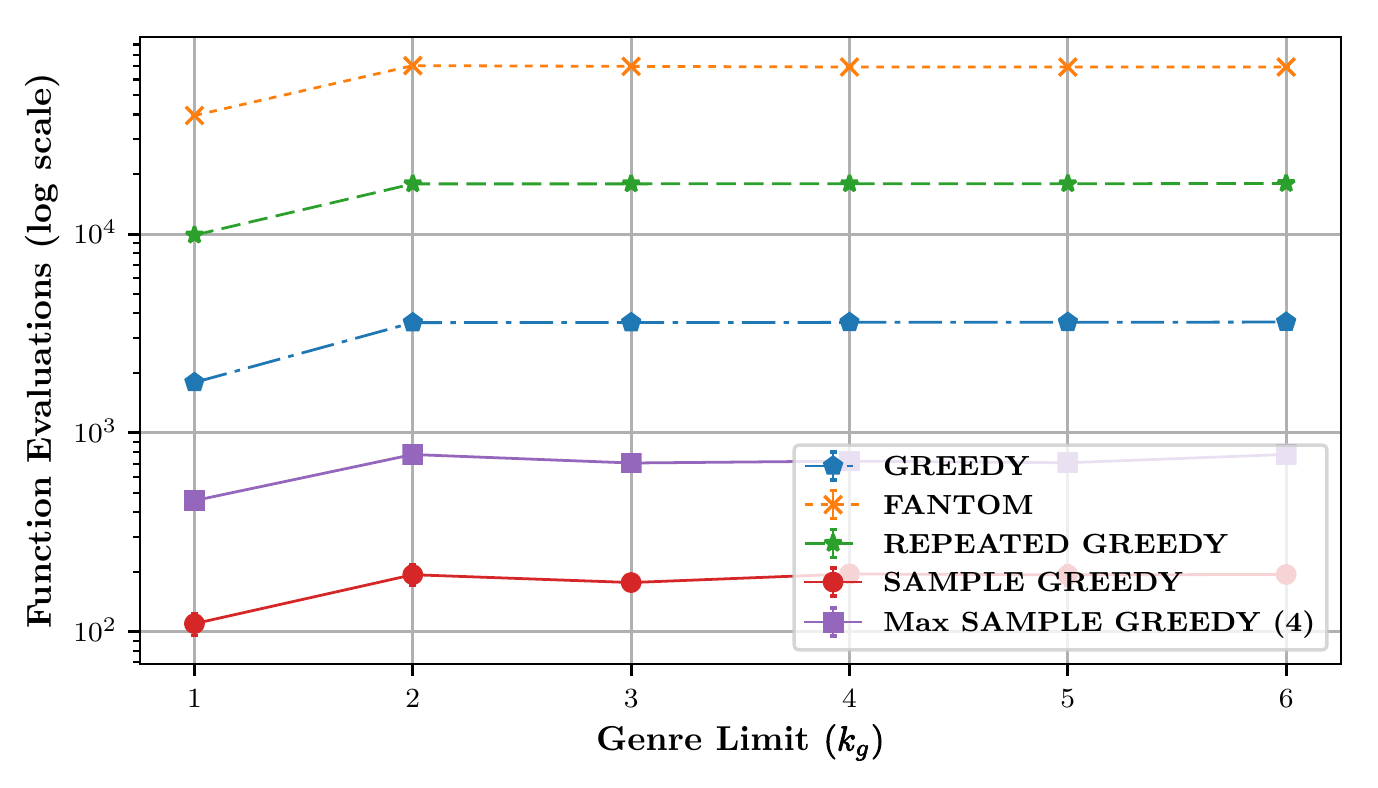} \label{fig:run_time}} \\
	\subfloat[Ratio Comparison $k_g = 1$]{\includegraphics[width=80mm]{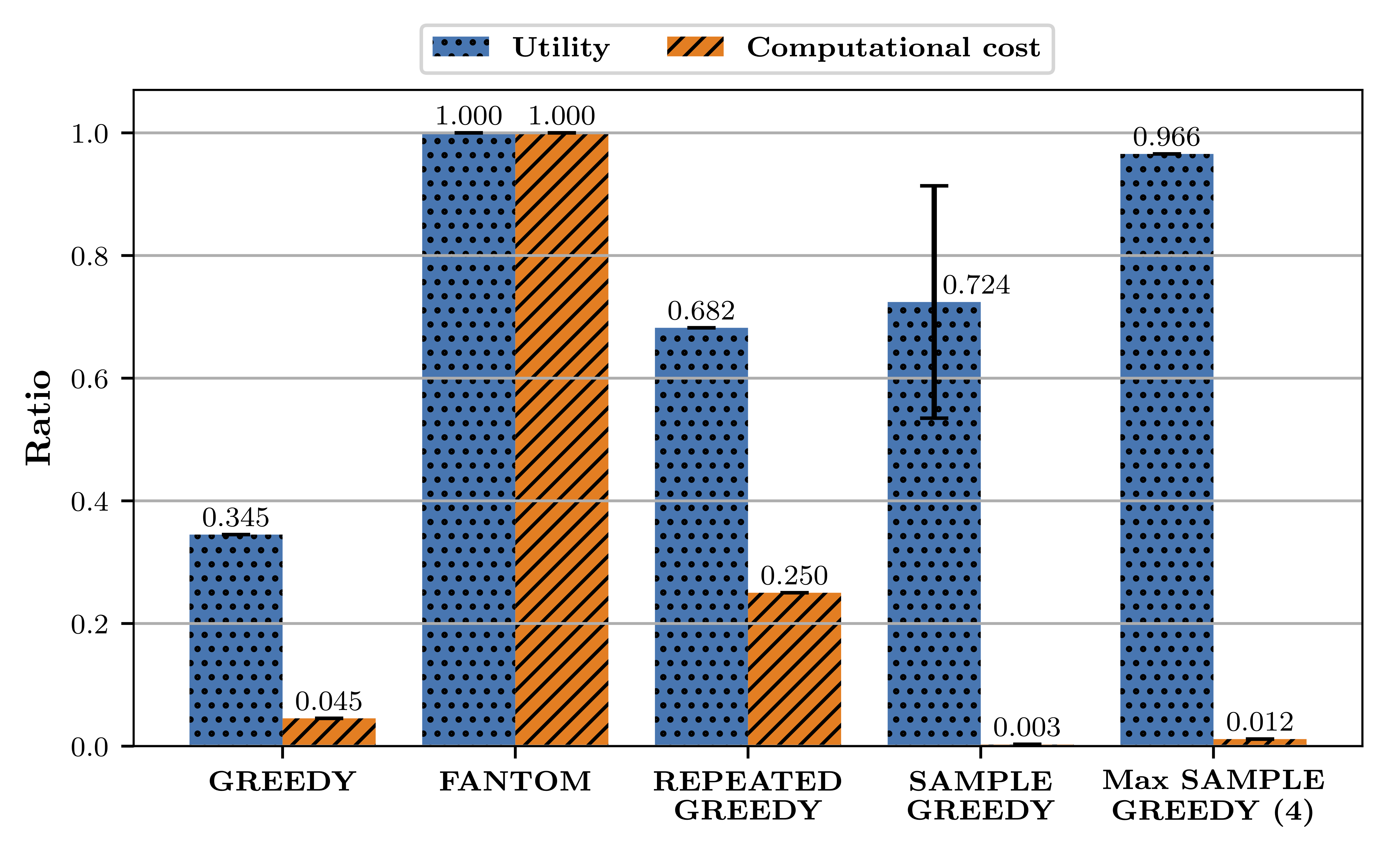}\label{fig:ratio_comp_1}}
	\subfloat[Ratio Comparison $k_g =3$]{\includegraphics[width=80mm]{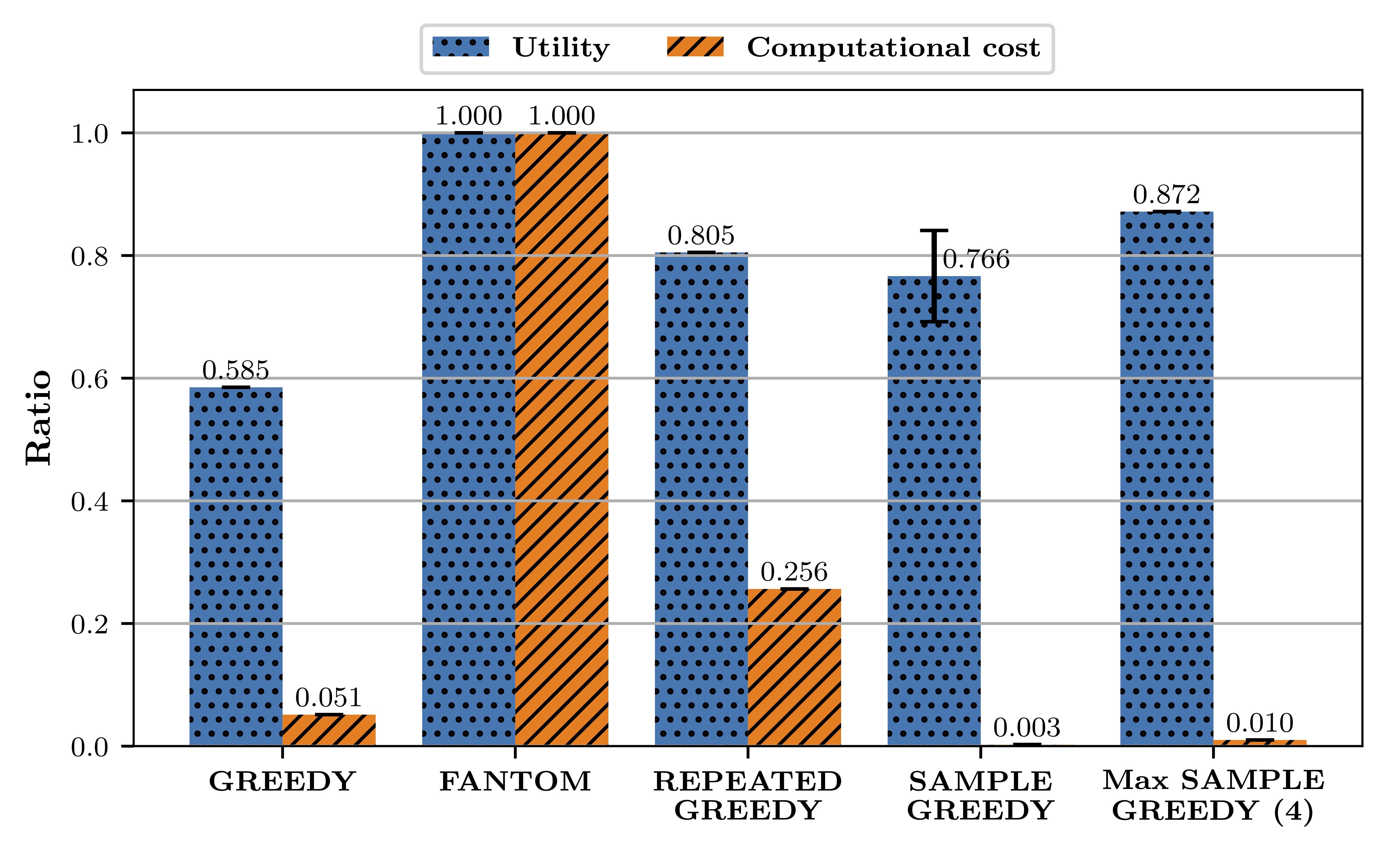}\label{fig:ratio_comp_4}}	
	\caption{Performance Comparison. \ref{fig:fun_val} shows the function value of the returned 
	solutions for tested algorithms with varying genre limit $k_g$. \ref{fig:run_time} shows the number 
	of function evaluations on a logarithmic scale with varying genre limit $k_g$. \ref{fig:ratio_comp_1} 
	and \ref{fig:ratio_comp_4} show the ratio of solution quality and cost with \algfantom serving as a baseline.}
\label{fig:performance}
\end{center}
\end{figure}

The result of this experiment is depicted in Figure~\ref{fig:performance}.
First, Figure~\ref{fig:fun_val} shows the utility value of the solution sets for the various algorithms.
 As we see from Figure~\ref{fig:fun_val}, \algfantom consistently returns a solution set with the highest 
 function value. However, \algrd and \msg return solution sets with similarly high 
 function values. 
 Even for four runs, \msg noticeably increases the utility of the returned solution.
Both \algrd and \msg return solutions with larger utility than Greedy. 
Figure~\ref{fig:run_time} shows the number of function calls made by each algorithm as the genre limit 
$k_g$ is varied. 
 For each algorithm, the number of function calls appears roughly constant as $k_g$ is varied---this is 
 likely due to the lazy greedy implementation and also to the small values used for $k_g$.
 We observe that
  \algrd runs roughly two orders magnitude faster than \algdt and three orders of magnitude faster 
  than \algfantom. 
	Moreover, even after we boost \algrd by executing it a few times, its computational cost remains much lower than that of the other algorithms.

To better analyze the trade-off between the utility of the solution value  and the computational cost, in 
Figures~\ref{fig:ratio_comp_1} and \ref{fig:ratio_comp_4}, we compare the ratio of these measurements 
for the various algorithms using \algfantom as a baseline. 
For both cases of $k_g=1$ and $k_g=3$, we see that \msg provides 
nearly the same utility as \algfantom, while only incurring around 1\% of the computational cost. 
For the case of $k_g =3$, \algrd achieves $76.6\%$ of the utility of \algfantom, while incurring only $0.3\%$  
of 
the computational cost.
Thus, we may conclude that our algorithm provides solutions whose quality is on par with current state-of-the-art, and yet they run in a small fraction of the time.

While Greedy may commit to poor solutions early on which may not be improved, \algrd avoids this (in 
expectation) by only 
considering a fraction of the ground set. 
Fortunately, the movie recommendation system has a very interpretable solution so we can observe 
this phenomenon in a qualitative manner. See Figure~\ref{fig:solution_sets} for the movies 
recommended by the different algorithms. 
Because $k_g = 1$, we are constrained here to have at most one movie from \textit{Adventure}, \textit{Animation} and \textit{Fantasy}. 
As seen in Figure~\ref{fig:solution_sets}, \algfantom and \algrd return maximum size solution sets that 
are both diverse and representative of these genres. 
On the other hand, Greedy gets stuck choosing a single movie that belongs to all three genres, 
precluding any other choice of movie from the solution set.

\begin{figure}[htb!] 
	\centering  
	\includegraphics[width=\textwidth]{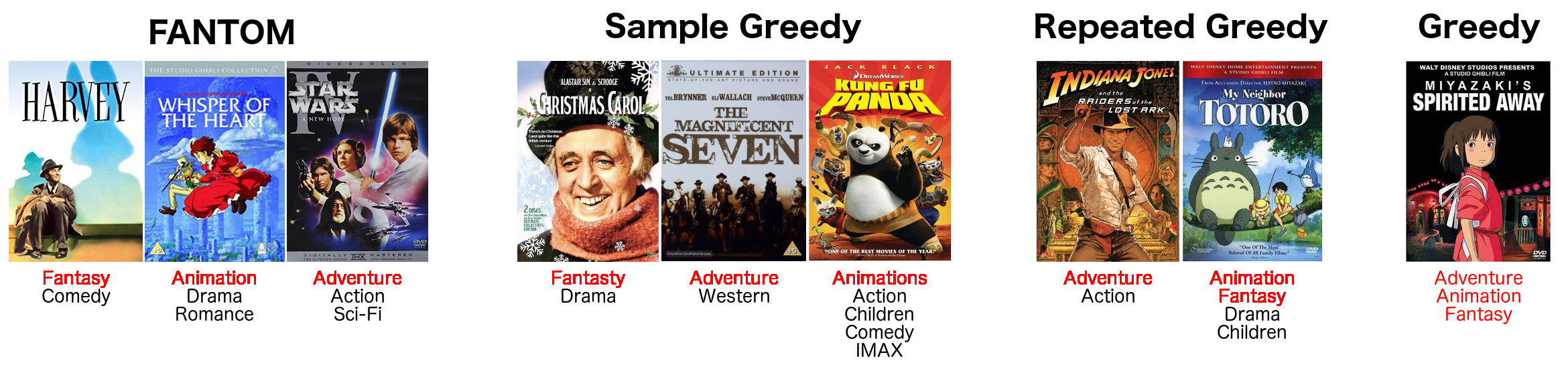} 
	\caption{Solution sets: the movies in the solution sets for $k_g=1$ returned by \algfantom, Sample 
	Greedy, Repeated Greedy and Greedy are listed here, along with genre information. The favorite 
	genres ($G_u$) are in red.} \label{fig:solution_sets}
\end{figure}

\subsection{Streaming Algorithms} \label{sec:experiments-streaming}
In  this section, we investigate the performance of our proposed streaming algorithm on two data 
summarization applications.
We demonstrate that by subsampling elements of a stream, it is possible to dramatically 
reduce the computational cost while providing solutions with similar utility as state-of-the-art 
algorithms.

For the first task, in Section~\ref{sec:stream-video}, we replicate the experiment of \citet{MJK17} and compare the 
performance of our algorithm (\AlgSampling) with the performance of the algorithm of 
\citet{MJK17}. 
Unfortunately, to allow such a comparison we had to resort to the relatively small datasets that existing algorithms can handle. 
Despite the small size of these datasets, we could still observe the superiority of our method against 
this state-of-the-art. 
In Section~\ref{sec:stream-location}, we investigate the scalability of our algorithm to larger datasets in a 
location summarization task where the other streaming algorithms are not applicable due to their larger 
computational cost.

\subsubsection{Video Summarization.} \label{sec:stream-video}

We evaluate the performance of \AlgSampling on a video summarization task and
compare it with \AlgSeq \citep{GCGS14}\footnote{\url{https://github.com/pujols/Video-summarization}} and  \AlgLocal \citep{MJK17}.\footnote{\url{https://github.com/baharanm/non-mon-stream}}
 For our experiments, we use the Open Video Project (OVP) and the YouTube datasets, which have 50 and 39 videos, respectively \citep{DLDA11}. 

Determinantal point process (DPP) is a powerful method to capture diversity in datasets \citep{M75,kulesza2012determinantal}. 
Let $\cN = \{1,2, \cdots,  n\}$ be a ground set of $n$ items.
A DPP defines a probability distribution over all subsets of $\cN$, distributed as $\Pr[Y = S] = \frac{\det 
(L_S)}{\det(I + L)}$ for every set $S \subseteq \ground$,
where $L$ is a positive semidefinite kernel matrix, $L_S$  is the principal sub-matrix of $L$ indexed by 
$S$, and $I$ is the $n \times n$ identity matrix. 
The most diverse subset of $\cN$ is the one with the maximum probability in this distribution. 
Although finding this set is NP-hard \citep{KLQ95}, the function $f(S) = \log \det (L_S)$ is a 
non-monotone submodular function \citep{kulesza2012determinantal}.

We follow the experimental setup of  \citep{GCGS14} for extracting frames from videos, finding a linear kernel matrix  $L$ and evaluating the quality of produced summaries based on their F-score.
\citet{GCGS14} define a sequential DPP, where each video sequence is partitioned into disjoint segments of equal sizes. For selecting a subset $S_t$ from each segment $t$ (i.e., set $\cP_t$), a DPP is defined on the union of the frames in this segment and the selected frames $S_{t-1}$  from the previous segment. 
Therefore, the conditional distribution of $S_t$ is given by, $\Pr[S_t|S_{t-1}] =  \frac{\det (L_{S_t \cup S_{t-1}})}{\det(I_t + L)},$ where $L$ is the kernel matrix defined over $\cP_t \cup S_{t-1}$, and $I_t$ is a diagonal matrix of the same size as $\cP_t \cup S_{t-1} $ in which the elements corresponding to $S_{t-1}$ are zeros and
the elements corresponding to $\cP_t$ are $1$.
For the detailed explanation, please refer to \citep{GCGS14}.
In our experiments, we focus on maximizing the non-monotone submodular function $f(S_t) = \log \det (L_{S_t \cup S_{t-1}})$. 
We remark that this function can take negative values, violating the non-negativity condition required
for our theoretical guarantees.

We first compare the objective values  (F-scores) of \AlgSampling and \AlgLocal for different segment sizes over YouTube and OVP datasets. In each experiment, the values are normalized to the F-score of summaries generated by \AlgSeq. 
While \AlgSeq has the best performance in terms of maximizing the objective value, in Figures~\ref{fig:stream-youtube} and \ref{fig:stream-ovp}, we observe that both  \AlgSampling and  \AlgLocal produce summaries with very high qualities. 
Figure~\ref{fig:video-60} shows the summary produced by our algorithm for OVP video number 60.
\citet{MJK17} showed that their algorithm (\AlgLocal) runs three orders of magnitude faster than \AlgSeq \citep{GCGS14}.
 In our experiments (see Figure~\ref{fig:stream-running}), we observed that \AlgSampling is 40 and 50 
 times faster than \AlgLocal  for the YouTube and OVP datasets, respectively, which demonstrates the 
 power of subsampling  in reducing the computational cost. 
 Note that  for different segment sizes the number of frames remains constant; therefore, the time complexities for both \AlgSampling and \AlgLocal do not change.

\begin{figure}[htb!] 
	\begin{center}
		\subfloat[YouTube videos] {\includegraphics[width=80mm]{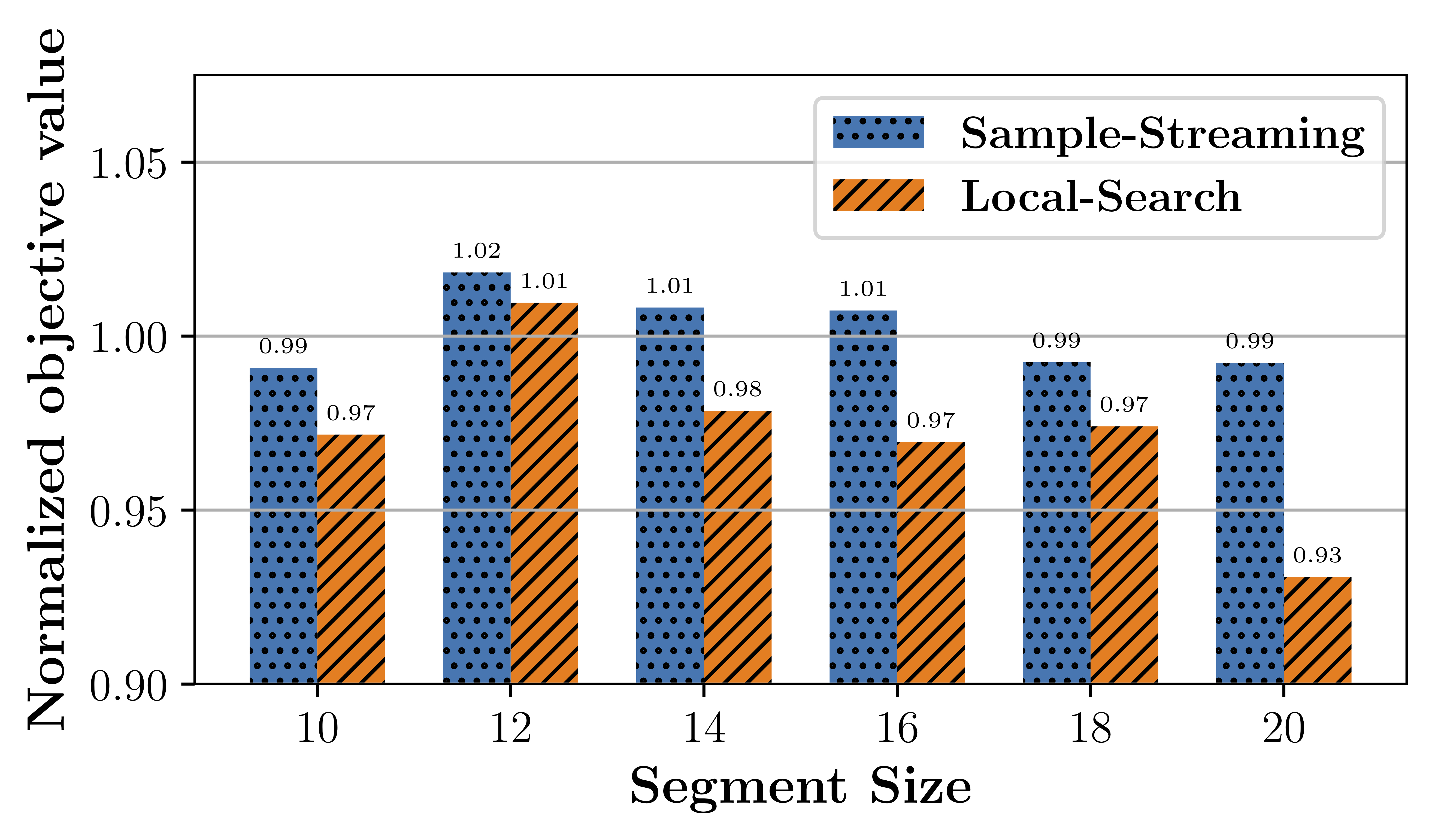} \label{fig:stream-youtube}}
		\subfloat[OVP videos]{\includegraphics[width=80mm]{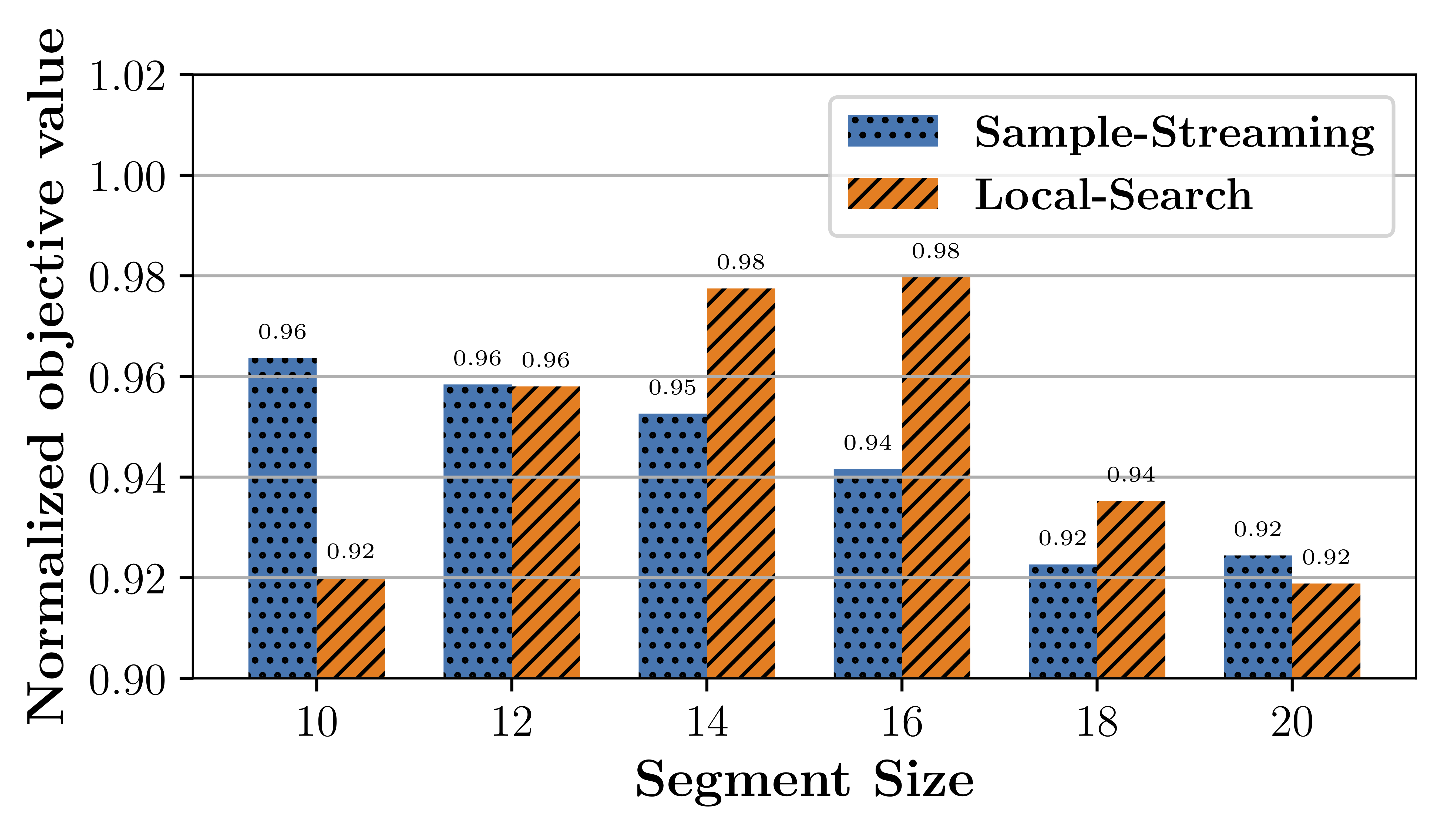} \label{fig:stream-ovp}} \\
		\subfloat[Running time]{\includegraphics[width=60mm]{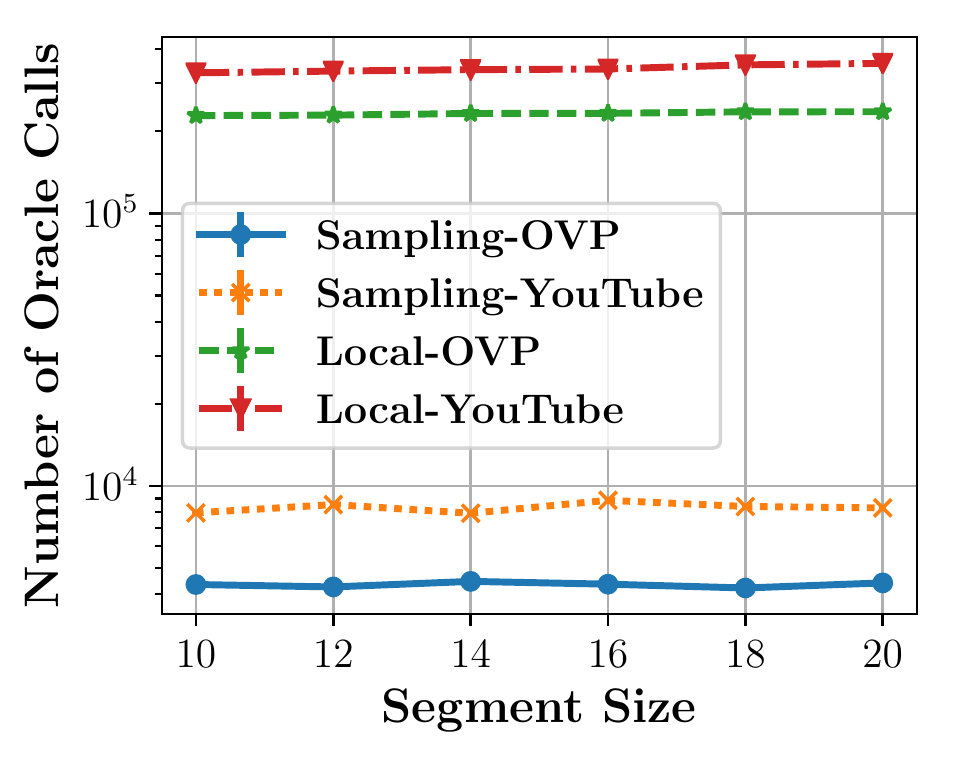}\label{fig:stream-running}}
		\caption{Comparing the normalized objective value and running time
			of \AlgSampling and \AlgLocal for different segment sizes.}
	\end{center}
\end{figure}

\begin{figure}[htb!]
	\centering
	\includegraphics[height=.5in]{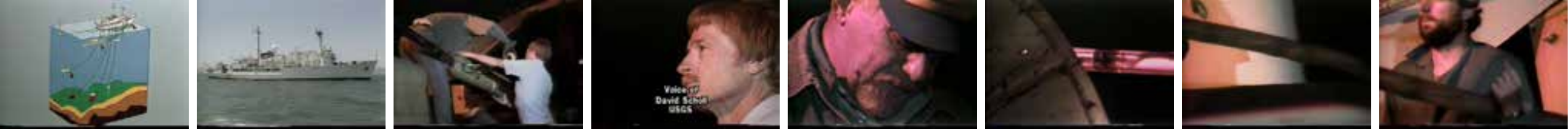}
	\caption{Summary generated by \AlgSampling for OVP video number 60.}
	\label{fig:video-60}
\end{figure}

\begin{figure}[t!]
	\centering
	\includegraphics[height=1.7in]{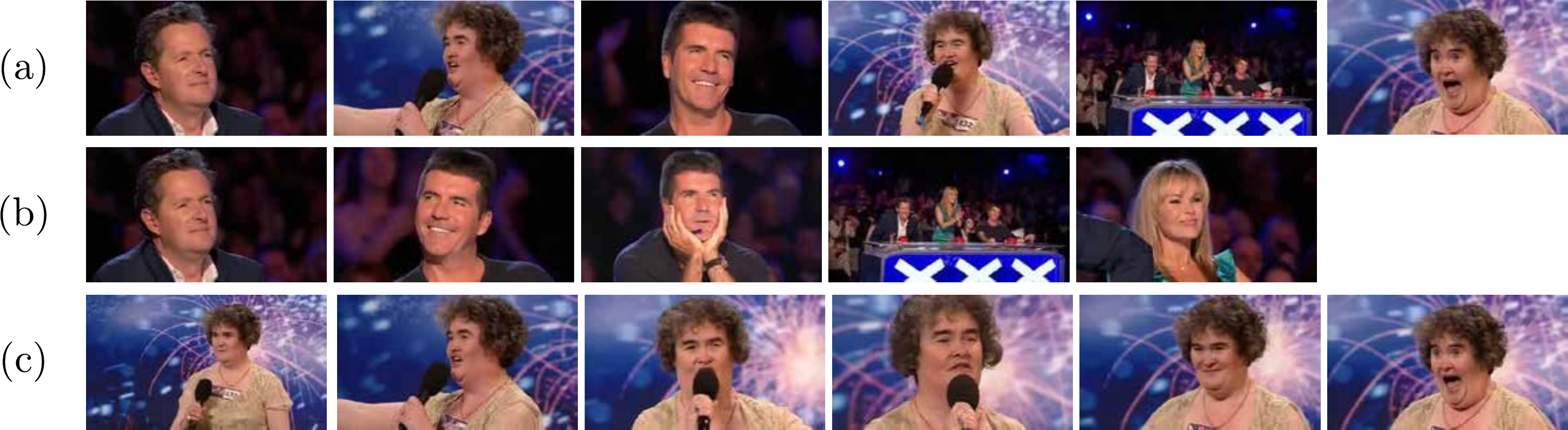}
	\caption{Summaries generated by \AlgSampling for YouTube video number 106: (a)  a $6$-matchoid constraint, (b) a $3$-matchoid constraint and (c) a partition matroid constraint.}
	\label{fig:constraint}
\end{figure}

In the second experiment, we study the effect of imposing different constraints on video 
summarization task for YouTube video number 106, which is a part of the television series,  ``Britain's 
Got Talent''.
We consider a constraint on the faces which appear in the summary's frames, using the same methods as 
described by \citet{MJK17} for face recognition.
Note that a frame may contain more than one face.
In the first set of constraints, we restrict to $3$ the number of appearances in the summary of each face, which results in a 
6-matchoid since there are six unique faces in the video. Figure~\ref{fig:constraint}(a) shows the summary produced for this task. 
We also produce a summary limiting the number of frames which contain each judge.
Such a constraint forms a 3-matchoid and this summary is shown in  Figure~\ref{fig:constraint}(b).
Finally, Figure~\ref{fig:constraint}(c) shows a summary with a constraint allowing only the singer's face.

\subsubsection{Location Summarization.} \label{sec:stream-location}

In the second task, we consider a massive ridesharing dataset, for which current state-of-the-art 
streaming algorithms are infeasible to run.
Given a dataset of \num[group-separator={,}]{504247} Uber pick ups in Manhattan, New York in April 
2014 \cite{uberpickup}, our goal is to find a set of the most representative locations. 
This dataset allows us to study the power of subsampling and the effect of $p$ and $k$ (the size of the largest feasible solution) on the performance of our algorithm. 

To do so, the entire area of the given pick ups is covered by $m = 166$ overlapping circular regions of radius $r$ (the centers of these regions provided a \num{1}km-cover of all the area, \ie, for each location in the dataset there was at least one center within a distance of \num{1}km from it), and the algorithm was allowed to choose at most $\ell$ locations out of each one of these regions. One can observe that by using a single matroid for limiting the number of locations chosen within each one of the regions, the above constraint can be expressed as a $p$-matchoid constraint, where $p$ is the maximum number of regions a single location can belong to (notice that $p$ could be much smaller than the total number $m$ of regions). 

In order to find a representative set $S$, we use the following monotone submodular objective function: 
$
f(S) = \log \det(I + \alpha K_{S,S}), 
$
where the matrix $K$ encodes the similarities between data points, $K_{S,S}$ is the principal sub-matrix of $K$ indexed by $S$ and $\alpha > 0$ is a regularization parameter \citep{herbrich2003fast,seeger2004greedy,krause2005near}. The similarity of two location samples $i$ and $j$ is defined by a Gaussian kernel
 $K_{i,j} = \exp{( - d_{i,j}^2 / h^2)}$,  where the distance $d_{i,j}$ (in meters) is calculated from the coordinates and $h$ is set to $5000$. 

In the first location summarization experiment, we set the radius of regions to $r = 1.5$km. In this setting, we observed that a point belongs to at most \num{7}  regions; hence, the constraint is a $7$-matchoid. For $\ell = 5$, it took \num{116} seconds\footnote{In these experiments, we used a machine powered by Intel i5, 3.2 GHz processor and 16 GB of RAM.}  (and \num[group-separator={,}]{693717} oracle calls) for our algorithm to find a summary of size $k = 153$. Additionally, for $\ell = 10$ and $\ell = 20$ it took $294$ seconds (and \num[group-separator={,}]{1306957} oracle calls) and \num{1004} seconds (and  \num[group-separator={,}]{2367389} oracle calls), respectively, for the algorithm to produce summaries of sizes \num{301} and \num{541}, respectively. 

In the second location summarization experiment, we set the radius of regions to $r = 2.5$km to investigate the performance of our algorithm on $p$-matchoids with larger values of $p$.
 In this setting, we observed that a point belongs to at most \num{17} regions, which made the constraint a \num{17}-matchoid. This time, for $\ell = 5$, it took only \num{35} seconds (and \num[group-separator={,}]{296023} oracle calls) for our algorithm to find a summary of size $k = 54$. 
Additionally, for $\ell = 10$ and $\ell = 20$ it took 80 seconds (and \num[group-separator={,}]{526839} oracle calls) and \num{176} seconds (and \num[group-separator={,}]{958549} oracle calls), respectively, for the algorithm to produce summaries of sizes \num{106} and \num{198}, respectively. 
As one can observe, our algorithm scales well to larger datasets. 
Also, for $p$-matchoids with larger $p$ (which results in a smaller sampling probability $q$) the performance gets even better. 

\section{Conclusion}  \label{sec:conclusion}

We presented subsampling as an alternative algorithmic technique for constrained submodular 
maximization.
We argued that subsampling provides a unified framework for achieving tight or nearly-tight approximation guarantees 
for both monotone and non-monotone objectives in several computational settings.
To this end, we proposed two algorithms which use the subsampling technique:
\algrd for submodular maximization under a $p$-extendible system in the offline setting,
and \AlgSampling for submodular maximization under a $p$-matchoid constraint in the streaming 
setting.
Our theoretical analysis shows that subsampling can be used to achieve state-of-the-art approximation 
guarantees for these settings at significantly lower computational cost than previous methods.
Our experimental results demonstrate that for a variety of practical problems, algorithms featuring the 
subsampling technique produce solutions with similar quality to other existing algorithms at a fraction of the computational cost.
This work shows that subsampling is a powerful technique which can be used to scale submodular 
optimization to larger problem instances with more complex constraints.

\section*{Acknowledgments.}
This work was supported in part by
NSF (IIS- 1845032), ONR (N00014-19-1-2406), and AFOSR (FA9550-18-1-0160) awarded to Amin Karbasi 
and
an NSF Graduate Research Fellowship (DGE1122492) awarded to Christopher Harshaw. 

\bibliography{tex/main}

\end{document}